\pgfplotsset{compat=newest}
\tikzset{
	pics/myrec/.style n args={3}{code={
			\draw[draw=none, #3] (0,0) rectangle (#1,#2);
	}},
	pics/myrec/.default={1}{0}{pink},
}
\renewcommand\paragraph{
	\@startsection{paragraph}
	{4}
	{\z@}
	{3.25ex \@plus1ex \@minus.2ex}
	{-1em}
	{\normalfont\normalsize\bfseries\maybe@addperiod}
}
\newcommand{\maybe@addperiod}[1]{
	#1\@addpunct{.}
}
\newtheorem{observation}{Observation}
\newtheorem{claim}{Claim}
\newtheorem*{claim*}{Claim}
\newtheorem{definition}{Definition}
\newtheorem*{definition*}{Definition}
\newtheorem{lemma}{Lemma}
\newtheorem*{lemma*}{Lemma}
\newtheorem{example}{Example}
\newtheorem*{example*}{Example}
\newtheorem{corollary}{Corollary}
\newtheorem*{corollary*}{Corollary}
\newtheorem{theorem}{Theorem}
\newtheorem*{theorem*}{Theorem}
\newtheorem{proposition}{Proposition}
\newtheorem*{proposition*}{Proposition}
\newtheorem{openprob}{Open Problem}
\newtheorem*{openprob*}{Open Problem}
\theoremstyle{definition}  \newtheorem{algdef}{Algorithm}%
\DeclareMathOperator*{\E}{\mathbb{E}}
\newcommand{\reals}{\mathbb{R}}
\newcommand{\Welf}{\mathrm{Welfare}}
\newcommand{\Alg}{\textsc{Alg}}
\newcommand{\Opt}{\textsc{Opt}}
\newcommand{\NP}{\textsf{NP}}
\newcommand{\APX}{\textsf{APX}}
\newcommand{\A}{\mathbb{A}}
\newcommand{\I}{\mathbb{I}}
\newcommand{\Inst}{\mathcal{I}}
\renewcommand{\O}{\mathcal{O}}
\begin{document}

	\title{Matching with Nested and Bundled Pandora Boxes}

	\author{Robin Bowers \\ CU Boulder \and Bo Waggoner  \\ CU Boulder}
	\date{}

	\maketitle

	\begin{abstract}
		We consider max-weighted matching with costs for learning the weights, modeled as a ``Pandora's Box'' on each endpoint of an edge.
		Each vertex has an initially-unknown value for being matched to a neighbor, and an algorithm must pay some cost to observe this value.
		The goal is to maximize the total matched value minus costs.
		Our model is inspired by two-sided settings, such as matching employees to employers.
		Importantly for such settings, we allow for negative values, causing existing approaches to fail.

		We first prove impossibility results for algorithms in two natural classes.
		Any algorithm that ``bundles'' the two Pandora boxes incident to an edge is an $o(1)$-approximation.
		Likewise, any ``vertex-based'' algorithm, which uses properties of the separate Pandora's boxes but does not consider the interaction of their value distributions, is an $o(1)$-approximation.
		Instead, we utilize \emph{Pandora's Nested-Box Problem}, i.e. multiple stages of inspection.
		We give a self-contained, fully constructive optimal solution to the nested-boxes problem, which may have structural observations of interest compared to prior work.
		By interpreting each edge as a nested box, we leverage this solution to obtain a constant-factor approximation algorithm.
		Finally, we show any ``edge-based'' algorithm, which considers the interactions of values along an edge but \emph{not} with the rest of the graph, is also an $o(1)$-approximation.
	\end{abstract}

	\section{Introduction}\label{sec:intro}

We consider a problem of weighted matching under uncertainty about weights, motivated by two-sided matching settings such as job search websites and school admissions.
In many of these matching applications, the values of each side for a potential match are initially unknown until some effort or cost is expended to inspect and discover the value.
Investigating the quality of a school or job may be time consuming, and sorting through applicants may be costly for a company.
The challenge in these settings is balancing the costly acquisition of information against the need to find high-value matches.
We ask whether it is possible for an efficient algorithm to navigate this tradeoff.

Recent work on market design has studied the problem of matching with information acquisition, typically from the perspective of a platform designer in a strategic environment (see Section \ref{subsec:related}).
This paper takes a step back to consider a simpler, purely algorithmic problem.
However, all of our results have practical implications for design of two-sided matching platforms, particularly a number of impossibilities for practically-motivated classes of algorithms.
Both these negative results and our eventual positive results depend on new insights into the \emph{Pandora's Box} model.

\paragraph{Pandora's Box}
We use \citet{weitzman1979optimal}'s Pandora's Box model of costly information acquisition.
In this model, each Pandora box is parameterized by an ``inspection cost'' to open the box and a distribution over the value inside (drawn independently).
In Weitzman's original problem, an algorithm may choose any sequence of inspections, then stop at any time and claim the value inside some opened box.
The goal is to maximize the final value minus all inspection costs paid.

\paragraph{Our model}
In \emph{Pandora's Matching Problem}, we are given a graph with initially uncertain values on edges.
Each edge of the graph has two Pandora boxes, one at each endpoint, representing the values of each of the two parties involved in a potential match.
The weight of an edge is the sum of the values on its endpoints, which may each be inspected separately at a cost.
An algorithm may conduct any sequence of inspections, then stop and output a matching consisting of fully-inspected edges.
The performance, or ``welfare'', of an algorithm is the sum of edge weights in the matching minus the sum of inspection costs paid.
We consider the approximation factor, i.e. worst-case ratio of expected welfare to that of the optimal algorithm (whatever it may be).

Our model captures two-sided preferences in matching markets with costly inspection on both sides.
For example, a company seeking to hire a new worker must conduct time-intensive interviews to determine the fit of a candidate, while the candidate must do the same.
Similarly, a student applying to colleges must invest the time to determine if each school she considers would be a good fit, as applying to every possible school is infeasible.
A university spends time and money processing applications, and cannot give every application a full evaluation.
We are interested in the consequences of this two-sided inspection dynamic for algorithm design.

We will relate our problem to \emph{Pandora's Nested-Box Problem}, in which opening a box uncovers another costly box within, along with some information about the contents.
We give a new treatment of the nested-boxes problem and apply it to address Pandora's Matching Problem.

\paragraph{Negative values}
Prior work of \citet{bowers2023high} implies an approximation algorithm for the \emph{positive-values} setting where all values are positive and exceed inspection costs in expectation.
However, many two-sided matching applications involve negative values.
For example, a candidate generally has a negative value for being assigned a job and needs to be paid a wage in order to do it.
	\footnote{Simple adjustments to reduce to the positive setting do not work.
	E.g., adding the wage of the job to the job's value distorts the social welfare unless we also subtract it from the employers' value for hiring the employee, and in general there will be pairs where the sum is negative.
	Some of our negative results formalize the impossibility of such approaches.}
Because of this, we are particularly motivated by a general-values setting.
\citet{bowers2023high} state obtaining an approximation algorithm in the general-values setting as a main open problem.

\subsection{Results}

We expect the problem to be NP-hard, as even the special case of one Pandora box per edge is suspected to be NP-hard (e.g. \citet{singla2018price}).
However, a $(1/2)$-approximation to that special case is given by \citet{kleinberg2016descending} and \citet{singla2018price}, utilizing a generalization of Weitzman's descending procedure to mimic a greedy matching.
Motivated by this approach, we first consider simple classes of approximation algorithms.

\paragraph{Bundled Box \& Vertex-Based Algorithms}
A first natural approach is to ``bundle'' the two boxes on each edge into one larger Pandora box: if inspecting the edge, always inspect both boxes simultaneously.
Another possible approach is a generalization of the approach taken by \citet{bowers2023high} and \citet{singla2018price}, which we call \emph{vertex-based} algorithms.

Unfortunately, Section \ref{sec:initial-attempts} finds that both bundled box and vertex-based approaches fail in general when applied to the two-sided matching setting.

\begin{theorem*}
	No bundled-box or vertex-based algorithm for Pandora's Matching Problem can guarantee a positive approximation factor.
\end{theorem*}

These impossibilities also have implications for practical platform design.
The bundled setting corresponds to traditional interviewing, where both parties must agree to ``inspect'' together.
This result suggests that welfare may be lost from markets that require participants to coordinate their information-gathering efforts.
Similarly, a vertex-based approach is one that allows for decentralized decisionmaking.
For example, the matching mechanism in \citet{bowers2023high} does not share information about the participants' preferences with each other.
The failure of vertex-based algorithms implies that some information-sharing is necessary to coordinate good matches.

\paragraph{Nested boxes}
Section \ref{sec:multistage} turns to the \emph{Pandora's Nested-Box Problem}, also called Pandora's problem with multiple stages of inspection, to account for interactions between the boxes along an edge.
\citet{kleinberg2016descending} provide a generalization of Weitzman's indices and our descending procedure to the multi-stage inspection setting.
We give a self-contained, constructive treatment with intuition for the algorithm that we believe is of independent value (see Section \ref{subsec:related}).

\paragraph{Fixed-orientation algorithms}
In Section \ref{sec:unordered}, we utilize our nested-boxes machinery to create two approximation algorithms for Pandora's Matching Problem.

\begin{theorem*}[Main result]
	There exists an algorithm which obtains a $1/4$-approximation for Pandora's Matching Problem.
	The guarantee holds even if each edge's pair of values may be arbitrarily correlated with each other.
\end{theorem*}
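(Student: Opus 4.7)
The plan is to construct a randomized algorithm that first samples an orientation $\O$ uniformly at random (each edge oriented independently of all other edges and of the box values) and then runs the nested-boxes descending procedure of Section~\ref{sec:multistage} on the resulting $\O$-oriented instance. The analysis then chains two factors of $\tfrac{1}{2}$ to obtain the claimed $\tfrac{1}{4}$-approximation. By the fixed-orientation result of Section~\ref{sec:unordered}, conditional on any $\O$, the descending procedure has expected welfare at least $\tfrac{1}{2}\,\Opt_\O$, where $\Opt_\O$ denotes the best $\O$-respecting algorithm. Taking expectations over $\O$, it therefore suffices to prove the key inequality
\[
    \E_\O\!\left[\Opt_\O\right] \;\geq\; \tfrac{1}{2}\,\Opt.
\]

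To establish the key inequality I would construct, for each orientation $\O$, a particular $\O$-respecting algorithm $A_\O$ that lower-bounds $\Opt_\O$ and whose welfare averaged over $\O$ is at least $\tfrac{1}{2}\,\Opt$. The natural candidate is a coupled simulation of $\Opt$: $A_\O$ mimics $\Opt$ step by step, but refuses any inspection that would violate $\O$, i.e.\ opening the ``second'' box of an edge before its ``first''. Such a refusal abandons the edge: no further box is opened on it, and it is excluded from the output matching, so the simulation saves both the value of the edge and any remaining inspection cost. Because $\O$ is uniform and independent across edges, each edge matched by $\Opt$ has its orientation aligned with $\O$ with probability exactly $\tfrac{1}{2}$, while across-edge independence of the underlying box values lets the contribution of each edge be analyzed separately.

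The main obstacle I anticipate is that $\Opt$ is adaptive: abandoning inspections on one edge alters the observations available on later steps and can change $\Opt$'s downstream behavior, so a naive per-edge decomposition fails. The fix I would pursue is a decision-tree argument over $\Opt$'s execution: decompose $\Opt$'s welfare into per-edge marginal contributions, then show that, on the (random) half of edges whose orientation matches $\O$, the coupling retains $\Opt$'s value contribution, while on the remaining edges the inspection costs saved exactly offset the lost value contribution in expectation. Negative values make this accounting the delicate step, since one cannot drop inspections without carefully tracking the resulting cost savings; independence of values across edges is what makes the local accounting close. Arbitrary intra-edge correlation is automatically handled because each edge is treated as one nested-box instance, whose inner/outer correlation is already accommodated by the nested-boxes machinery of Section~\ref{sec:multistage}.

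Combining the two factors of $\tfrac{1}{2}$ yields the randomized $\tfrac{1}{4}$-approximation. For the deterministic version, I would derandomize the orientation via the method of conditional expectations over the $|E|$ independent orientation bits, exploiting that each bit enters the analysis only locally through its own edge so that the conditional expected welfare can be evaluated edge by edge.
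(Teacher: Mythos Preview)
Your randomized algorithm and the two-factor $\tfrac12\cdot\tfrac12$ skeleton are exactly what the paper does. The paper, however, organizes the key step differently, and that difference fixes the two places where your proposal wobbles.

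\textbf{The key inequality, pointwise rather than in expectation.} The paper proves, for \emph{every} orientation $\O$ and its reverse $\hat\O$,
\[
\Welf(\Opt)\;\le\;\Opt_\O+\Opt_{\hat\O}\;\le\;2\,\Welf(\O\text{-desc})+2\,\Welf(\hat\O\text{-desc}).
\]
The argument is a direct decomposition of $\Opt$'s welfare: on each realization, every edge $\Opt$ touches has a first-inspected endpoint, so it is either ``$\O$-first'' or ``$\hat\O$-first'', and the $\O$-first contributions are bounded by $\Opt_\O$ (symmetrically for $\hat\O$). Your averaged statement $\E_\O[\Opt_\O]\ge\tfrac12\,\Opt$ follows immediately by averaging this over uniform $\O$, but the pointwise form is strictly more useful (see derandomization below).

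\textbf{The adaptivity issue.} Your ``refuse and abandon'' simulation really does break, and the proposed decision-tree accounting (``costs saved offset value lost'') is not the right repair: on $\hat\O$-first edges nothing needs to balance, those contributions are simply charged to $\Opt_{\hat\O}$. The clean fix, implicit in the paper's one-line inequality, is that the $\O$-oriented witness algorithm simulates $\Opt$ and, whenever $\Opt$ first touches an edge in the wrong direction, draws \emph{fresh} values for that edge (from the edge's joint distribution) to keep the simulation on track while paying and outputting nothing on that edge. Cross-edge independence makes the simulated run identically distributed to a real run of $\Opt$, so the witness's expected welfare equals $\Opt$'s expected $\O$-first contribution. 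This is also precisely why arbitrary intra-edge correlation is harmless.

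\textbf{Derandomization.} Your method-of-conditional-expectations plan does not go through as stated: the welfare of the $\O$-oriented descending procedure is a global function of all orientation bits (the procedure ranks and interleaves all baskets), so the bits do not ``enter only locally'' and the conditional expectations are not edge-separable. The paper instead exploits the pointwise inequality: pick \emph{any} $\O$, compute $\Welf(\O\text{-desc})$ and $\Welf(\hat\O\text{-desc})$, and run whichever is larger; since their sum is at least $\tfrac12\,\Welf(\Opt)$, the maximum is at least $\tfrac14\,\Welf(\Opt)$.
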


Both of these algorithms work by first fixing an order of inspection (an ``orientation'')  for each edge, and then running the $1/2$-approximation descending procedure on the fixed-orientation setting.
Finally, we consider simpler deterministic algorithms, and demonstrate that a very natural class of \emph{edge-based fixed-orientation} algorithms also cannot achieve a constant approximation factor.

\begin{theorem*}
	No edge-based fixed-orientation algorithm for Pandora's Matching Problem can guarantee a positive approximation factor.
\end{theorem*}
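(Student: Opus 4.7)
I would prove the impossibility in two moves: first build a ``distinguisher'' gadget in which a single edge's orientation must match its surrounding structure for non-trivial welfare, then amplify this into an $o(1)$-approximation bound by composing many copies.

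Fix parameters $\theta$ for a designated edge $e=\{a,b\}$ and construct two instance families $\Inst_1^M$ and $\Inst_2^M$, indexed by a scale $M$, which both contain $e$ with parameters $\theta$ but differ in their surrounding structure. The surroundings act as a competition gadget: additional edges at an endpoint of $e$ with negative-value tails and carefully tuned inspection costs, chosen so that under one orientation of $e$ the nested-boxes descending procedure assigns $e$ an extended index above its competitors' and delivers welfare $\Omega(M)$, while under the opposite orientation $e$'s extended index is dominated, $e$ is crowded out of any profitable matching, and the surrounding gadget itself yields only $o(M)$. The families are arranged so that $\Inst_1^M$ needs orientation $a\to b$ and $\Inst_2^M$ needs $b\to a$, with both having optimum $\Omega(M)$. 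Correlated within-edge values on $e$ may be used to sharpen the gap between the two orientations' extended indices.

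Because an edge-based rule orients $e$ as a function of $\theta$ alone, it uses the same orientation distribution in both families; writing $p = \Pr[a\to b \mid \theta]$, its welfare on the worse instance is at most $\min(p,1-p)\cdot \Omega(M) + o(M)$, yielding a $1/2 + o(1)$ barrier. To amplify, I would replicate the distinguisher $k$ times and wire the copies together so that positive welfare requires every copy to be simultaneously correctly oriented -- for example by threading them along a chain whose only profitable matching spans all copies. Since edge-based rules orient identical-parameter edges independently, the adversary can pick the family maximizing the number of ``wrong-needing'' copies, and the probability that all $k$ copies are correct is at most $(\min(p,1-p))^k \le (1/2)^k \to 0$, so the approximation ratio vanishes as $k \to \infty$.

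The main obstacle is the distinguisher construction: the $\Omega(M)$ vs.\ $o(M)$ welfare split must hold against any optimal downstream nested-boxes procedure, which requires exploiting negative values so that the downstream cannot simply skip unprofitable inspections and recover. The amplification also needs care to ensure that a single misoriented copy truly destroys the global matching's welfare rather than merely dropping one edge's contribution.
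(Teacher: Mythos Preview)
Your core idea---exhibit a single edge $e$ with fixed parameters $\theta$ that needs opposite orientations in two different surrounding graphs---is exactly the paper's approach. But you over-engineer the argument in a way that both misses a simplification and introduces a real gap.

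First, the paper's Definition~\ref{def:edge-based} requires the orientation rule to be \emph{deterministic}. Once you have two instances sharing the same edge $e$, one needing $a\to b$ and the other needing $b\to a$, you are already done: any edge-based rule fixes one orientation of $e$ and achieves $o(M)$ on one of the instances while $\Opt$ achieves $\Omega(M)$. There is no ``$1/2$ barrier,'' and the entire replication/chaining amplification is unnecessary. Your probability $p$ is modeling a randomized notion of edge-based that is not the statement being proved; and even in that stronger model, the step ``edge-based rules orient identical-parameter edges independently'' is unjustified---a randomized rule could perfectly well flip one global coin and apply it to every copy of $e$, in which case the $(1/2)^k$ bound fails. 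The chaining construction (``threading them along a chain whose only profitable matching spans all copies'') is also left vague, and it is not clear how to build such a gadget in this model.

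Second, the paper does precisely the work you defer to ``carefully tuned inspection costs.'' It gives an explicit edge parametrized by $\alpha>0$ (Figure~\ref{fig:no-dessert}) and two concrete instances: the bare edge alone, and a star with many copies of the edge plus an outside option worth $\tfrac{1}{\alpha}$. In each instance the wrong orientation achieves ratio exactly $\alpha$, so letting $\alpha\to 0$ kills the approximation directly---no replication or chaining. The upper bound against \emph{every} $\O$-oriented algorithm (not just the descending procedure) comes from Lemma~\ref{lemma:nested-key}: welfare is capped by $\E[\A\,\kappa^{(1)}]\le\sigma^{(1)}$, and the outside option is placed to sit strictly between $\sigma_{ij}^{(1)}$ and $\sigma_{ji}^{(1)}$. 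Your sketch gestures at ``extended indices'' and ``crowding out,'' which is the right intuition, but the concrete parametrized edge is where the proof actually lives; correlation within the edge is not needed.
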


\subsection{Related Work} \label{subsec:related}

\paragraph{Pandora's Box and its variants}
The Pandora's Box Problem was first introduced by \citet{weitzman1979optimal}.
Our terminology and machinery for the problem follow \citet{kleinberg2016descending}, who used the problem in an auction setting, as well as a one-sided matching setting (i.e. matching with a single Pandora box on each edge).

This approach was extended to many combinatorial optimization problems with information acquisition by \citet{singla2018price}.
Our problem is not addressed by his results in particular because greedy max-weight matching, if there is a weight on each endpoint of an edge, is not a \emph{frugal algorithm}: it needs to consider the contributions of pairs of items (both endpoints of an edge), while a frugal algorithm considers one at a time.
In a sense, our impossibility result for vertex-based algorithms in Section \ref{sec:initial-attempts} implies that no variant of the frugal framework can solve our problem.

For what we call \emph{Pandora's Nested-Box Problem}, also referred to as multiple stages of inspection, there are a number of recent works adjacent to ours.
Most important, the Weitzman index approach and optimal descending algorithm was shown to extend to this case by \citet[Appendix G]{kleinberg2016descending}.
The treatment there is quite general (in particular, handling countably infinite stages of inspection), but it is also quite complex and nonconstructive.
In Section \ref{sec:multistage}, we include a concrete and direct extension of the Weitzman index machinery to the nested-boxes setting.
We present the key features in the main body of the paper because, in addition to enabling our final results, they do provide some novel points of utility and interest:
\begin{itemize}
	\item A simple and direct construction with a much shorter, self-contained treatment.
	Given an example (e.g. a Bernoulli value whose bias is selected over the course of the inspections), it is not necessarily clear from \citet{kleinberg2016descending} how exactly to compute the Weitzman indices; here we give a direct formula.
	\item Illustration of several subtle points that arise in the analysis, such as a precise and constructive formulation of the \emph{non-exposed} condition (also called exercise in the money, claim above the cap, etc.).
	\item Simple and concrete machinery -- Weitzman indices (also called strike prices or reservation values) capped values (also called covered call values or deferred values), and a general ``amortization lemma'' -- that is more amenable to adaptation to different environments.
\end{itemize}
\citet{gupta2019markovian} provides a generalization of \citet{singla2018price}'s frugal algorithms to a multistage inspection model for combinatorial optimization.
They consider optimization over boxes represented by Markov chains with costs for advancement, where a chain must be advanced to a terminal state in order for the box to be added to the solution.
\citet{gupta2019markovian} derives a parallel to the Weitzman index machinery of \citet{kleinberg2016descending} in the discrete Markov decision process setting.
Their approach is essentially a generalization of our Section \ref{sec:multistage}, except with a focus on compatibility with frugal algorithms.
As mentioned above, our two-sided matching problem is incompatible with frugal algorithms, so the results of \citet{gupta2019markovian} do not apply to our main problem.

Several other works study variants of the nested-boxes problem, but they are much farther from our setting.
\citet{ke2019optimal} study a model in which a decisionmaker gathers information about each of a set of alternatives in continuous time for continuous cost, and stops at any time to choose an alternative.
(In particular, full inspection is not obligatory.)
\citet{aouad2020pandora} consider a model with two stages of inspection, partial and full.
However, the algorithm does not have to undertake partial inspection of an option prior to a full inspection.
(Full inspection is still obligatory in order to claim.)
In this setting, \citet{aouad2020pandora} show that a descending index-based policy is no longer optimal, studying features of the optimal solution and approximation algorithms.
\citet{boodaghians2020pandoras} study a somewhat-related setting that can be viewed as Pandora's box problem where boxes must be opened in a particular order, particularly when the ordering follows a linear or DAG structure.

There has been a variety of other recent work on variants of the Pandora's Box Problem, including when inspection is non-obligatory to claim the item~\citep{doval2018whether,beyhaghi2019pandora,beyhaghi2023pandora,fu2023pandora} and when the values in the boxes are correlated~\citep{chawla2020pandora}.
All of the above settings and a number of others are discussed in the recent survey of \citet{beyhaghi2023recent}.

\paragraph{Market design}
A number of recent works in market design or mechanism design have utilized Pandora's box as a model for initially-unknown values.
Closest to our work is \citet{bowers2023high}.
The authors consider essentially the same model as we do, but with each endpoint representing a strategic agent.
They study a descending-price mechanism, an extension of \citet{kleinberg2016descending} proposed by \citet{waggoner2019matching}, in which agents maintain bids on each of their neighbors in the graph and are matched and pay their bids from the highest-total edge downward.
With the assumption of nonnegative values, \citet{bowers2023high} prove a constant-factor Price of Anarchy guarantee, implying that a constant-factor approximation is achievable in that setting efficiently (up to computation of an equilibrium; we extend their result into an efficient algorithm in Appendix \ref{subsec:positive-values-appdx} for completeness).
They give partial results for the general-values case, but leave the existence of a constant-factor approximation as a main open question.

Other works in the market design literature generally focus more on modeling of matching marketplaces than we do here.
\citet{immorlica2021designing} studies design of a platform to facilitate two-sided matching with inspection costs.
There is a continuous population of agents and matching occurs in continuous time, with the platform directing agents of a given type to ``meet'' and inspect counterparts of some other type.
The authors give a $1/4$-approximation to the problem of finding an optimal flow process, despite adhering to incentive-compatibility constraints.
\citet{immorlica2020information} and \citet{hakimov2023costly} both study the college application and admissions process with costly information acquisition through a theoretical lens; \citet{immorlica2020information} include a case where each student is modeled as a Pandora box (two-sided matching with one-sided information acquisition).
Other works such as \citet{ashlagi2020clearing} study the cost of learning preferences through a lens of communication complexity.
\citet{chade2017sorting} surveys prior approaches in the economics literature to matching markets with information acquisition, but the models and techniques are quite different from our problem.

	\section{Preliminaries}\label{sec:prelims}

In this section, we introduce notation and define the main problem studied in this paper, Pandora's Matching Problem.
We also introduce tools from prior work for Pandora's box analysis.

\subsection{Model \& Notation}

Recall that in the traditional Pandora's Box Problem, Pandora is presented with $n$ boxes.
Formally, a \emph{Pandora box} is a pair $(D,c)$ consisting of the distribution $D$ of the value $v \in \reals$ inside the box along with the cost $c \in \reals_{\geq 0}$ to open the box and observe the value.
Pandora can pay the costs and open as many boxes as she wishes, but can only claim the value from one box.

To adapt the problem to a two-sided matching setting, we let each edge $\{i, j\}$ in a graph have a different Pandora box associated with each endpoint: vertex $i$ has some initially-unknown value for matching to $j$, and vice versa.
Formally, the input to the \emph{Pandora's Matching Problem} consists of:
\begin{itemize}
	\item A graph $G = (V, E)$. We do not require the graph to be bipartite.
	\item For each edge $\{i,j\} \in E$: a pair of Pandora boxes $(D_{ij},c_{ij})$ and $(D_{ji},c_{ji})$.
\end{itemize}
An algorithm iteratively chooses an unopened box $(i,j)$, i.e. an endpoint of some edge of the graph, and ``opens'' it by paying the cost $c_{ij}$, learning $v_{ij}$.
Each $v_{ij}$ is drawn from its respective distribution $D_{ij}$ independently.
At any point, the algorithm may choose to stop inspecting and output a matching of $G$ consisting of fully-inspected edges.
Note that we distinguish the edge $\{i, j\}$ from the two boxes on its endpoints, $(i, j)$ and $(j, i)$.

Given an algorithm and an input instance $\Inst{} = (G, \{D_{ij}\}, \{c_{ij}\})$, we let $\I_{ij}$ be the indicator variable that box $(i,j)$ is opened (or ``inspected''), and let $\A_{ij}$ be the indicator variable that box $(i,j)$ is included in the matching (or ``claimed'').
Formally, the matching requirement is $\A_{ij} = \A_{ji} ~~ (\forall \{i,j\} \in E)$ along with $\sum_{j \in V} \A_{ij} \leq 1 ~~ (\forall i \in V)$.

We impose the classic ``obligatory inspection'' requirement, that both endpoints of an edge must be inspected before the edge can be included in the matching.
Formally, $\A_{ij} \leq \I_{ij}$ for each box $(i,j)$.
The performance of algorithm \Alg{} on the instance $\Inst{}$ is
\begin{align*}
	\Welf(\Alg{}) := \E\left[\sum_{(i, j)} \A_{ij}v_{ij}- \I_{ij}c_{ij}\right],
\end{align*}
which we refer to as the \emph{welfare} of the algorithm.\footnote{The ``welfare'' terminology (i) references our motivating economic applications and (ii) differentiates from the term ``value'' which refers to the contents of Pandora boxes.}
We define the \emph{welfare contribution} of box $(i,j)$ to be $ \A_{ij}v_{ij}- \I_{ij}c_{ij}$.
We denote the optimal algorithm by \Opt{}.
We say an algorithm \Alg{} achieves an approximation factor $\alpha$ or is an $\alpha$-approximation if, for all instances $\Inst$,
\begin{align*}
  \frac{\Welf(\Alg{})}{\Welf(\Opt{})} &\geq \alpha .
\end{align*}

\subsection{Pandora's Toolbox}
We recall several tools from the existing Pandora's Box literature.
The optimal algorithm for the original problem of \citet{weitzman1979optimal} relies on indices that we will call $\sigma_{ij}$ for each box $(i,j)$.
\begin{definition}[Weitzman Index \& Capped Value]\label{def:weitz-index}
	The \emph{Weitzman index} of a Pandora box $(D_{ij},c_{ij})$ is the unique $\sigma_{ij}$ such that
	\begin{align*}
		\E_{v_{ij} \sim D_{ij}}\left[(v_{ij} - \sigma_{ij})^+\right] &= c_{ij},
	\end{align*}
	and the \emph{capped value} of the box is the random variable $\kappa_{ij} = \min(v_{ij}, \sigma_{ij})$.
\end{definition}
Here $(\cdot)^+ := \max\{\cdot, 0\}$.
Discussions, such as a financial-options interpretation, can be found in other works such as \citet{kleinberg2016descending,beyhaghi2023recent}.
\begin{definition}\label{def:non-exposed}
	An algorithm is \emph{exposed} on box $(i,j)$ if there is a nonzero probability that $\I_{ij} = 1$, $v_{ij} > \sigma_{ij}$, and $\A_{ij} = 0$.
	Otherwise, the algorithm is \emph{non-exposed} on box $(i,j)$.
\end{definition}
In other words, an algorithm is exposed on a box if it opens the box, finds that the value is larger than the index, and ultimately fails to claim that box.
We call an algorithm \emph{non-exposed} if it is non-exposed on all boxes.

\begin{lemma}[ \cite{kleinberg2016descending}] \label{lemma:KWW-key}
	For any algorithm, for any Pandora box $(i,j)$,
	\[ \E\left[ \A_{ij} v_{ij} - \I_{ij} c_{ij} \right] \leq \E \left[ \A_{ij} \kappa_{ij} \right], \]
	with equality if and only if the policy is non-exposed on box $(i,j)$.
\end{lemma}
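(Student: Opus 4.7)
The plan is to prove the inequality by rewriting the welfare contribution of box $(i,j)$ in terms of the capped value plus a correction, and then showing that the correction has nonpositive expectation. The key algebraic identity is
\[ v_{ij} - \kappa_{ij} = v_{ij} - \min(v_{ij}, \sigma_{ij}) = (v_{ij} - \sigma_{ij})^+ , \]
which lets me decompose the welfare contribution as
\[ \A_{ij} v_{ij} - \I_{ij} c_{ij} = \A_{ij} \kappa_{ij} + \A_{ij}(v_{ij} - \sigma_{ij})^+ - \I_{ij} c_{ij}. \]
Taking expectations, the lemma reduces to showing $\E[\A_{ij}(v_{ij} - \sigma_{ij})^+ - \I_{ij} c_{ij}] \le 0$, with equality exactly when the policy is non-exposed on box $(i,j)$.

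Since $\A_{ij} \le \I_{ij}$ and $(v_{ij}-\sigma_{ij})^+ \ge 0$, I can upper-bound
\[ \E\bigl[\A_{ij}(v_{ij}-\sigma_{ij})^+\bigr] \le \E\bigl[\I_{ij}(v_{ij}-\sigma_{ij})^+\bigr]. \]
Then I invoke the crucial measurability/independence point: the decision $\I_{ij}$ of whether to open box $(i,j)$ is made based only on information available before the box is opened. Because values across boxes are independent draws and $v_{ij}$ is only revealed upon opening, $\I_{ij}$ is independent of $v_{ij}$. Combined with the defining equation $\E[(v_{ij}-\sigma_{ij})^+] = c_{ij}$ from Definition 1, this yields
\[ \E\bigl[\I_{ij}(v_{ij}-\sigma_{ij})^+\bigr] = \Pr(\I_{ij}=1)\,c_{ij} = \E[\I_{ij} c_{ij}]. \]
Chaining these gives $\E[\A_{ij}(v_{ij}-\sigma_{ij})^+] - \E[\I_{ij} c_{ij}] \le 0$, proving the inequality.

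For the equality clause, the only slack in the derivation is the step $\E[\A_{ij}(v_{ij}-\sigma_{ij})^+] \le \E[\I_{ij}(v_{ij}-\sigma_{ij})^+]$. This is tight iff $\A_{ij}(v_{ij}-\sigma_{ij})^+ = \I_{ij}(v_{ij}-\sigma_{ij})^+$ almost surely, i.e. whenever the box is opened and $v_{ij} > \sigma_{ij}$ we have $\A_{ij} = 1$. This is precisely the negation of the exposed condition in Definition 2, matching the claim.

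The main obstacle is pinning down the independence argument cleanly. One has to be careful that $\I_{ij}$ need not be independent of $v_{ij}$ in general dependent data, but here the Pandora model guarantees that inspection decisions precede (and thus are measurable with respect to history not including) $v_{ij}$, and values across boxes are drawn independently. A formal version would condition on the history $\mathcal{F}$ just before the potential opening of box $(i,j)$, note $\I_{ij}$ is $\mathcal{F}$-measurable while $v_{ij}$ is independent of $\mathcal{F}$ with distribution $D_{ij}$, and apply the tower property.
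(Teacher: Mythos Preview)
Your proof is correct and is the standard argument. The paper does not give its own proof of this lemma (it is cited from \cite{kleinberg2016descending}), but your approach is exactly the $d=1$ specialization of the paper's proof of the nested generalization, Lemma~\ref{lemma:nested-key}: there, Lemma~\ref{lemma:utility-kappas} uses the independence of $\I_i^{(\ell)}$ from $\kappa_i^{(\ell+1)}$ to replace $c_i^{(\ell)}$ by $\kappa_i^{(\ell+1)}-\kappa_i^{(\ell)}$ (which for $d=1$ is $v_{ij}-\kappa_{ij}=(v_{ij}-\sigma_{ij})^+$), and then the inequality $\A_i\leq \I_i$ together with nonnegativity of this difference yields the bound, with equality characterized by non-exposure.
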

In the classic Pandora's Box Problem, recall that there are $n$ Pandora boxes and Pandora must ultimately select just one.
As discussed in \citet{kleinberg2016descending}, Lemma \ref{lemma:KWW-key} implies optimality of a simple \emph{descending} procedure, which inspires approaches in this work.
In the descending procedure, boxes are opened in order of their index from largest to smallest.
If at any point the largest value observed so far is larger than all remaining indices, the search is stopped and that value is claimed.
This algorithm is non-exposed.
Therefore, summing the equality in Lemma \ref{lemma:KWW-key} over all boxes, its welfare is equal to its total expected capped value.
But it always claims the box with the highest capped value.
Lemma \ref{lemma:KWW-key} implies that this maximizes the possible welfare of any policy that claims at most one box, so the descending procedure is optimal.

	\section{Initial Attempts}\label{sec:initial-attempts}

We first consider two natural classes of algorithms with intuitive interpretations, and demonstrate that neither of them can achieve welfare guarantees in the general-values setting.

\subsection{Bundled-Box Algorithms}\label{subsec:bundled-box}

\citet{singla2018price} provides a general welfare approximation guarantee for the one-sided matching setting, in which each edge has only one Pandora box on it.
Can our two-box setting be reduced to this setting, by simply forcing boxes on an edge to always be opened simultaneously?
In many real-world settings such as interviewing, where a company and employee must both invest simultaneously to reveal their values, it is natural to require the algorithm to ``bundle'' the two boxes incident to an edge.
Formally, a \emph{bundled-box algorithm} is one that, if it opens one box on an edge, always opens the other next.
In particular, it satisfies $\I_{ij} = \I_{ji} ~~ (\forall \{i,j\} \in E)$.

A bundled-box algorithm functionally reduces each edge to a single Pandora box, with value the sum of individual values and cost the sum of costs.
We model the bundled box on edge $\{i,j\}$ as having a combined value $v_{\{i,j\}}' = v_{ij} + v_{ji}$, and we define the new value distribution $v_{\{i,j\}}'\sim D_{\{i,j\}}'$, the convolution of $D_{ij}$ and $D_{ji}$.

If we let $\Opt'$ denote the optimal algorithm for this setting, then \citet{kleinberg2016descending} and \citet{singla2018price} both give $1/2$-approximations to $\Opt'$, extending Weitzman's descending procedure: inspect the edges from largest index downward, matching any edge once it has been inspected and its value is larger than all remaining indices, and removing edges from consideration once they are not legal to add to the matching.

\paragraph{The power of bundled boxes}
The key question is then whether $\Opt'$ can approximate $\Opt$, i.e. whether bundling comes at a high cost.
On one hand, $\Opt$ clearly has significant additional power over simultaneous inspection in choosing inspections.
On the other, edges are ultimately bundled when claimed by a matching.
Here, we show that bundling the edges can perform arbitrarily poorly relative to algorithms that inspect asynchronously.

\begin{proposition} \label{prop:simul-fails}
  No bundled-box algorithm can guarantee better than an $O(\tfrac{1}{n})$-approximation factor for Pandora's Matching Problem on $n$ vertices.
\end{proposition}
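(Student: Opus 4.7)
The plan is to construct, for each $n$, a very small instance on which every bundled-box algorithm earns welfare exactly $0$ while $\Opt{}$ earns strictly positive welfare. The resulting approximation ratio of $0$ is trivially $O(1/n)$, which establishes the claim. The gadget will pair a cheap, noisy ``signal'' box on one endpoint of an edge with an expensive, deterministic ``payoff'' box on the other, tuning costs so that the sum of the two values is unprofitable to acquire as a bundle but the cheap signal reveals exactly when the expensive inspection is worthwhile.

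Concretely, take the graph to be a single edge $\{1,2\}$ together with $n-2$ isolated vertices. Set box $(1,2)$ to have cost $0$ with value $v_{12} = +V$ with probability $p$ and $v_{12} = -V$ otherwise, and set box $(2,1)$ to have cost $c$ with deterministic value $v_{21} = +V$. I choose $V, p, c$ so that $2pV < c < 2V$; concretely $V = 1$, $p = 1/n$, and $c = 1$ work for $n \geq 3$. To lower-bound $\Welf(\Opt{})$, I exhibit the asynchronous policy that inspects $(1,2)$ for free; if the observation is $+V$, inspect $(2,1)$ at cost $c$ and claim the edge for total value $2V$; if the observation is $-V$, halt without claiming. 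This policy has expected welfare $p(2V - c) > 0$.

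For the upper bound, the defining constraint $\I_{12} = \I_{21}$ forces a bundled algorithm on this single-edge instance to open both boxes (paying total cost $c$) or to open neither. Conditioned on opening, the joint edge value $v_{12} + v_{21}$ equals $2V$ with probability $p$ and $0$ with probability $1-p$, so the best conceivable claim rule --- claim iff the sum exceeds $0$ --- yields expected matched value $2pV$ and hence expected welfare $2pV - c < 0$ by choice of parameters. Therefore the welfare-maximizing bundled strategy is to inspect nothing, giving welfare exactly $0$, and the ratio $0/p(2V-c) = 0$ is in $O(1/n)$.

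The main conceptual obstacle is finding distributions for which the \emph{joint} value of an edge is uninformative relative to the bundle cost while the \emph{marginal} value of the cheap box is highly informative about whether the expensive inspection is worthwhile; this is exactly the structural gap that asynchronous inspection can exploit and that bundled inspection cannot. Once the asymmetric ``signal + payoff'' gadget is in hand, the arithmetic is routine, and the only technical subtlety is confirming that no partial or ``skip-inspection'' strategy can rescue the bundled algorithm --- which is ruled out by the obligatory-inspection requirement $\A_{ij} \leq \I_{ij}$ together with the bundled constraint $\I_{12} = \I_{21}$.
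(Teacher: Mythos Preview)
Your argument is correct for the proposition as stated, but it proceeds quite differently from the paper's proof.

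The paper builds a star graph with center $u$ and $n$ leaves $w_1,\dots,w_n$: each leaf box has deterministic value $1$ and cost $1$, while each center box has cost $1/n$ and value $n$ with probability $1/n$ (else $0$). The bundled Weitzman index of every edge is computed to be $1-1/n$, so by Lemma~\ref{lemma:KWW-key} any bundled algorithm earns at most $1$; meanwhile the asynchronous policy that first opens all of $u$'s cheap boxes and then finishes any edge with $v_{u,w_i}=n$ earns $\Omega(n)$, giving a genuine $\Theta(1/n)$ ratio on a graph where all $n+1$ vertices participate.

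By contrast, you exhibit a single edge (plus $n-2$ isolated vertices) on which every bundled algorithm earns exactly $0$ while $\Opt{}$ earns $1/n>0$, so the ratio is $0$ --- nominally stronger than $O(1/n)$. Two trade-offs are worth noting. First, your construction uses a negative value $v_{12}=-V$, so it does not establish the additional remark the paper makes that the impossibility persists in the \emph{positive-values} setting; the star construction does, since all values there are nonnegative and $\E[v_{ij}]\ge c_{ij}$. Second, your $n$ enters only through padding: the hard instance is really two vertices, and the $O(1/n)$ bound is obtained by proving $0\le O(1/n)$ rather than by exhibiting a family whose gap actually scales with the graph size. The paper's construction shows that the gap is driven by the \emph{number of available edges}, which is the intended message about why bundling fails in large instances.
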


 The proof of Proposition \ref{prop:simul-fails} uses the bundled star graph given in Figure \ref{fig:star-graph}.
 The key observation is that a bundled-box algorithm is forced to pay to open boxes with unhelpful information, which can be avoided by an algorithm inspecting endpoints independently.
 This result may have implications for the design of matching platforms such as labor markets, where it suggests enabling at least some asynchronous inspection.

 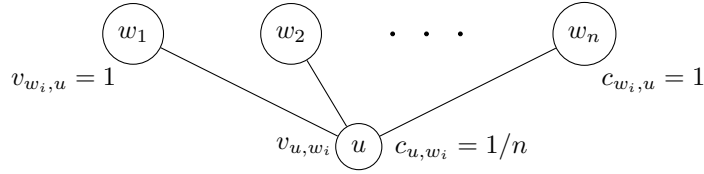
\begin{figure}
 	\begin{center}
 			\begin{tikzpicture}[
		wide/.style={line width=4pt},
		every node/.style={circle,draw, minimum size = 1},scale=1.5]
		\node (w1) at (-2,0) {\small $w_1$};
		\node (w2) at (-.6,0) {\small $w_2$};
		\node (wn) at (2,0) {\small $w_n$};

		\node (u) at (0,-1) {\small $u$};

		\node[circle,fill,minimum size=0.05,inner sep=0.5pt,] (dots2) at (0.6,0) {};
		\node[circle,fill,minimum size=0.05,inner sep=0.5pt,] (dots1) at ($(dots2) - (.3,0)$) {};
		\node[circle,fill,minimum size=0.05,inner sep=0.5pt,] (dots3) at ($(dots2) + (.3,0)$) {};

		\draw (u) -- (w1);
		\draw (u) -- (w2);
		\draw (u) -- (wn);

		\node[below left, draw=none,fill=none](vwu) at ($(w1) - (.2,0)$) {\small $v_{w_i,u} = 1$};
		\node[left, draw=none,fill=none](vuw) at ($(u) - (.1,0)$) {\small $v_{u,w_i}$};
		\node[below right, draw=none,fill=none](cwu) at ($(wn) + (.2,0)$) {\small $c_{w_i,u} = 1$};
		\node[right, draw=none,fill=none](cuw) at ($(u) + (.2,0)$) {\small $c_{u,w_i} = 1/n$};
	\end{tikzpicture}
 	\end{center}
 	\caption{
 		A reproduction of Figure \ref{fig:star-graph}.
 		The ``bundled star graph'' with vertices $w_{1}, \dots, w_{n}$ surrounding $u$.
 		Each leaf $w_i$ has a constant value and cost of $1$.
 		The cost for $u$ to inspect any of its values is $c_{u,w_i} = 1/n$, and the value is either $v_{u,w_{i}} = n$ with probability $1/n$ or 0 otherwise.
 		A bundled-box algorithm performs arbitrarily worse on this example than an algorithm which can inspect endpoints independently.
 	}
 	\label{fig:star-graph}
 \end{figure}

 \begin{proof}
 	We combine the values along each edge of the bundled star graph and note that exactly one edge can be chosen by a matching: we can claim exactly one box.
 	We denote the combined boxes on each edge $\{i,j\}$ as having a value $v_{\{i,j\}}' = v_{ij} + v_{ji}$ with new value distribution $v_{\{i,j\}}'\sim D_{\{i,j\}}'$, the convolution of $D_{ij}$ and $D_{ji}$.

 	Using Lemma \ref{lemma:KWW-key} from \cite{kleinberg2016descending}, the expected value of the optimal algorithm is
 	\begin{align*}
 		\E\left[ \sum_{i}\A_{\{u,w_i\}}v_{\{u,w_i\}}' - \I_{\{u,w_i\}}\left(c_{\{u,w_i\}} + c_{\{u,w_i\}}\right)\right] &= \E\left[(\max_{i} \kappa_{\{u,w_i\}})^+\right],
 	\end{align*}
 	where $\kappa_{\{u,w_i\}} = \min(\sigma_{\{u,w_i\}}, v_{\{u,w_i\}}')$ and $\sigma_{\{u,w_i\}}$ is the index for the combined boxes.
 	We can compute theindexfor every edge of the example as
 	\begin{align*}
 		\frac{1}{n} + 1 &= \E\left[(v_{u, w_i} + 1 - \sigma_i)^+\right] \\
 		&= \frac{1}{n}( n + 1 - \sigma_i) + \left(1 - \frac{1}{n}\right) ( 1 - \sigma_i)  \\
 		&= 2 - \sigma_i  \\
 		&\implies ~~ \sigma_i = 1 - \tfrac{1}{n} .
 	\end{align*}
 	By Lemma \ref{lemma:KWW-key} then, the optimal bundled-box algorithm's welfare is at most $\E[(\max_i \min(\sigma_{\{u,w_i\}}, v_{\{u,w_i\}}'))^+] \leq 1$.

 	However, with independent inspections allowed, a simple algorithm does much better: inspect every edge from $u$'s side.
 	If for any $i$ the realized value is  $v_{u, w_i} = 1/n$, then inspect $v_{w_i, u}$ and match that edge.
 	This algorithm has expected welfare $(1-(1-1/n)^n)(n + 1 - 1) - 1 \geq n(1-e^{-1}) - 1 = \Omega(n)$.
 \end{proof}

\subsection{Vertex-Based Algorithms}\label{subsec:vertex-based}

We next consider algorithms which treat the boxes independently of one another, making decisions to open or permanently accept boxes independently of any other boxes.
A natural approach is to directly extend \citet{weitzman1979optimal}'s optimal index-based approach to this setting.
For example, one could inspect the boxes from largest Weitzman index downward, matching an edge once both of its endpoints are inspected and the sum of the values is large enough.
Approaches like this match the \emph{frugal algorithm} framework of \citet{singla2018price}.
We find that, when values must be positive, the mechanism of \citet{bowers2023high} yields a $1/4$-approximation algorithm, but that no vertex-based algorithm can achieve a constant approximation for general values.

\begin{definition}\label{def:vertex-based}
	An algorithm is \emph{vertex-based} if, for each ordered pair $(i, j)$, the algorithm uses only Weitzman's index $\sigma_{ij}$, the inspection cost of a vertex $c_{ij}$, and (when revealed) the value $v_{ij}$, but does not have access to the distribution $D_{ij}$.
\end{definition}

To enable analysis, Definition \ref{def:vertex-based} formalizes vertex-based algorithms by limiting the information available to the algorithm.
A vertex-based algorithm knows everything about the setting except for the distribution of values $D_{ij}$ at each endpoint.
We replace the knowledge of $D_{ij}$ with direct access to the Weitzman index $\sigma_{ij}$ instead.

\paragraph{Positive Values}
In the \emph{positive-values} setting of Pandora's Matching Problem, value distributions $D_{ij}$ are supported on the positive reals and $\E[v_{ij}] \geq c_{ij}$ for all $(i,j)$.
This implies $\sigma_{ij}\geq 0$ and $\kappa_{ij}\geq 0$ as well.

This restricted setting is already nontrivial.
For example, our impossibility result of Proposition \ref{prop:simul-fails} holds even in the positive-valued setting, showing that no bundled-box algorithm can obtain a better-than-$O(1/n)$ approximation ratio.
However, there is a vertex-based algorithm which achieves a $1/4$-approximation to the optimal expected welfare in the positive-values setting.
This is a corollary of the results of \citet{bowers2023high}, which gives a descending-price style mechanism that achieves a constant Price of Anarchy by focusing on the higher-capped-value endpoint of each edge.
In Appendix \ref{sec:separate-models-appdx}, for completeness, we extract a simple approximation algorithm.

\begin{proposition}\label{prop:pos-approx}
	There exists a vertex-based algorithm which achieves a $1/4$-approximation in the positive-values setting of Pandora's Matching Problem.
\end{proposition}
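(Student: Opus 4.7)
The plan is to reduce Pandora's Matching Problem to the single-box Pandora matching problem of \citet{kleinberg2016descending, singla2018price}, for which the descending-index procedure achieves welfare at least $\tfrac{1}{2}\E[\max_M \sum_{e \in M} \kappa_e]$. The algorithm $\Alg$ I propose proceeds in three steps: (i) for each edge $\{i,j\}$, flip an independent fair coin to designate one endpoint as the representative $\mathrm{rep}(\{i,j\}) \in \{(i,j),(j,i)\}$; (ii) run the descending procedure on the representative boxes as a single-box Pandora matching instance, producing a candidate matching $M$; (iii) for each $\{i,j\} \in M$, pay to inspect the non-representative endpoint and output the edge in the final matching. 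The random orientation uses no information about distributions, and the descending procedure depends only on Weitzman indices, inspection costs, and revealed values, so $\Alg$ is vertex-based in the sense of Definition \ref{def:vertex-based}.

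The analysis has three ingredients. First, in the positive-values setting $\E[v_{ji}] \geq c_{ji}$, and since the non-representative values are independent of every inspection made by the descending procedure (which only opens representatives), step (iii) contributes nonnegative expected welfare, giving $\Welf(\Alg) \geq \Welf(\mathrm{Desc}_{\mathrm{rep}})$. Second, the single-box matching guarantee of \citet{kleinberg2016descending, singla2018price} yields $\Welf(\mathrm{Desc}_{\mathrm{rep}}) \geq \tfrac{1}{2}\E[\max_M \sum_{e\in M} \kappa_{\mathrm{rep}(e)}]$. Third, Lemma \ref{lemma:KWW-key} together with the matching constraint $\A_{ij}=\A_{ji}$ gives $\Welf(\Opt) \leq \E[\max_M \sum_{\{i,j\}\in M}(\kappa_{ij}+\kappa_{ji})]$. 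To bridge the single-box and two-box upper bounds, let $M^{**}$ denote the realization-dependent offline max-weight matching under edge weights $\kappa_{ij}+\kappa_{ji}$, and observe that
\[
  \E_{\mathrm{rep}}\!\left[\max_M \sum_{e \in M} \kappa_{\mathrm{rep}(e)}\right]
  \;\geq\; \E_{\mathrm{rep}}\!\left[\sum_{e \in M^{**}} \kappa_{\mathrm{rep}(e)}\right]
  \;=\; \tfrac{1}{2}\sum_{e \in M^{**}}(\kappa_{ij}+\kappa_{ji}).
\]
Chaining these three inequalities after taking expectation over value realizations yields $\Welf(\Alg) \geq \tfrac{1}{4}\Welf(\Opt)$.

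The main obstacle is that a vertex-based algorithm has no distributional information to help it choose a ``good'' representative for each edge. Natural deterministic rules such as ``orient by the higher $\sigma_{ij}$'' can fail arbitrarily, since the higher-index endpoint may realize a much smaller capped value than its partner. Random orientation cleanly sidesteps this and pays only a factor of $2$ in the bridging step. Two smaller points require care: verifying that the descending procedure on representatives truly never opens non-representative boxes before committing to $M$ (so that $v_{ji}$ is independent of $\A_{ij}$ for step (i)), and invoking the single-box guarantee in the form of a $\tfrac{1}{2}$-approximation to $\E[\max_M \sum \kappa]$ rather than merely to the single-box $\Opt$.
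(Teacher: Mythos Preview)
Your proof is correct but takes a genuinely different route from the paper's. The paper gives a \emph{deterministic} vertex-based algorithm: it runs a single descending procedure over \emph{all} boxes (not just one representative per edge), matching an edge $\{k,\ell\}$ as soon as the top opened value $v_{k\ell}$ exceeds all remaining indices, and opening the partner box $(\ell,k)$ at that moment. The paper then shows this procedure is non-exposed and produces exactly the greedy matching on edge weights $\max(\kappa_{ij},\kappa_{ji})$; the factor $1/4$ comes from $\max(\kappa_{ij},\kappa_{ji}) \geq \tfrac{1}{2}(\kappa_{ij}+\kappa_{ji})$ together with the usual greedy-vs-optimal $1/2$. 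Your approach instead randomizes the orientation and reduces directly to the single-box matching result, so your factor $1/4$ arises from $\E_{\mathrm{rep}}[\kappa_{\mathrm{rep}(e)}]=\tfrac{1}{2}(\kappa_{ij}+\kappa_{ji})$ times the same greedy $1/2$.

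Each approach has its advantages. The paper's algorithm is deterministic and, pointwise in the value realizations, obtains the larger of the two capped values on every matched edge rather than a random one; this is strictly better in realization (though the guarantee proved is the same $1/4$). Your argument is shorter and more modular: it avoids the lemma that the all-boxes descending procedure simulates $\max$-$\kappa$-greedy, and it cleanly isolates where positivity is used (only in step~(iii), via $\E[v_{ji}]\geq c_{ji}$ and independence of the non-representative from the descending run). Your construction also foreshadows the paper's later randomized orientation algorithm (Algorithm~\ref{alg:randomized}), which uses the same random-orientation-then-descend template in the general-values setting with nested-box machinery in place of the positivity shortcut.
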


\paragraph{General Values and Failure of Vertex-Based Algorithms}
However, this result hinges on the nonnegativity of values: it will match an edge when one endpoint has a positive value, even if the other endpoint's value is arbitrarily negative.
We find that this problem is not fixable for any vertex-based algorithm when values may be negative.

\begin{proposition} \label{prop:vertex-based-fails}
	No vertex-based algorithm can guarantee any positive approximation factor for Pandora's Matching Problem.
\end{proposition}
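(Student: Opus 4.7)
The plan is to construct two single-edge instances of Pandora's Matching Problem that are indistinguishable to any vertex-based algorithm yet force it into one of two failure modes. First, I would exhibit two value distributions $D^g$ and $D^b$ that share the same Weitzman index $\sigma$ and inspection cost $c$. Let $D^g$ place probability $p$ on a large value $H$ and probability $1-p$ on a small positive value $L_g \in (0, \sigma]$, and let $D^b$ place probability $p$ on the same $H$ and probability $1-p$ on a very negative value $L_b \ll -H$. In each case only the atom at $H$ contributes to $(v - \sigma)^+$, so the defining identity $\E[(v-\sigma)^+] = c$ reduces to $p(H - \sigma) = c$ for both distributions, making their Weitzman indices and costs agree. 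A concrete choice is $H = 100$, $p = 0.01$, $L_g = 1$, $L_b = -1000$, and $c = 1/2$, which gives $\sigma = 50$ for both.

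Second, on the graph with the single edge $\{i,j\}$, I would place $D^g$ on both boxes $(i,j)$ and $(j,i)$ to form instance $\Inst_G$, and $D^b$ on both to form $\Inst_B$. A vertex-based algorithm sees the identical parameter tuple $(\sigma_{ij}, c_{ij}, \sigma_{ji}, c_{ji}) = (50, 1/2, 50, 1/2)$ in both instances. On $\Inst_G$ the sum $v_{ij} + v_{ji}$ is always positive, so opening both boxes and always matching yields welfare $2 \E_{D^g}[v] - 2c = 2.98$, hence $\Welf(\Opt, \Inst_G) \geq 2.98$. On $\Inst_B$ the sum is positive only on the event $v_{ij} = v_{ji} = H$, which has probability $p^2 = 10^{-4}$, giving $\E[(v_{ij}+v_{ji})^+] = 2p^2 H = 0.02 < c$; a case-by-case computation over every opening pattern (open nothing, open one box, open both, or open the second only after a specific observation of $v_1$) shows that each strategy that opens at least one box yields strictly negative expected welfare, the most favorable being $-0.485$ for the adaptive rule ``open one box, and open the other only on observing the atom at $H$''. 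Hence $\Welf(\Opt, \Inst_B) = 0$, achieved by inspecting nothing.

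Finally, I would conclude by case analysis on the probability $q$ that a vertex-based $\Alg$ initiates an inspection before observing any value. Since all visible parameters coincide in $\Inst_G$ and $\Inst_B$, $q$ is the same in both. If $q > 0$, then by the preceding bound $\Welf(\Alg, \Inst_B) \leq -0.485\, q < 0 = \Welf(\Opt, \Inst_B)$, so no positive $\alpha$ satisfies the approximation inequality on $\Inst_B$. If $q = 0$, then $\Alg$ also inspects nothing on $\Inst_G$ and attains welfare $0$, while $\Welf(\Opt, \Inst_G) \geq 2.98$ gives ratio $0$. The main subtlety to handle carefully is that after observing the first value the algorithm's continuation can implicitly distinguish $\Inst_G$ from $\Inst_B$ because $L_g \neq L_b$, but the initial inspection cost is already sunk and the best post-observation expected gain on $\Inst_B$ is only $p(2pH - c) = 0.015$, from the shared-atom branch $v_1 = H$, far less than the sunk cost $c = 0.5$; this makes any follow-up adaptivity unable to rescue the algorithm from negative welfare on $\Inst_B$.
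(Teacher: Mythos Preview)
Your proof is correct but follows a different route from the paper's. The paper builds a \emph{single} instance: one edge whose two endpoints share the same Weitzman index and cost but have different value distributions (one deterministic, one two-point with a negative atom). Because a vertex-based algorithm cannot tell the endpoints apart, the paper randomizes the labels of $i$ and $j$ and shows that the expected welfare of any inspecting algorithm is negative ($-3/16$), while the true optimum (inspect the informative side first) is $1/4$. So the paper's argument hinges on indistinguishability \emph{within} an instance and on the order of inspection mattering.

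Your argument instead uses indistinguishability \emph{across} two instances $\Inst_G$ and $\Inst_B$: all four endpoints carry the same $(\sigma,c)$ pair in both, but $\Inst_G$ rewards inspection while on $\Inst_B$ even the best inspecting strategy has strictly negative welfare ($-0.485$, as you compute). This sets up a clean inspect/don't-inspect dilemma and avoids the label-randomization device entirely; it is arguably more elementary. One small technical caveat: on $\Inst_B$ you have $\Welf(\Opt)=0$, so the ratio in the paper's stated definition is literally undefined. You are implicitly using the multiplicative form $\Welf(\Alg)\ge\alpha\cdot\Welf(\Opt)$, under which $\Welf(\Alg)<0=\alpha\cdot 0$ indeed rules out every $\alpha>0$. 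That is the standard reading and is harmless, but it is worth making explicit. The paper's single-instance construction sidesteps this wrinkle because its optimum is strictly positive, at the cost of a slightly more delicate symmetry argument.
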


The proof of Proposition \ref{prop:vertex-based-fails} is in Appendix \ref{subsec:indistinguishable-appdx}, and relies on a single-edge counterexample.
In this example, a vertex-based algorithm cannot distinguish the endpoints to pick an inspection ordering but an optimal policy must inspect them in a specific order.

This negative result has implications for platform or mechanism design: an algorithm must coordinate a nontrivial amount of information-sharing in order to guarantee high welfare in general.
We must turn to algorithms which consider edges as a whole, rather than the individual indices of the endpoints.

\paragraph{The correct way to combine endpoints' information}
If we must combine endpoints' information to make good decisions, how should we do so?
An edge in the Pandora's Matching Problem involves \emph{multiple stages} of information gathering: first we inspect one endpoint and learn something about the edge's total value, and later we inspect the other.
Therefore, we must put the matching problem on hold and investigate \emph{Pandora's Nested-Box Problem}.
Once we have a solution to that problem, we will come back and apply it in the matching setting.

	\section{Pandora's Nested-Box Problem}\label{sec:multistage}

To solve the general problem of matching with vertex-based Pandora boxes, we will need a new approach.
We turn to the model of \emph{Pandora's Nested-Box Problem}.
In this model, when Pandora first pays the cost to open a box, there is a second box inside, with its own associated cost; and so on, with the final reward inside the last inner box.

While we are coining the term ``Nested Boxes'' to refer to this problem, it has previously been studied and an optimal solution is known~\citep{kleinberg2016descending}.
In fact, just as with the standard Pandora's Box Problem, this follows directly from the Gittins Index Theorem as a special case~\citep{gittins1979bandit}.
However, just as with the standard Pandora's Box Problem, a constructive solution can give insights about the structure of the problem.

\subsection{Model}

For clarity, we will introduce terminology that distinguishes between \emph{baskets} and \emph{boxes} (illustrated in Figure \ref{fig:nested-baskets}).
Using this terminology, in the traditional Pandora's Box Problem, Pandora is presented with $n$ baskets, and each basket initially contains a single box.
When Pandora chooses to inspect the box in basket $i$, she pays the cost $c_i$ to open its box, finds a reward $v_i$, throws away the box packaging, and places $v_i$ in the basket.
In the end, Pandora can pick one basket with a reward in it to take home.
We modify this problem into the \emph{Pandora's Nested-Box Problem}.

\paragraph{Pandora's Nested-Boxes}
A \emph{Pandora basket} $i$, defined by a pair $(D_i,c_i)$, contains a nested set of $d \geq 1$ boxes.
For $\ell=1,\dots,d-1$, opening the $\ell$th box in basket $i$ provides the algorithm with a signal $s_{i}^{(\ell)}$ regarding the value $v_i$ of the basket.
Opening the $d$th box reveals $v_i$.
The signals and final value are drawn jointly from a known distribution, i.e. $(s_i^{(1)}, \dots, s_i^{(d-1)}, v_i) \sim D_i$.
The sequence of costs $c_i = (c_i^{(1)}, \dots, c_i^{(d)})$ to perform each inspection is also known.

We refer to the most-recently opened box $\ell$ in a basket as the current \emph{stage} of a basket: When we have opened no boxes, we are in stage zero.
After opening all $d$ boxes, we are on the $d$th stage, and after claiming the item, the $d+1$st.
We modify the indicator variables $\A$ and $\I$ to accommodate nested boxes.
$\A_i$ continues to denote the claiming of basket $i$, but now the indicator variable for opening the $\ell$th box in basket $i$ is $\I_i^{(\ell)}$.
We will let $\I_i^{(d+1)} = \A_i$ for notational convenience in inductive statements.
We refer to opening a nested box or claiming the final item as \emph{advancing} the stage of the basket.

All previous boxes $k=1,\dots,\ell-1$ in the basket must be opened before opening nested box $\ell$, and all $d$ boxes in a basket must have been opened to claim the contents of the basket.
Formally, we require for any algorithm that $\I_i^{(1)} \geq \cdots \geq \I_i^{(d)} \geq \I_i^{(d+1)} = \A_i$.

\begin{figure}
	\begin{center}
	\includegraphics[width=3cm]{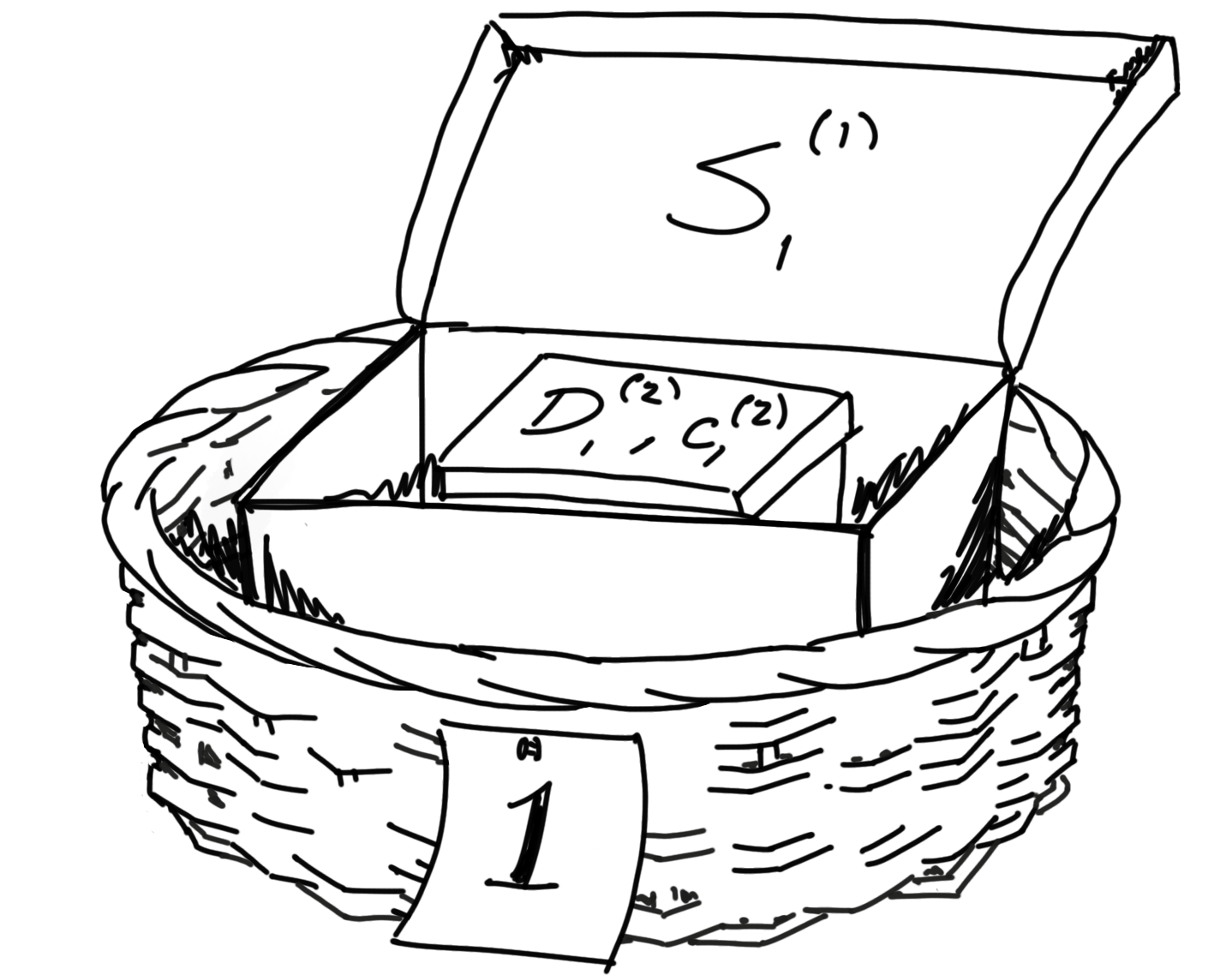}
	\includegraphics[width=3cm]{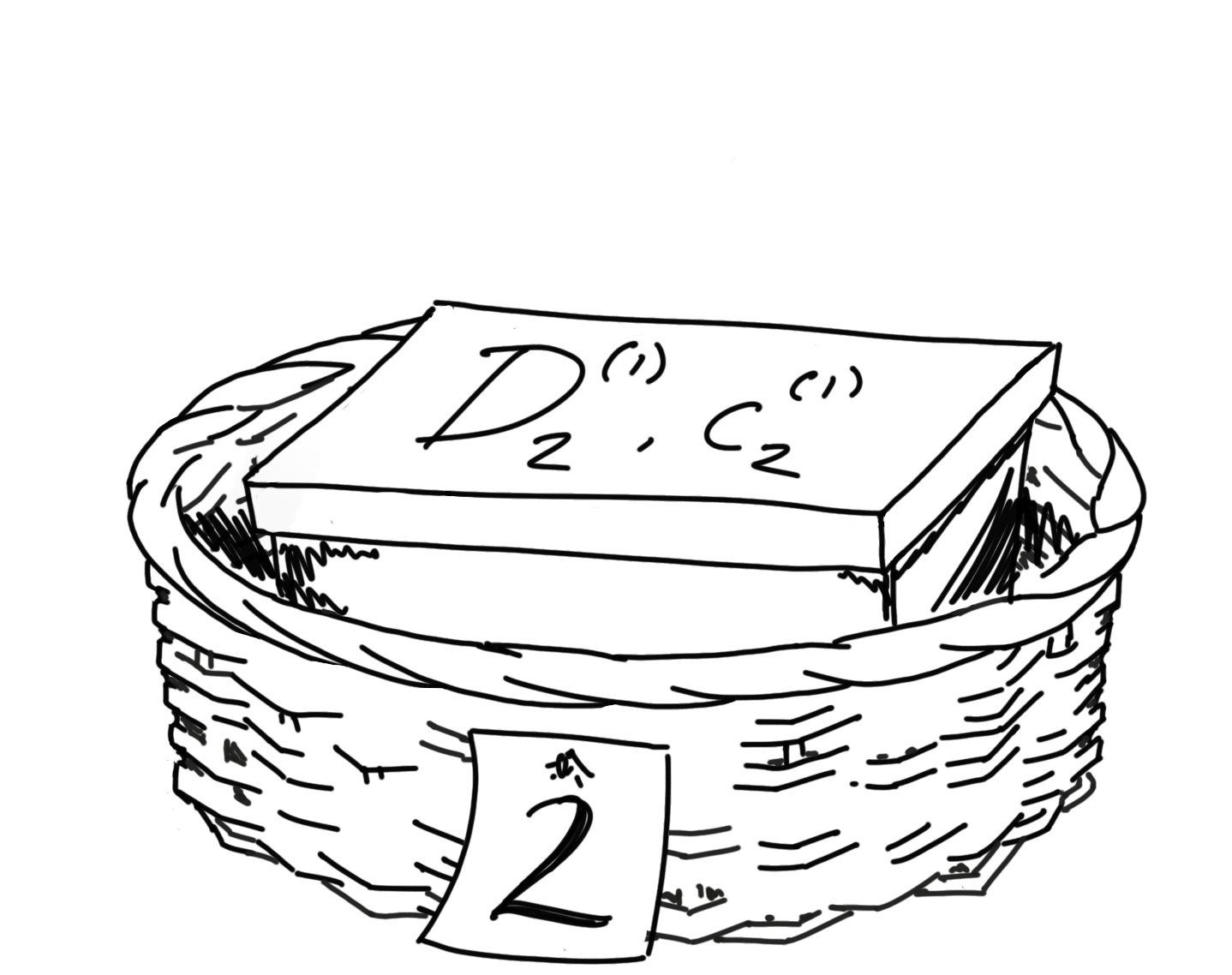}
	\includegraphics[width=3cm]{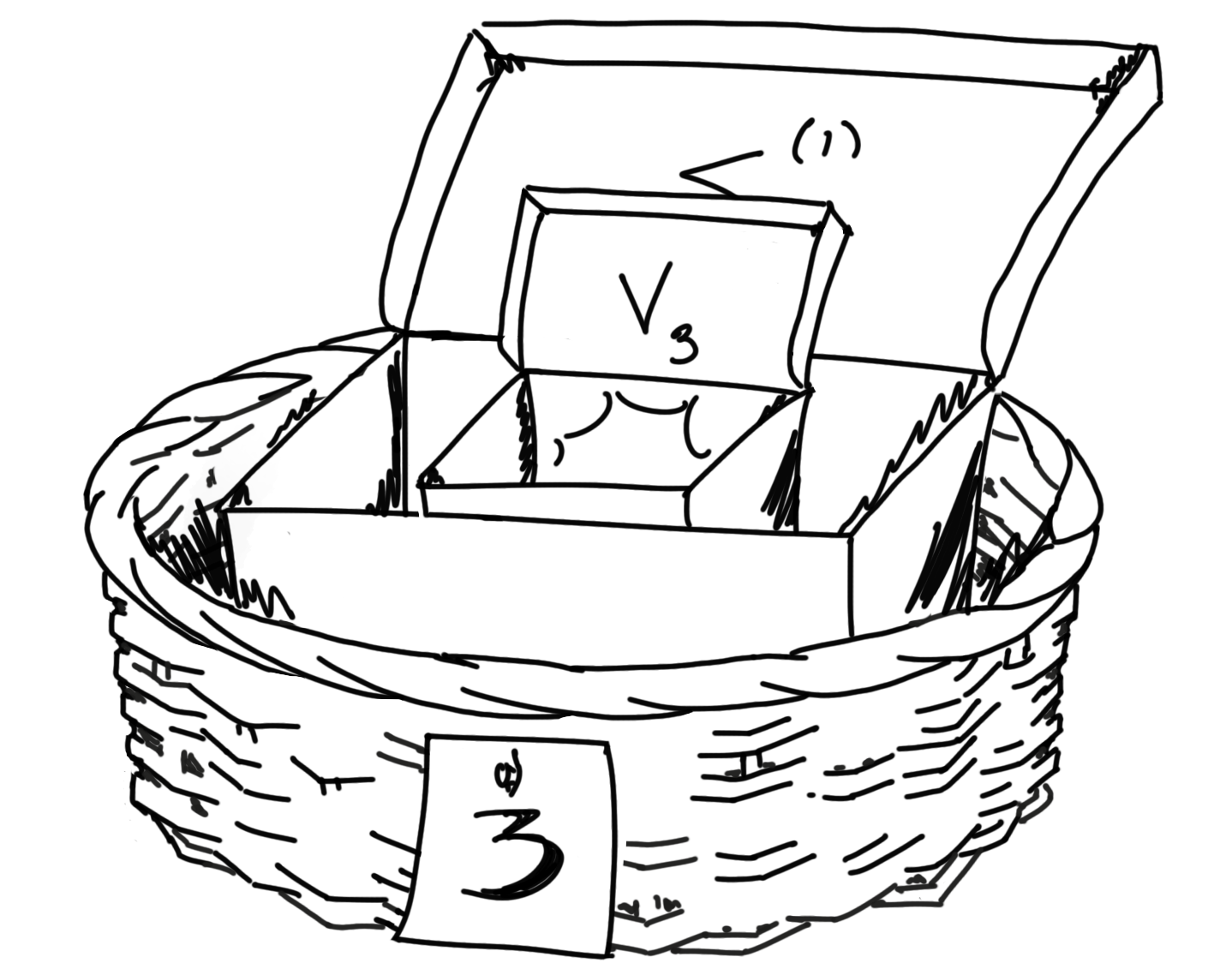}
	\caption{$n = 3$ Pandora baskets with $d = 2$ nested boxes, advanced to various stages with signals and/or values revealed.
	All distributions and costs are known ahead of time.}
	\label{fig:nested-baskets}
	\end{center}
\end{figure}

\paragraph{Optimization problems}
In the spirit of \citet{singla2018price}, we allow for general optimization problems.
In such a problem we are given a set of Pandora baskets, each specified by $(D_i,c_i)$, along with a constraint on which subsets may be claimed.
The algorithm performs any sequence of steps in which it selects a basket and advances it, and eventually stops and claims a subset of fully-advanced baskets that satisfies the constraint.
In particular, in the Pandora's Nested-Box Problem, the algorithm may claim at most one basket.
The \emph{welfare contribution} of basket $i$ is $\A_iv_i - \sum_{\ell = 1}^d \I_i^{(\ell)}c_i^{(\ell)}$.
The expected welfare of an algorithm is the sum of the expected welfare contributions of the baskets.

\paragraph{Remarks}
The model and results in this section can be generalized so that the ``depth'' $d$ of each basket may be a different integer, and so that the costs of inner boxes are random variables, with each $c_i^{(\ell)}$ a function of the signals $s_i^{(1)},\dots,s_i^{(\ell-1)}$.
Essentially no change to the below analysis is required.
The case of possibly-infinite depth is treated in \citet{kleinberg2016descending}.

\subsection{Pandora's Toolbox for Nested Boxes}\label{subsec:pandora-toolbox}
Classic Pandora's Box analyses depend on the proper definitions of the index and capped value, as well as on the key lemma (Lemma \ref{lemma:KWW-key}).
We need to generalize these definitions correctly and prove the analogous lemma.
\citet{kleinberg2016descending} give more general but nonconstructive definitions, and fleshing out these details reveals a few interesting subtleties and useful intuitions.

Conceptually, when defining an index $\sigma_i^{(\ell)}$ for the nested box $\ell$, we would like to capture the usefulness of its contents, box $\ell+1$, by some ``proxy value'' $\hat{v}_i^{(\ell+1)}$, such that the index could be assigned in the classical way as the solution to $\E[(\hat{v}_i^{(\ell+1)} - \sigma_i^{(\ell)})^+] = c_i^{(\ell)}$.
In fact, this turns out to be possible, where the ``proxy value'' is none other than the capped value of box $\ell+1$.
This allows us to define $\kappa_i^{(\ell)}$ and $\sigma_i^{(\ell)}$ together via backward induction.

\begin{definition}[Nested Weitzman Index \& Capped Value] \label{def:nested-index}
	The \emph{nested Weitzman index} for box $i$ at the $\ell$th stage of inspection is the random variable $\sigma_i^{(\ell)}$, a function of $s_i^{(1)},\dots,s_i^{(\ell-1)}$, that uniquely solves
	\begin{equation*}
		\E\left[\left(\kappa_i^{(\ell+1)} - \sigma_i^{(\ell)}\right)^+~\vline~s_i^{(1)},\dots,s_i^{(\ell-1)}\right] = c_i^{(\ell)},
	\end{equation*}
	where $\kappa_i^{(\ell)}$ is the \emph{capped value} $\kappa_i^{(\ell)}=\min\{ \sigma_i^{(\ell)},\kappa_i^{(\ell+1)} \}$ for $1\leq\ell\leq d$, and we define $\kappa_i^{(d+1)} = v_i$.
	We also define $\sigma_i^{(d+1)} = v_i$.
\end{definition}

\subsubsection{Nested non-exposure}
We also adapt the notion of non-exposure to fit the nested box setting.
Intuitively, potential for exposure occurs when the algorithm opens a box $\ell$ in basket $i$ and receives a ``good news'' signal, resulting in a high Weitzman index for the next box.
To avoid exposure, the algorithm should continue by opening the next box in this basket.
Given our claim that $\kappa_i^{(\ell)}$ represents the value of the $\ell$th box, we might like to define exposure as a case where $\kappa_i^{(\ell+1)} > \sigma_i^{(\ell)}$, and the algorithm has advanced basket $i$ to stage $\ell$, yet the algorithm does not advance it to stage $\ell+1$.
Consider a case, however, where $\sigma_i^{(1)} < \kappa_i^{(3)} < \sigma_i^{(2)}$ and we have advanced to stage $2$.
In this case, not advancing to stage $3$ would cause an exposure \emph{relative to the decision to advance to stage $2$}, because $\kappa_i^{(3)} > \sigma_i^{(1)}$, even if there is no exposure relative to $\sigma_i^{(2)}$.
This intuition motivates our definition of non-exposure.

Let $\gamma_i^{(\ell)} = \min_{1\leq\ell'\leq \ell}\sigma_i^{(\ell')}$, the minimum of the first $\ell$ Weitzman indices for basket $i$.
We note that $\gamma_i^{(d+1)} = \kappa_i^{(1)} = \min_{\ell} \sigma_i^{(\ell)}$.
See Figures \ref{fig:gamma-kappa} and  \ref{fig:graph} for a visual depiction.

\begin{figure}
  \begin{center}
    \[ \rlap{$\overbrace{\phantom{\sigma_i^{(1)} \quad\quad \sigma_i^{(2)} \quad \cdots \quad \sigma_i^{(\ell)}}}^{\text{minimum } = ~ \gamma_i^{(\ell)}}$}
       \sigma_i^{(1)} \quad\quad \sigma_i^{(2)} \quad \cdots \quad \underbrace{\sigma_i^{(\ell)} \quad \cdots \quad \sigma_i^{(d)} \quad\quad \sigma_i^{(d+1)}\hspace{-0.8ex}=v}_{\text{minimum } = ~ \kappa_i^{(\ell)}} \]
  \end{center}
  \caption{The definitions of $\kappa_i^{(\ell)}$ and $\gamma_i^{(\ell)}$ in terms of the nested indices $\sigma_i^{(\ell)}$ of a Pandora basket $i$.
           Note that $\gamma_i^{(1)}, \gamma_i^{(2)}, \dots$ is a weakly decreasing sequence, while $\kappa_i^{(1)}, \kappa_i^{(2)}, \dots$ is weakly increasing (see Figure \ref{fig:graph}).
       }
  \label{fig:gamma-kappa}
\end{figure}

\begin{definition}\label{def:nested-non-exposed}
	An algorithm is \emph{exposed} on basket $i$ if there is nonzero probability that there exists $\ell \in \{1,\dots,d\}$ such that $\I_i^{(\ell)} > \I_i^{(\ell+1)}$ but $\gamma_i^{(\ell)} < \kappa_i^{(\ell+1)}$.
	Otherwise, it is \emph{non-exposed} on basket $i$.
	An algorithm is non-exposed if it is non-exposed on all baskets.
\end{definition}

Definition \ref{def:nested-non-exposed} raises an apparent obstacle.
In the classic Pandora's Box problem, it is obvious how to keep an algorithm non-exposed: if the value is higher than the index, claim it.
But here, at an arbitrary stage $\ell$, $\kappa_i^{(\ell+1)}$ is not visible to the algorithm, as its realization depends on what would happen if the basket continued to advance.
Therefore, it is not obvious how to guarantee non-exposure.
The resolution comes from observing that $\kappa_i^{(\ell+1)} \leq \sigma_i^{(\ell+1)}$.
Therefore, we can prevent exposure by always continuing to advance if the next box's Weitzman index is larger than $\gamma_i^{(\ell)}$, the minimum of the indices so far.
It is necessary to compare to $\gamma_i^{(\ell)}$ rather than $\sigma_i^{(\ell)}$ by the previous discussion.

\subsubsection{The key lemma}
In the classic setting, the key lemma (Lemma \ref{lemma:KWW-key}) upper-bounds the expected welfare contribution of a Pandora box, with equality iff non-exposure.
It can also be referred to as an amortization lemma, as it relates the average welfare to the average capped value.
In our nested setting, the analogue is the following.

\begin{lemma}\label{lemma:nested-key}
	For any algorithm, the expected welfare contribution of a Pandora basket $i$ satisfies
	\begin{align*}
		\E\left[\A_iv_i - \sum_{\ell = 1}^d \I_i^{(\ell)}c_i^{(\ell)}\right] \leq \E\left[\A_i\kappa_i^{(1)}\right],
	\end{align*}
	with equality if and only if the algorithm is non-exposed on basket $i$.
\end{lemma}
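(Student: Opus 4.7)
The plan is a telescoping argument built on the defining equation for each nested index. Fix a basket $i$ and suppress the $i$-subscript. From $\kappa^{(\ell)} = \min\{\sigma^{(\ell)}, \kappa^{(\ell+1)}\}$ one checks pointwise that $\kappa^{(\ell+1)} - \kappa^{(\ell)} = (\kappa^{(\ell+1)} - \sigma^{(\ell)})^+$, so summing and using $\kappa^{(d+1)} = v$ gives
\[ v - \kappa^{(1)} \;=\; \sum_{\ell=1}^d \bigl(\kappa^{(\ell+1)} - \sigma^{(\ell)}\bigr)^+. \]
Multiplying by $\A$, subtracting $\sum_\ell \I^{(\ell)} c^{(\ell)}$, and rearranging, the difference to bound becomes
\[ \E[\A\kappa^{(1)}] - \E\Bigl[\A v - \sum_{\ell=1}^d \I^{(\ell)} c^{(\ell)}\Bigr] \;=\; \sum_{\ell=1}^d \E\Bigl[\I^{(\ell)} c^{(\ell)} - \A\,\bigl(\kappa^{(\ell+1)} - \sigma^{(\ell)}\bigr)^+\Bigr]. \]

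Next I would replace $\I^{(\ell)} c^{(\ell)}$ by $\I^{(\ell)}(\kappa^{(\ell+1)} - \sigma^{(\ell)})^+$ in expectation, using the definition of the nested Weitzman index. Since $\I^{(\ell)}$ is a function of information available before opening the $\ell$th box of basket $i$ --- namely $s^{(1)},\dots,s^{(\ell-1)}$ together with data from other baskets that is independent of basket $i$'s remaining signals --- the tower property combined with $\E[(\kappa^{(\ell+1)} - \sigma^{(\ell)})^+ \mid s^{(1)},\dots,s^{(\ell-1)}] = c^{(\ell)}$ yields $\E[\I^{(\ell)}(\kappa^{(\ell+1)} - \sigma^{(\ell)})^+] = \E[\I^{(\ell)} c^{(\ell)}]$. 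Substituting,
\[ \E[\A\kappa^{(1)}] - \E\Bigl[\A v - \sum_\ell \I^{(\ell)} c^{(\ell)}\Bigr] \;=\; \sum_\ell \E\bigl[(\I^{(\ell)} - \A)(\kappa^{(\ell+1)} - \sigma^{(\ell)})^+\bigr] \;\geq\; 0, \]
since $\A \leq \I^{(\ell)}$ and the other factor is nonnegative. This is the inequality.

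The main obstacle I anticipate is reconciling the equality condition from the telescoping --- namely that almost surely for every $\ell$, either $\A = \I^{(\ell)}$ or $\kappa^{(\ell+1)} \leq \sigma^{(\ell)}$ --- with Definition \ref{def:nested-non-exposed}, which instead uses $\gamma^{(\ell)}$ in place of $\sigma^{(\ell)}$ and the weaker trigger $\I^{(\ell)} > \I^{(\ell+1)}$. For the ``non-exposed implies equality'' direction I would argue that if $\I^{(\ell)} = 1$ but $\A = 0$, then picking the smallest $\ell^* \geq \ell$ with $\I^{(\ell^*)} > \I^{(\ell^*+1)}$ gives $\gamma^{(\ell^*)} \geq \kappa^{(\ell^*+1)}$ by non-exposure; combining $\gamma^{(\ell^*)} \leq \sigma^{(\ell)}$ (since $\ell \leq \ell^*$) with $\kappa^{(\ell^*+1)} \geq \kappa^{(\ell+1)}$ (since $\kappa^{(\ell+1)}$ is a min over a superset of terms) then yields $\sigma^{(\ell)} \geq \kappa^{(\ell+1)}$. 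For the reverse direction, if $\I^{(\ell)} > \I^{(\ell+1)}$ and $\gamma^{(\ell)} < \kappa^{(\ell+1)}$, I would take $\ell'$ to be the \emph{largest} index in $\{1,\dots,\ell\}$ attaining $\sigma^{(\ell')} = \gamma^{(\ell)}$; by maximality each $\sigma^{(\ell'')}$ for $\ell' < \ell'' \leq \ell$ strictly exceeds $\sigma^{(\ell')}$, so $\kappa^{(\ell'+1)} = \min(\sigma^{(\ell'+1)},\dots,\sigma^{(\ell)},\kappa^{(\ell+1)}) > \sigma^{(\ell')}$, and the per-stage equality condition at $\ell'$ then forces $\A = \I^{(\ell')} = 1$, contradicting $\A \leq \I^{(\ell+1)} = 0$.
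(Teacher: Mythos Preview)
Your argument is correct and shares the paper's core step: replacing $\E[\I^{(\ell)} c^{(\ell)}]$ by $\E[\I^{(\ell)}(\kappa^{(\ell+1)}-\kappa^{(\ell)})]$ via conditional independence (the paper packages this as a separate ``amortization'' lemma). The difference is in the bookkeeping after that. You keep the per-stage decomposition $\sum_\ell (\I^{(\ell)}-\A)(\kappa^{(\ell+1)}-\sigma^{(\ell)})^+$ and bound each term; the paper instead introduces $\ell^*$, the last stage reached, telescopes the sum to $\kappa^{(\ell^*+1)}-\kappa^{(1)}$, and uses a single inequality. The payoff of the paper's route is the equality characterization: strictness occurs exactly when $\A=0$ and $\kappa^{(\ell^*+1)}>\kappa^{(1)}$, and since $\kappa^{(1)}=\min\{\gamma^{(\ell^*)},\kappa^{(\ell^*+1)}\}$ this is literally the exposure condition at $\ell^*$, with no further case analysis. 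Your per-stage condition involves $\sigma^{(\ell)}$ rather than $\gamma^{(\ell)}$, so you need the extra two-directional argument (choosing $\ell^*$ downstream and $\ell'$ upstream) to bridge to Definition~\ref{def:nested-non-exposed}; that argument is fine, but it is exactly the work the paper's telescoping avoids.
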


We prove Lemma \ref{lemma:nested-key} in Appendix \ref{sec:nested-appdx}.

\begin{figure}
	\begin{center}
\begin{tikzpicture}[scale=1,domain=0:8,
	declare function={
		sig1(\x)= 2.75;
	},
	declare function={
		sig2(\x)= 2;
	},
	declare function={
		sig3(\x)= 3;
	},
	declare function={
		sig4(\x)= 1.75;
	},
	declare function={
		sig5(\x)= 1;
	},
	declare function={
		sig6(\x)= 2.5;
	},
	declare function={
		sig7(\x)= 1.5;
	},
	declare function={
		sig8(\x)= 2.25;
	},
	declare function={
		gamma(\x)= (\x < 1) * (2.75) +
		and(1 <= \x, \x < 3) * (2)  +
		and(3 <= \x, \x < 4) * (1.75)  +
		(4 <= \x) * (1)
	;
	},
	declare function={
	kappa(\x)= (\x < 5) * (1) +
	and(5 <= \x, \x < 7) * (1.5)  +
	(7 <= \x) * (2.25)
	;
	}
	]

	\begin{axis}[
		axis lines=middle,
		ymin=-0.2, ymax=4,
		xmin=-0.2, xmax=8.3,
		ymajorticks=false,
		xtick={0,...,7},
		y label style={at={(axis description cs:-0.05,0.5)},rotate=90,anchor=south},
		x label style={at={(axis description cs:0.5,-0.05)},anchor=north},
		ylabel=index,
		xlabel=stage $\ell$,
		xticklabel=\empty,
		domain=0:8,samples=101,
		axis equal image
		]

		\pic at (1.8,-.19) {myrec={1}{3.5}{fill=yellow!12.5!white}};
		\pic at (4.8,-.19) {myrec={3}{3.5}{fill=yellow!12.5!white}};

		\addplot [black,very thick,domain=0:1] {sig1(x)};
		\addplot [black,very thick,domain=1:2] {sig2(x)};
		\addplot [black,very thick,domain=2:3] {sig3(x)} node[above] {$\sigma_i^{(\ell)}$};
		\addplot [black,very thick,domain=3:4] {sig4(x)};
		\addplot [black,very thick,domain=4:5] {sig5(x)};
		\addplot [black,very thick,domain=5:6] {sig6(x)};
		\addplot [black,very thick,domain=6:7] {sig7(x)};
		\addplot [black,very thick,domain=7:8] {sig8(x)};

		\addplot [dashdotted, magenta, thick,domain=8:0] {kappa(x)} node[above right] {$\kappa_i^{(\ell)}$};

		\addplot [dashed, cyan, thick] {gamma(x)}  node[above] {$\gamma_i^{(\ell)}$};
	\end{axis}

\node [color=orange] at (5.4,0.5) {\tiny opportunity for exposure};

\end{tikzpicture}
\end{center}

	\caption{
		One realization of the Weitzman indices $\sigma_i^{(1)},\dots$ when opening nested boxes in a Pandora basket $i$.
		The minimum index so far is $\gamma_i^{(\ell)}$ and the minimum yet to come (including the final value) is the capped value $\kappa_i^{(\ell)}$.
		Yellow regions represent possible opportunities for exposure: if the index $\sigma_i^{(\ell)}$ of the next box under consideration is higher than the current $\gamma_i^{(\ell-1)}$, it is possible that $\kappa_i^{(\ell)}> \gamma_i^{(\ell - 1)}$, in which case failing to continue opening would cause exposure.
	}
	\label{fig:graph}
\end{figure}
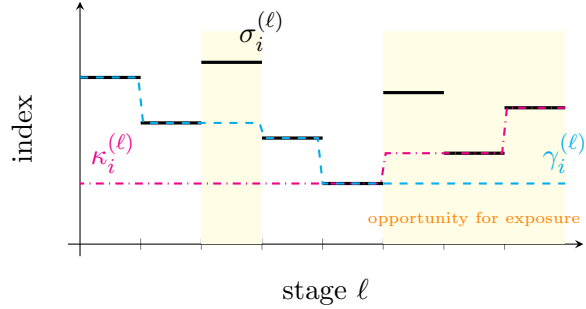

It is not immediately clear that non-exposed algorithms exist, as the conditions in Definition \ref{def:nested-non-exposed} rely on information unavailable to a decision-maker at the time of each choice.

We discuss this tension further in Appendix \ref{sec:nested-appdx}, and provide a generalization of Lemma \ref{lemma:KWW-key} for the nested-boxes setting.
We then use these tools to construct a class of non-exposed algorithms, which we call \emph{descending procedures}.

\subsection{Descending Procedures} \label{sec:opt-nested-policy}
Recall that \emph{advancing} a basket refers to opening the next box or claiming the item if all boxes are opened.

\begin{definition}\label{def:descending-nonexposed}
	Call an algorithm a \emph{descending procedure} if: \emph{(a)} it maintains a non-increasing subset of baskets, the \emph{eligible} set, \emph{(b)} at each step it advances the eligible basket with the largest Weitzman index, and \emph{(c)} it never removes a basket from the eligible set in the same step in which the basket was advanced.
\end{definition}

The idea of the eligible subset is that in optimization problems such as matching, the constraints often cause an algorithm to permanently rule out some basket from further consideration.

\begin{lemma} \label{lemma:descending-nonexposed}
	Any descending procedure is non-exposed.
\end{lemma}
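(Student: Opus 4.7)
The goal is to verify, for every basket $i$ and every stage $\ell \in \{1,\dots,d\}$, that $\I_i^{(\ell)} = 1$ and $\I_i^{(\ell+1)} = 0$ imply $\gamma_i^{(\ell)} \geq \kappa_i^{(\ell+1)}$. Because $\kappa_i^{(\ell+1)} \leq \sigma_i^{(\ell+1)}$ by Definition~\ref{def:nested-index}, it suffices to show the stronger inequality $\sigma_i^{(\ell+1)} \leq \gamma_i^{(\ell)}$. I would prove this pointwise on any realization of signals and algorithmic choices.

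Let $\ell^*$ be the \emph{largest} index in $\{1,\dots,\ell\}$ with $\sigma_i^{(\ell^*)} = \gamma_i^{(\ell)}$, so that $\sigma_i^{(\ell'')} > \gamma_i^{(\ell)}$ for every $\ell^* < \ell'' \leq \ell$, and let $t_{\ell'}$ denote the step at which basket $i$ is advanced to stage $\ell'$. By rule (b), at step $t_{\ell^*}$ the index $\sigma_i^{(\ell^*)}=\gamma_i^{(\ell)}$ is the maximum among all eligible baskets, so every other eligible basket has current index at most $\gamma_i^{(\ell)}$ at that moment. The key step is a short induction showing $t_{\ell'+1} = t_{\ell'}+1$ for every $\ell' \in \{\ell^*,\dots,\ell-1\}$: by rule (c) basket $i$ is still eligible at step $t_{\ell'}+1$; its current index $\sigma_i^{(\ell'+1)}$ strictly exceeds $\gamma_i^{(\ell)}$ by the maximality of $\ell^*$; and by the inductive hypothesis no basket other than $i$ has been advanced since step $t_{\ell^*}$, so every other eligible basket still has index at most $\gamma_i^{(\ell)}$. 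Hence basket $i$ is the strict maximum and is advanced again by rule (b), completing the induction.

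At step $t_\ell + 1$ we are in exactly the same situation: basket $i$ is eligible by (c) with current index $\sigma_i^{(\ell+1)}$, and every other eligible basket's index is still at most $\gamma_i^{(\ell)}$ since nothing else has been advanced since $t_{\ell^*}$. The hypothesis $\I_i^{(\ell+1)} = 0$ means basket $i$ is not advanced at this step, so rule (b) forces some other eligible basket to have index at least $\sigma_i^{(\ell+1)}$; since that other index is itself at most $\gamma_i^{(\ell)}$, we obtain $\sigma_i^{(\ell+1)} \leq \gamma_i^{(\ell)}$ and hence $\kappa_i^{(\ell+1)} \leq \gamma_i^{(\ell)}$, as required. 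The main subtlety is the choice of $\ell^*$ as the \emph{largest} argmin rather than the smallest: this promotes the nonstrict inequality $\sigma_i^{(\ell'')} \geq \gamma_i^{(\ell)}$ from the definition of the minimum to the strict inequality needed to force the consecutive advances in the induction, thereby preventing any competing basket from slipping in and having its index grow above $\gamma_i^{(\ell)}$ before we reach step $t_\ell + 1$.
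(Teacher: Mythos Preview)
Your argument is correct. It differs from the paper's route in a useful way, so a short comparison is worthwhile.

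The paper first isolates a global invariant as a separate lemma (Lemma~\ref{lemma:gamma-equality}): at every step $t$, the descending procedure advances an eligible basket of maximum $\gamma^t$, and every other eligible basket $j$ satisfies $\gamma_j^t=\sigma_j^t$. This is proved by induction over the time steps of the procedure, tracking all baskets simultaneously. Non-exposure then falls out in a few lines: at step $t+1$ after opening $i$'s box $\ell$, Lemma~\ref{lemma:gamma-equality} gives $\sigma_j^{t+1}=\gamma_j^{t+1}=\gamma_j^t\le\gamma_i^t=\gamma_i^{(\ell)}$ for every competitor $j$, so if $\gamma_i^{(\ell)}<\kappa_i^{(\ell+1)}\le\sigma_i^{(\ell+1)}$ then $i$ is the strict maximum and must be advanced.

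Your proof bypasses the global invariant entirely. You fix the basket $i$ and stage $\ell$ in question, locate the \emph{last} stage $\ell^*\le\ell$ at which $\sigma_i^{(\ell^*)}$ realised the running minimum $\gamma_i^{(\ell)}$, and then run a short local induction showing that from $t_{\ell^*}$ through $t_\ell$ the procedure advances $i$ on consecutive steps, freezing every competitor's index at its value at time $t_{\ell^*}$ (which was at most $\gamma_i^{(\ell)}$). The choice of the largest argmin is exactly what makes the induction strict, as you note. This is more direct and entirely self-contained; it also yields the slightly sharper conclusion $\sigma_i^{(\ell+1)}\le\gamma_i^{(\ell)}$ rather than just $\kappa_i^{(\ell+1)}\le\gamma_i^{(\ell)}$. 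What the paper's approach buys in exchange is a reusable structural statement (Lemma~\ref{lemma:gamma-equality}) about the interplay of $\gamma$ and $\sigma$ across all baskets, which could be handy if one needed further properties of descending procedures; your argument is tailored to non-exposure alone. Both proofs share the same implicit reading of condition~(c), namely that a basket advanced at step $t$ is still in the eligible set when rule~(b) is applied at step $t+1$.
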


We will crucially use Lemma \ref{lemma:descending-nonexposed}, proved in Appendix \ref{sec:nested-appdx}, to show our main positive result for Pandora's Matching Problem.

As a warm-up to matching, we sketch a proof that the natural descending algorithm is optimal for Pandora's Nested-Box Problem, i.e. the case where Pandora may select any one box.
The descending algorithm always advances the basket $i$ with maximum current Weitzman index $\sigma_i^t$, until one has been claimed or all indices are negative.
Its capped value is pointwise maximum, i.e. $\sum_i \A_i \kappa_i^{(1)} = \max_i ( \kappa_i^{(1)} )^+ $ in all realizations.
The key point in the proof is that at each time basket $i$ was advanced a stage, its Weitzman index was larger than all other baskets', so its minimum index exceeds the minimum of the others, i.e. $\kappa_i^{(1)} \geq \kappa_j^{(1)}$ for all $j$.
Now, the algorithm is indeed a \emph{descending procedure}, so it is non-exposed by Lemma \ref{lemma:descending-nonexposed}.
By Lemma \ref{lemma:nested-key} then, \emph{(a)} its welfare is equal to its expected capped value $\E[\max_i (\kappa_i^{(1)})^+]$, and \emph{(b)} no algorithm can do better.

	\section{Approximation for Pandora's Matching Problem}\label{sec:unordered}

We now return to the matching setting, ready to utilize nested boxes.
In order to obtain an approximation algorithm for Pandora's Matching Problem, we consider \emph{oriented graphs} where inspections of each edge must proceed in a certain order.
We use this algorithm to create approximation algorithms to the general problem, even with some correlation between values.
Finally, we consider a tempting simplification to ``edge-based'' algorithms.

\subsection{Matching with Oriented Graphs}\label{subsec:matching-ordered}

We first suppose that each edge is restricted to a particular order of inspection, e.g. $(i,j)$ must be inspected prior to $(j,i)$.
This is a natural setting for some applications.
A job-search platform may require candidates to submit resumes before companies can screen them: inspections must begin on the candidate side before moving to the employer side of the match.
Formally, we capture the predetermined order of inspection with an orientation of the graph.

\begin{definition}\label{def:orientation}
	An \emph{orientation} $\O$ of an undirected graph $G = (V, E)$ is a set of ordered pairs of vertices such that, for every edge $\{i,j\} \in E$, $\O$ contains exactly one of $(i,j)$ or $(j,i)$.
\end{definition}

We refer to an ordered pair $(i,j)\in \O$ as an \emph{oriented edge}.
Given an instance and an orientation $\O$ of the edges, we say an algorithm is \emph{$\O$-oriented} if for all realizations its sequence of inspections respects the orientation, i.e. if $(i,j) \in \O$, then endpoint $(i,j)$ must be inspected prior to $(j,i)$, if $(j,i)$ is inspected.
If an algorithm is $\O$-oriented for some $\O$, we call it a \emph{fixed-orientation algorithm}.

\paragraph{Comparison to committing policies}
Fixed-orientation algorithms may be reminiscent of \emph{committing policies} appearing in related work on Pandora's Box with Non-obligatory Inspection, e.g. \citet{beyhaghi2019pandora}.
Committing policies fix the sequence of box inspections in advance, though they may choose their stopping time dynamically.
Although a fixed-orientation algorithm ``commits'' to an orientation $\O$, it will in general adaptively decide which edge or ``basket'' to advance next, depending on information revealed at previous stages.
In other words, it only commits to order of inspection \emph{within} each edge.
Another difference is that, in the nonobligatory setting, we can reduce WLOG to a set of $n+1$ commitments with $n$ boxes; it is not obvious how to reduce the $2^{|E|}$ orientations of a Pandora's Matching Problem instance.

\paragraph{Orientations and nested Pandora boxes}
For fixed-orientation algorithms, edges can be represented as Pandora baskets with two nested boxes, i.e. two stages of inspection.
Opening the first box reveals the value of one endpoint, and the opening second reveals the full value of the edge.
For example, if $(i,j) \in \O$, then edge $\{i,j\}$ is a basket where the first box costs $c_{ij}$ to open and reveals signal $v_{ij}$, and opening the second box for cost $c_{ji}$ reveals the combined value $v_{ij} + v_{ji}$.
To advance to the third stage, or ``claim the basket'', is to add the edge to our matching.

We now consider optimizing expected welfare when restricted to a fixed orientation $\O$.
Even under the fixed-orientation restriction, the optimal matching algorithm is unclear.
The problem is more general than matching with a single Pandora box on each edge, a problem suspected to be NP-hard.
But for a given orientation $\O$, the natural descending procedure achieves a $1/2$-approximation to the optimal $\O$-oriented algorithm.

\begin{algdef}[$\O$-Oriented Descending Procedure]\label{alg:ordered-inspection}
	Given a fixed orientation $\O$, for steps $t=1,\dots$:
	\begin{itemize}
		\item For each $(i,j) \in \O$, let $\sigma_{ij}^t$ denote the current Weitzman index of basket $(i,j)$.
		\item Advance the basket $(i,j)$ with largest $\sigma_{ij}^t$.
		\begin{itemize}
			\item If advancing results in ``claiming'' the basket, then add edge $\{i,j\}$ to the matching. Delete any edges incident to $i$ or $j$ from $\O$.
		\end{itemize}
		\item If at any point the largest $\sigma_{ij}^t$ is negative or $\O$ is empty, stop.
	\end{itemize}
\end{algdef}

This is indeed a \emph{descending procedure} (Definition \ref{def:descending-nonexposed}), as it maintains a nonincreasing eligible set (technically, $\O$ union the matched edges) and always advances and retains the largest-index basket in that set.
Lemma \ref{lemma:descending-nonexposed} therefore gives the following.
\begin{corollary}\label{cor:ordered-non-exposed}
	For any orientation $\O$, the $\O$-Oriented Descending Procedure is non-exposed.
\end{corollary}

We can continue with an analysis analogous to that of \citet{kleinberg2016descending} for the case of matching with a Pandora box on each edge.
Given $\O$, let \emph{$\O$-$\kappa$-Greedy} refer to the following matching (used only for analysis): for each oriented edge $(i,j) \in \O$, let $x_{\{i,j\}} = \kappa_{ij}^{(1)}$, the capped value for basket $(i,j)$; take the matching produced by the weighted greedy algorithm with edge weights $x_{\{i,j\}}$.

\begin{lemma}\label{lemma:ordered-inspection-greedy}
	For any orientation $\O$, with probability one, the $\O$-Oriented Descending Procedure \ref{alg:ordered-inspection} produces the same matching as $\O$-$\kappa$-Greedy.
\end{lemma}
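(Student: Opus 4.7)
The plan is to prove by induction on the step at which an edge is matched that the descending procedure matches exactly the same sequence of edges as $\O$-$\kappa$-Greedy. At each step where the descending procedure claims some edge $e_k$, I will argue that $e_k$ has the maximum capped value $\kappa^{(1)}$ among edges still compatible with $e_1,\dots,e_{k-1}$, which is precisely what greedy selects next. Almost surely, no ties occur in the relevant indices, so both algorithms are well-defined without tie-breaking ambiguity.

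The key identity is the following. When basket $e_k$ is claimed at step $t_k$, it has been advanced from stage $d$ to $d+1$, so $\ell_{e_k}(t_k)=d$ and $\gamma_{e_k}^{t_k} = \gamma_{e_k}^{(d+1)} = \min_{\ell}\sigma_{e_k}^{(\ell)} = \kappa_{e_k}^{(1)}$, using the observation from Section~\ref{sec:multistage}. By Lemma~\ref{lemma:gamma-equality}, at this step $e_k$ has the largest $\gamma^{t_k}$ among eligible baskets, and every other eligible basket $f$ satisfies $\gamma_f^{t_k} = \sigma_f^{t_k}$. Because $\sigma_f^{t_k}=\sigma_f^{(\ell_f(t_k)+1)}$ is one of $f$'s nested Weitzman indices, it is at least $\min_\ell\sigma_f^{(\ell)} = \kappa_f^{(1)}$. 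Chaining these together yields
\[
   \kappa_{e_k}^{(1)} \;=\; \gamma_{e_k}^{t_k} \;\geq\; \gamma_f^{t_k} \;=\; \sigma_f^{t_k} \;\geq\; \kappa_f^{(1)} \qquad \text{for every eligible } f \neq e_k.
\]

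To close the induction, observe that edges are only removed from $\O$ when a previously claimed edge shares an endpoint, so the eligible set at step $t_k$ is exactly the set of edges not incident to $\{e_1,\dots,e_{k-1}\}$. By the inductive hypothesis this coincides with the set of edges greedy has not yet matched or ruled out. Combined with the inequality above, $e_k$ is the max-$\kappa^{(1)}$ edge among these, matching greedy's $k$-th pick.

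Finally, the two algorithms terminate at the same step. If greedy's next candidate would have strictly positive capped value, then every nested index of that basket is strictly positive, so the maximum current index among eligible baskets stays positive and the descending procedure cannot halt; by the inductive argument, the next edge it claims must be exactly that candidate. Conversely, if no compatible edge has positive capped value, every remaining compatible basket has some nested index $\leq 0$, and once the descending procedure has advanced past such an index in every eligible basket its maximum current index drops to $\leq 0$ and it stops without matching further. The main obstacle is carefully handling this stopping-condition equivalence, especially the possibility that a basket has positive early indices but non-positive $\kappa^{(1)}$; the central inductive step itself follows directly from Lemma~\ref{lemma:gamma-equality} and the identity $\gamma_{e_k}^{t_k}=\kappa_{e_k}^{(1)}$ at the moment of claiming.
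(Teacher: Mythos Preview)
Your argument is correct and reaches the same conclusion, but the route differs from the paper's. The paper argues directly: since every nested index $\sigma_e^{(k)}$ of the claimed basket $e$ was, at the moment $e$ was advanced through that stage, the largest current index among all still-available baskets, each $\sigma_e^{(k)}$ dominates some $\sigma_{e'}^{(\ell)}$ of any available $e'$; taking minima gives $\kappa_e^{(1)}\ge\kappa_{e'}^{(1)}$. You instead invoke Lemma~\ref{lemma:gamma-equality} at the single step $t_k$ where $e_k$ is claimed, together with the identity $\gamma_{e_k}^{t_k}=\gamma_{e_k}^{(d+1)}=\kappa_{e_k}^{(1)}$, and chain $\kappa_{e_k}^{(1)}=\gamma_{e_k}^{t_k}\ge\gamma_f^{t_k}=\sigma_f^{t_k}\ge\kappa_f^{(1)}$. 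This is a clean and arguably more systematic use of the $\gamma$-machinery already built in Section~\ref{sec:multistage}, whereas the paper's proof is shorter and self-contained (it does not cite Lemma~\ref{lemma:gamma-equality}). You also explicitly handle the termination/stopping equivalence, which the paper's proof omits entirely; your treatment there is sound modulo the boundary case $\kappa^{(1)}=0$ versus the algorithm's ``stop when negative'' rule, and your blanket ``almost surely no ties'' remark is not justified for general value distributions---but neither issue affects the downstream welfare comparison, and the paper glosses over both as well.
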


\begin{proof}
	Fix a realization of all random variables.
	Suppose edge $e = \{i,j\}$ is matched by the $\O$-Oriented Descending Procedure.
	We abuse terms by referring to $e$ as a basket, where it is oriented according to $\O$.
	We show that $\kappa_e^{(1)}\geq \kappa_{e'}^{(1)}$ for any edge $e'$ that was available (i.e. legal to match) when $e$ is matched.

	Since basket $e$ has been fully advanced, all of its Weitzman indices $\sigma_e^{(1)},\sigma_e^{(2)},\sigma_e^{(3)}=v_e$ were at some point the largest current Weitzman index among all available baskets.
	In particular, since $e'$ is available when $e$ is matched, each index $\sigma_e^{(k)}$ is larger than $\sigma_{e'}^{(\ell)}$ for some $\ell$, so $\kappa_e^{(1)} = \min_k \sigma_e^{(k)} \geq \min_{\ell} \sigma_{e'}^{(\ell)} = \kappa_{e'}^{(1)}$.
\end{proof}

Using the characterization of the  $\O$-Oriented Descending Procedure's matching as the greedy matching, we show that the algorithm achieves a constant factor of the optimal welfare.

\begin{proposition} \label{prop:oriented-desc-half}
	The $\O$-Oriented Descending Procedure (Algorithm \ref{alg:ordered-inspection}) achieves at least $1/2$ the expected welfare of the optimal $\O$-oriented algorithm.
\end{proposition}
\begin{proof}
	Let $\A_{ij}^{\Opt}$ be the indicator that edge $\{i,j\}$ is claimed by the optimal $\O$-oriented algorithm.
	For any realization of all values, consider a graph with edge weights $\{\kappa_{ij}^{(1)}\}$.
	In this graph, let $\A_{ij}^G$ be the indicator for $\{i,j\}$ being included in the greedy matching, i.e. in $\O$-$\kappa$-Greedy; and let $\A_{ij}^*$ be the corresponding indicator for the maximum weighted matching.
	The welfare of the $\O$-Oriented Descending Procedure satisfies
	\begin{align*}
		\Welf(\Alg)
		&= \E \sum_{(i,j) \in \O} \A_{ij} \kappa_{ij}^{(1)}		& \text{Corollary \ref{cor:ordered-non-exposed}}  \\
		&= \E \sum_{(i,j) \in \O} \A_{ij}^G \kappa_{ij}^{(1)}		& \text{Lemma \ref{lemma:ordered-inspection-greedy}}  \\
		&\geq \frac{1}{2} \E \sum_{(i,j) \in \O} \A_{ij}^* \kappa_{ij}^{(1)}		& \text{weighted matching fact}  \\
		&\geq \frac{1}{2} \E \sum_{(i,j) \in \O} \A_{ij}^{\Opt{}} \kappa_{ij}^{(1)}	& \text{optimality of $\{\A_{ij}^* \}$}   \\
		&\geq \frac{1}{2} \Welf(\Opt) 				& \text{Lemma \ref{lemma:nested-key}} .
	\end{align*}
\end{proof}

\paragraph{Remark}
Even with a fixed orientation, we do not know how to improve on a $1/2$-approximation factor in polynomial time.
We are limited by reliance on greedy matchings, which can only achieve $1/2$ of the optimal weighted matching even without inspections.

\subsection{Main Positive Result}\label{subsec:matching-unordered}

We now finally provide two approximation algorithms for Pandora's Matching Problem, both of which first pick an orientation $\O$ and then run the $\O$-Oriented Descending Procedure.

The first algorithm we consider is randomized, and orients each edge independently.
We will first present and analyze the randomized algorithm, before presenting a deterministic algorithm using concepts developed in the randomized setting.

\begin{algdef}[Randomized Pandora Matching Algorithm]\label{alg:randomized}
	Construct an orientation $\O$ by orienting each edge $\{i,j\}$ as either $(i,j)$ or $(j,i)$ uniformly and independently at random.
	Run the $\O$-Oriented Descending procedure.
\end{algdef}

Note that by construction, $\O$ is drawn from the uniform distribution over all orientations for the graph.

\begin{theorem}\label{theorem:rand-approx}
	In Pandora's Matching Problem, Algorithm \ref{alg:randomized} achieves at least a $1/4$-approximation of the optimal welfare.
\end{theorem}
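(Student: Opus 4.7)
The plan is to factor the $\tfrac{1}{4}$-approximation as $\tfrac{1}{2}\cdot\tfrac{1}{2}$: one $\tfrac{1}{2}$ comes directly from Proposition~\ref{prop:oriented-desc-half} applied conditional on the random $\O$, and the other from a ``random-orientation lemma'' comparing the best $\O$-oriented algorithm to unrestricted $\Opt$. The first factor is immediate: Proposition~\ref{prop:oriented-desc-half} gives $\Welf(\O\text{-Desc.}) \geq \tfrac{1}{2}\Welf(\Opt_\O)$ for each fixed $\O$, where $\Opt_\O$ is the optimal $\O$-oriented algorithm, so averaging over uniform $\O$ yields
\[
  \Welf(\text{Algorithm~\ref{alg:randomized}}) = \E_\O\bigl[\Welf(\O\text{-Desc.})\bigr] \geq \tfrac{1}{2}\,\E_\O\bigl[\Welf(\Opt_\O)\bigr].
\]
It remains to prove $\E_\O[\Welf(\Opt_\O)] \geq \tfrac{1}{2}\Welf(\Opt)$.

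To do that, I would construct, for each $\O$, an explicit $\O$-oriented algorithm $\Alg^\O$ with $\E_\O[\Welf(\Alg^\O)] \geq \tfrac{1}{2}\Welf(\Opt)$; since $\Opt_\O$ dominates, this proves the lemma. $\Alg^\O$ will be a coupled simulation of $\Opt$ driven by two independent pools of samples: the true values $\{v_{ij}\}$ used for welfare accounting, and an auxiliary pool of independent fresh samples $\{\tilde v_{ij}\}\sim\prod D_{ij}$ used to keep the simulated $\Opt$ running after an illegal request. Each time the simulated $\Opt$ requests box $(i,j)$, if the request is $\O$-legal (either $(i,j)\in\O$ or $(j,i)$ has already been inspected) then $\Alg^\O$ performs the inspection and returns the real $v_{ij}$; otherwise $\Alg^\O$ inspects nothing, returns $\tilde v_{ij}$ to the simulated $\Opt$, and permanently marks edge $\{i,j\}$ as \emph{forbidden}, so no further boxes on this edge are ever inspected and the edge is never matched. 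By construction $\Alg^\O$ is $\O$-oriented and pays zero inspection cost on forbidden edges.

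The welfare bound then hinges on two observations. First, because the fresh samples are i.i.d.\ with the real values, the distribution of the simulated $\Opt$'s trajectory matches that of the real $\Opt$. Second, the direction in which the simulated $\Opt$ first inspects edge $\{i,j\}$ is determined by observations from other edges (hence by $\O$ restricted to the other edges and by the two value pools), and therefore is independent of $\O_{\{i,j\}}$; so edge $\{i,j\}$ is forbidden with probability exactly $\tfrac{1}{2}$. A symmetry argument swapping real and fresh samples at the other edges shows, moreover, that conditional on $\{i,j\}$ not being forbidden, the joint distribution of the simulated $\Opt$'s matching/inspection indicators on $\{i,j\}$ together with $(v_{ij},v_{ji})$ agrees with that of the real $\Opt$ on that edge. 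Summing over edges by linearity of expectation gives $\E_\O[\Welf(\Alg^\O)] = \tfrac{1}{2}\Welf(\Opt)$, which combined with the reduction above yields the claimed $\tfrac{1}{4}$-approximation.

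The main obstacle will be pinning down the independence and symmetry arguments rigorously despite $\Opt$'s adaptivity: in particular, verifying that the fresh-sample coupling genuinely preserves the marginal distribution of the simulated $\Opt$'s decisions, and that conditioning on $\{i,j\}$ not being forbidden does not distort the conditional joint distribution of $\Opt$'s behavior on that edge together with the true values $(v_{ij},v_{ji})$. Once those are settled, the per-edge accounting is just linearity of expectation.
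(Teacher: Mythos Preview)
Your plan is correct and lands on essentially the same decomposition as the paper, though organized differently. The paper proves a pointwise lemma (Lemma~\ref{lemma:opt-bound}): for \emph{any} fixed $\O$, it splits $\Welf(\Opt)$ according to which endpoint $\Opt$ inspects first on each edge, charges the ``$\O$-direction'' part to $\Welf(\Opt_\O)$ and the rest to $\Welf(\Opt_{\hat\O})$, and then applies Proposition~\ref{prop:oriented-desc-half} to each. Averaging that pointwise bound over uniform $\O$ gives exactly your inequality $\E_\O[\Welf(\Opt_\O)]\ge\tfrac12\Welf(\Opt)$. The paper's version is a bit stronger (pointwise in $\O$), which it reuses for the deterministic Best-of-Two-Worlds theorem; your average-case version suffices for the randomized algorithm only. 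Conversely, your fresh-sample simulation $\Alg^\O$ is precisely the construction that justifies the paper's unproven step $\E\sum_{(i,j)\in\O}W_{ij}^{\Opt}\le \Welf(\Opt_\O)$, so in that respect your write-up is more explicit than the paper's.

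One suggestion on the ``main obstacle'' you flag. The per-edge conditional-distribution claim and the ``swap real/fresh at other edges'' symmetry are more delicate than necessary. A cleaner route: show by induction on steps that the simulated $\Opt$'s \emph{entire trajectory} $T$ (requests and observed values) is independent of $\O$, since at every step the returned value is a fresh draw from the correct $D_{ij}$ regardless of whether it came from the real or the fresh pool. Then for each fixed $\O$, your $\Welf(\Alg^\O)$ is the function $\sum_e \mathbb{1}[\text{first inspection of }e\text{ in }T\text{ matches }\O_e]\cdot W_e(T)$ of the trajectory, which is the \emph{same} function that $\sum_{(i,j)\in\O}W_{ij}^{\Opt}$ is of the real $\Opt$'s trajectory; equality of expectations follows immediately, and averaging over $\O$ gives the $\tfrac12$. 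This bypasses the conditioning and makes the independence you need trivially verifiable.
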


Before proving the approximation guarantee, we first define some helpful notation and an upper-bound on the value of an optimal algorithm.
We will also use these tools in defining our deterministic algorithm.

\begin{definition}\label{def:reverse-orientation}
	The \emph{reverse} of an orientation $\O$ is $\hat\O = \{(j,i)~|~ (i,j)\in \O\}$.
\end{definition}

The key step in the analysis of the randomized algorithm is an upper bound on the welfare of the optimal algorithm relative to an orientation and its reverse.
The crux of this bound is that any realization of an algorithm induces an orientation on edges it inspects, based on the order of inspection.
The welfare an edge generates in an algorithm, then, can be ``charged'' across any orientation $\O$ and its reverse, according to which orientation matches the direction in which the edge is inspected.

\begin{lemma}\label{lemma:opt-bound}
	For any orientation $\O$ and it s reverse $\hat{\O}$, the welfare of \Opt~on Pandora's Matching Problem is upper-bounded by
	\begin{align*}
		\Welf(\Opt) &\leq  2\cdot \Welf(\text{$\O$-desc.}) ~~+~ 2\cdot \Welf(\text{$\hat{\O}$-desc.}).
	\end{align*}
\end{lemma}

The proof of this lemma is given in Appendix \ref{sec:unordered-appdx}.
This bound on the optimal welfare of any algorithm for Pandora's Matching Problem allows a quick proof of Theorem \ref{theorem:rand-approx}, also in Appendix \ref{sec:unordered-appdx}.

Given the heavy reliance of Algorithm \ref{alg:randomized} on randomization of orientation, we also present a deterministic algorithm based on determining an orientation and then running the associated descending procedure.
A similar “best-of-two-worlds” technique is used in \citet{beyhaghi2019pandora} in a different context, and is considered in Appendix \ref{sec:unordered-appdx}.

This algorithm differs from the randomized algorithm in that the final orientation is determined globally.
In the randomized algorithm we select the orientation of each edge independently, but the Best-of-Two-Worlds algorithm compares two orientations over the entire graph and selects one.

\begin{theorem}\label{theorem:best-of-two-approx}
	In Pandora's Matching Problem, the Best-of-Two-Worlds algorithm achieves at least a $1/4$-approximation of the optimal welfare.
\end{theorem}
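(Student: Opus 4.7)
The plan is to derive Theorem \ref{theorem:best-of-two-approx} as a nearly immediate corollary of Lemma \ref{lemma:opt-bound}, exploiting the fact that the Best-of-Two-Worlds algorithm selects whichever of the two descending procedures has higher expected welfare. The key conceptual observation is that taking the maximum of two quantities dominates their average, so picking the better of the two orientations is at least as good (in expectation) as the randomization in Algorithm \ref{alg:randomized}.

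Concretely, I would first fix the arbitrary orientation $\O$ chosen by Algorithm \ref{alg:best-of-two} and its reverse $\hat{\O}$. By definition of the algorithm,
\begin{align*}
\Welf(\Alg^\text{BoT}) = \max\bigl\{\Welf(\text{$\O$-desc.}),\, \Welf(\text{$\hat{\O}$-desc.})\bigr\} \geq \tfrac{1}{2}\bigl(\Welf(\text{$\O$-desc.}) + \Welf(\text{$\hat{\O}$-desc.})\bigr).
\end{align*}
Then I would apply Lemma \ref{lemma:opt-bound} directly, which gives $\Welf(\Opt) \leq 2\Welf(\text{$\O$-desc.}) + 2\Welf(\text{$\hat{\O}$-desc.})$, i.e. $\tfrac{1}{2}(\Welf(\text{$\O$-desc.}) + \Welf(\text{$\hat{\O}$-desc.})) \geq \tfrac{1}{4}\Welf(\Opt)$. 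Chaining these two inequalities yields the claimed $1/4$-approximation bound.

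Since Lemma \ref{lemma:opt-bound} is already proved and applies to an arbitrary orientation and its reverse, there is essentially no remaining obstacle for the approximation guarantee itself; the proof is a two-line deduction that mirrors the final display in the proof of Theorem \ref{theorem:rand-approx} but replaces the expectation over uniform orientations with a pointwise max. The only subtlety worth flagging in the writeup is that Algorithm \ref{alg:best-of-two} requires comparing $\Welf(\text{$\O$-desc.})$ and $\Welf(\text{$\hat{\O}$-desc.})$, so the guarantee is stated for the algorithm as specified (which assumes the ability to compute or estimate these expectations); this is a minor implementation comment rather than a mathematical obstacle, but I would note it briefly so the reader sees that the deterministic claim is meaningful despite removing randomness.
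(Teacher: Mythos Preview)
Your proposal is correct and matches the paper's proof essentially line for line: the paper also bounds $\max\{\Welf(\text{$\O$-desc.}),\Welf(\text{$\hat{\O}$-desc.})\}$ below by the average and then invokes Lemma \ref{lemma:opt-bound} to obtain $\tfrac{1}{4}\Welf(\Opt)$. Your side remark about needing to compute the two expected welfares is reasonable but not part of the paper's proof.
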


\paragraph{Extension to correlated-within-edges model}
We can extend these positive results to a model where each pair of values $v_{ij}, v_{ji}$ on an edge are arbitrarily correlated with each other and independent of the other edges.
In fact, we can extend this a bit further: in the \emph{correlated-within-edges model}, we suppose that each Pandora box on an endpoint $(i,j)$ reveals for cost $c_{ij}$ an arbitrary signal $s_{ij}$, the total value on an edge is an arbitrary measurable function $f_{\{i,j\}}(s_{ij}, s_{ji})$, and for each edge, the pair $(s_{ij}, s_{ji})$ are drawn jointly from a distribution $D_{\{i,j\}}$, independently of all other pairs.

In this setting, the definition of an algorithm is unchanged, i.e. it adaptively pays to open boxes and eventually outputs a matching of fully-opened edges.
The proofs of Proposition \ref{prop:oriented-desc-half} and Lemma \ref{lemma:opt-bound} go through unchanged, and the proofs of both algorithms hold with values correlated along edges.
In the proof of Lemma \ref{lemma:opt-bound}, an edge is modeled as a pair of arbitrary Pandora baskets $(i,j)$ and $(j,i)$, where an algorithm must pick one of the baskets at the time of first inspection of the edge.
For any orientation $\O$, i.e. a commitment on each edge to one of its two possible baskets, the proof then cites Proposition \ref{prop:oriented-desc-half}.
And Proposition \ref{prop:oriented-desc-half} addresses the problem of matching with a single, arbitrary Pandora basket per edge, utilizing our generic results for Nested Boxes in Section \ref{sec:multistage}.

\begin{observation}
	The $1/4$-approximations of the Randomized Pandora Matching and  Best-of-Two-Worlds algorithms continue to hold in the correlated-within-edges model.
\end{observation}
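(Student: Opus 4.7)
The plan is to verify that the chain of arguments leading to Theorems \ref{theorem:rand-approx} and \ref{theorem:best-of-two-approx} factors cleanly through three components, each of which survives the passage to the correlated-within-edges model: (i) the nested-box machinery of Section \ref{sec:multistage}, (ii) the oriented $1/2$-approximation of Proposition \ref{prop:oriented-desc-half}, and (iii) the reversal bound of Lemma \ref{lemma:opt-bound}. Once all three are checked, Theorems \ref{theorem:rand-approx} and \ref{theorem:best-of-two-approx} follow by rereading their proofs verbatim, since those proofs only quote Lemma \ref{lemma:opt-bound}.

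For (i), I would note that Definition \ref{def:nested-index} and Lemma \ref{lemma:nested-key} were stated for an arbitrary joint distribution $(s_i^{(1)},\dots,s_i^{(d-1)},v_i) \sim D_i$, with the crucial conditional-independence step in Lemma \ref{lemma:nested-key} being that the next indicator $\I_i^{(\ell)}$ is independent of the future random variable $\kappa_i^{(\ell+1)}$ given the signals revealed so far. This remains true in the correlated model: conditioned on $s_{ij}$, the final value $f_{\{i,j\}}(s_{ij},s_{ji})$ depends only on $s_{ji}$ and the edge's own joint law $D_{\{i,j\}}$, which is independent of all other edges and of the algorithm's coins. Thus an edge $(i,j)\in\O$ can be modelled as a two-stage nested basket with first-stage cost $c_{ij}$ revealing $s_{ij}$ and second-stage cost $c_{ji}$ revealing $v_{\{i,j\}}=f_{\{i,j\}}(s_{ij},s_{ji})$, and Lemmas \ref{lemma:KWW-key}, \ref{lemma:descending-nonexposed}, \ref{lemma:gamma-equality}, and \ref{lemma:nested-key} apply without any change.

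For (ii), the proof of Proposition \ref{prop:oriented-desc-half} only uses: non-exposure of the oriented descending procedure (Corollary \ref{cor:ordered-non-exposed}, which reduces to (i)); Lemma \ref{lemma:ordered-inspection-greedy} that the procedure's matching equals $\O$-$\kappa$-Greedy (purely a statement about capped values once baskets are set up as in (i)); and the standard $\tfrac12$-approximation of greedy weighted matching (a combinatorial fact independent of the underlying probability model). So Proposition \ref{prop:oriented-desc-half} holds verbatim for any orientation $\O$ in the correlated-within-edges model. For (iii), Lemma \ref{lemma:opt-bound} assigns each edge of $\Opt$ to whichever of $\O,\hat\O$ agrees with the direction in which $\Opt$ first inspects that edge, and writes the welfare contribution of the edge as $W_{ij}^{\Opt}+W_{ji}^{\Opt}$. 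Because edges are independent across edges (the correlation is only within an edge), these per-edge contributions decompose additively in expectation, and the sub-policy of $\Opt$ that skips any edge not first-inspected in the $\O$ direction is a feasible $\O$-oriented algorithm; hence $\E \sum_{(i,j)\in\O} W_{ij}^{\Opt}$ is bounded by the optimal $\O$-oriented welfare, and applying Proposition \ref{prop:oriented-desc-half} to each side yields Lemma \ref{lemma:opt-bound}.

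The one place that needs genuine care — and which I expect to be the main obstacle — is the step in (iii) that upper-bounds the direction-restricted welfare of $\Opt$ by the optimal $\O$-oriented welfare. In the correlated-within-edges setting, the ``first-inspected endpoint'' of an edge is a random quantity depending on the whole history of $\Opt$, so the induced orientation is not a fixed $\O$. The resolution is that edges are mutually independent, so for each edge $\{i,j\}$ we can consider the modified policy that mirrors $\Opt$ but aborts inspection of $\{i,j\}$ whenever $\Opt$ would inspect it in the wrong direction; this modified policy is $\O$-oriented and its per-edge welfare is exactly $W_{ij}^{\Opt}$, and additivity across independent edges lets us sum these bounds. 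With this step justified, Theorem \ref{theorem:rand-approx} follows by averaging over uniform $\O$, and Theorem \ref{theorem:best-of-two-approx} follows by taking the maximum of the two oriented welfares, yielding the claimed $1/4$-approximation in both cases.
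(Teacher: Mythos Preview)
Your proposal is correct and matches the paper's own justification: the paper simply asserts that the proofs of Proposition~\ref{prop:oriented-desc-half} and Lemma~\ref{lemma:opt-bound} go through unchanged once each oriented edge is viewed as an arbitrary two-stage Pandora basket, and you have spelled out exactly why each piece survives. The subtlety you isolate in the last paragraph (that $\Opt$'s first-inspected endpoint is adaptive) is already present in the independent-values case and handled the same way there, so correlation within edges introduces no new obstacle---cross-edge independence is what makes the modified-policy coupling valid in both settings.
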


\subsection{Dessert?}\label{subsec:unordered-inspections}

We have finally presented two constant-factor approximation algorithms for Pandora's Matching Problem, one randomized and one deterministic.
The extreme simplicity of our randomized algorithm's edge orientation choice hints that a simpler deterministic approach should be possible, potentially following the structure of the randomized algorithm: for each edge, pick an orientation independently of any other edge orientations.
This would greatly simplify our deterministic approximation algorithm, and such an algorithm would be a nice ``dessert''.
In particular, a natural approach might be to orient each edge according to which ``basket'' has a higher initial Weitzman index, then run the Oriented Descending Procedure.
We consider the following generalization of this approach:

\begin{definition} \label{def:edge-based}
	An \emph{edge-based fixed-orientation algorithm} is one that, given an instance of Pandora's Matching Problem, constructs an orientation $\O$ by applying a deterministic rule to each edge $\{i,j\}$ that depends only on the parameters of that edge, $D_{ij},c_{ij},D_{ji},c_{ji}$, and then runs some $\O$-oriented algorithm.
\end{definition}

Unfortunately, we have the following result.
\begin{theorem}[``No dessert''] \label{thm:no-dessert}
	No edge-based fixed-orientation algorithm guarantees any positive approximation for Pandora's Matching Problem.
\end{theorem}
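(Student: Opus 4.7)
The plan is to show that for any edge-based rule $f$, we can construct an instance on which the resulting algorithm achieves welfare arbitrarily smaller than $\Opt$'s. The approach extends the construction behind Example~\ref{ex:vertex-based-counterexample}. Define a parameter family $(P_A^{(n)}, P_B^{(n)})$ where $P_A^{(n)}$ is a Bernoulli box with a rare high reward $H_n$ and otherwise a large loss $-L_n$, and $P_B^{(n)}$ is a deterministic moderately-positive box, all with a small fixed cost. By tuning $H_n, L_n, p_n$, the welfare achievable on a single edge under the $P_A$-first orientation grows as $W_n \to \infty$, while any policy that starts by inspecting $P_B$ is weakly dominated by not inspecting at all, yielding welfare at most $0$.

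Given $f$, first perform a case analysis on single-edge instances drawn from this family. For each $n$, consider $I_n$ with $P_A^{(n)}$ at $(i,j)$ and $P_B^{(n)}$ at $(j,i)$, and its reverse $\bar{I}_n$ with the parameter assignments swapped. If $f$ outputs the ``$P_B$-first'' orientation on either $I_n$ or $\bar{I}_n$, that instance is immediately a counterexample: the resulting oriented algorithm achieves welfare at most $0$ while $\Opt$ achieves $W_n$, so the ratio tends to $0$. The remaining case is that $f$ always places the $P_A$-side first, i.e., $f$ coincides with the isolation-optimal orientation rule on this family.

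To handle this ``smart-rule'' case, construct a multi-edge instance in which $f$'s purely local decisions are globally suboptimal. Specifically, take two edges $e_1 = \{u, a\}$ and $e_2 = \{u, b\}$ sharing a vertex $u$, with parameters drawn from the family and arranged so that each edge's isolation-optimal orientation places the Bernoulli side first. The induced fixed orientation then forces the oriented algorithm to inspect both high-variance endpoints before the matching constraint (sharing $u$) bites; in contrast, an adaptive $\Opt$ can inspect one edge's deterministic side first, use that realization to pivot dynamically, and effectively orient the second edge differently in different realizations. By choosing parameters so that $\Opt$'s savings scale with $n$ while every fixed-orientation algorithm compatible with $f$'s isolation-optimal choices remains bounded, we obtain the claimed unboundedly bad approximation ratio.

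The main obstacle is this last step. Since Best-of-Two-Worlds (Algorithm~\ref{alg:best-of-two}) already achieves a $1/4$-approximation using a single globally-chosen orientation, the unbounded gap must come from $f$'s edge-\emph{locality}, not from the fixed-orientation commitment itself. The delicate part is engineering the two-edge parameters so that coordinated orientations across edges---which Best-of-Two-Worlds can evaluate but no edge-based rule can discover, since it is blind to the other edge's parameters---are essential to avoid catastrophic inspection-cost waste. This mirrors the intuition given in the paper's discussion: ``inspection orders within an edge must sometimes be coordinated globally, depending on the other edges' parameters.''
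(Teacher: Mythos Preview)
Your proposal has two genuine gaps.

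\textbf{First case is quantitatively impossible as stated.} With $P_B$ deterministic, inspecting $P_B$ first reveals no information; the only difference between the two orientations on a single edge is \emph{when} $c_B$ is paid. Concretely, the optimal $P_A$-first welfare is $p_n(H_n+b)-p_n c_B-c_A$, while the optimal $P_B$-first welfare is $\max\{-c_B,\; p_n(H_n+b)-c_B-c_A\}$, so the two differ by at most $(1-p_n)c_B\le c_B$. Hence you cannot simultaneously have $W_n\to\infty$ and $P_B$-first welfare $\le 0$ with ``small fixed cost.'' The paper's Instance~1 avoids this by keeping both orientations' welfares bounded ($1-\alpha$ versus $\alpha-\alpha^2$) and letting the \emph{ratio} go to zero; you would need to do the same.

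\textbf{The ``smart-rule'' case is the crux, and it is not proved.} You explicitly flag the two-edge construction as ``the main obstacle'' and ``delicate'' but give no parameters and no welfare calculation. Worse, the suggested mechanism is unlikely to work: if both edges $e_1,e_2$ carry identical parameters from your family, then every edge-based rule and Best-of-Two-Worlds alike orient them identically, so there is no cross-edge ``coordination'' available to exploit. The paper's actual idea is quite different and is missing from your sketch: keep many identical copies of the edge and add a deterministic \emph{outside option} of value $\tfrac{1}{\alpha}$ at the shared center. Without the outside option, the edge is worth matching iff the $P_A$-side succeeds, so inspecting $P_A$ first is optimal. With the outside option, the edge beats $\tfrac{1}{\alpha}$ only when \emph{both} endpoints succeed; since the $P_B$-side fails with much higher probability, inspecting $P_B$ first saves expected cost and the $(j,i)$ orientation becomes arbitrarily better. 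Thus the \emph{same} edge flips its optimal orientation depending on context the edge-based rule cannot see---that context-dependence, not adaptive cross-edge pivoting, is the engine of the proof.
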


\begin{figure}
	\begin{center}
		\begin{tikzpicture}[
	every node/.style={circle,draw},scale=1]

	\node (i) at (-3,0) {$i$};
	\node (j) at (2,0) {$j$};
	\draw (i) node[right,rectangle,above,sloped,draw=none,fill=none,outer ysep=8pt]{} -- (j) node[left,rectangle,above ,sloped,draw=none,fill=none,outer ysep=8pt]{};

	\node[right,rectangle,below,sloped,draw=none,fill=none,outer ysep=8pt,align=left] (vij) at ($(i)$) {$v_{ij} = \begin{cases}
			\frac{1}{\alpha^3} & \text{w.prob. $\alpha$}  \\
			0  & \text{else}
		\end{cases}$  \\[0.5em]
	 	$c_{ij} = 1$};
	\node[left,rectangle,below,sloped,draw=none,fill=none,outer ysep=8pt, align=left] (vji) at ($(j)$) {$v_{ij} = \begin{cases}
			 0  & \text{w.prob. $\alpha^2$}  \\
			-\frac{1}{\alpha^3}+\frac{1}{\alpha}  & \text{else}
		\end{cases}$  \\[0.5em]
		$c_{ji} = 1 - \alpha$};
\end{tikzpicture}
	\end{center}
	\caption{An edge used to prove Theorem \ref{thm:no-dessert}.
		Without an outside option, it is optimal to inspect $i$ before $j$ (Instance 1), but with an outside option of $\frac{1}{\alpha}$, it is optimal to inspect $j$ before $i$ (Instance 2).
	}
	\label{fig:no-dessert}
\end{figure}
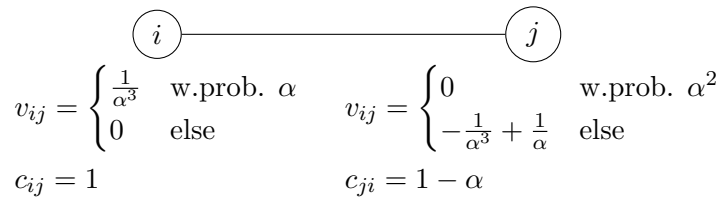

\paragraph{Intuition}
The proof relies on a particular edge, shown in Figure \ref{fig:no-dessert}.
Every edge-based fixed orientation algorithm either orients that edge in one direction or the other.
If there is no outside option, then the $(i,j)$ orientation is arbitrarily better, because we want to match the edge if and only if $v_{ij} > 0$, and we discover this immediately, saving the cost of inspecting $(j,i)$ when $v_{ij} = 0$.
But if there is an outside option of $\frac{1}{\alpha}$, then the basket is worthwhile if and only if both boxes contain their high value (a ``success'').
Here the $(j,i)$ orientation is arbitrarily better, as the overall probability of a success is the same but we pay a second inspection cost with only $\alpha^2$ probability instead of $\alpha$.

For the skeptical reader, we also give an alternative proof of Theorem \ref{thm:no-dessert} in Section \ref{subapp:no-dessert-alternate} that does not use any Pandora box machinery.

\begin{proof}
	Recall that $\sigma_{ij}^{(1)}$ is the initial Weitzman index for the basket corresponding to orientation $(i,j)$, and $\kappa_{ij}^{(1)}$ is the capped value, a random variable.
	By Lemma \ref{lemma:nested-key}, the expected welfare contribution of a basket $(i,j)$ is upper-bounded $\E[\A_{ij} \kappa_{ij}^{(1)}]$, with equality if the algorithm is non-exposed.
	In particular, since $\kappa_{ij}^{(1)} \leq \sigma_{ij}^{(1)}$, we can upper-bound the welfare contribution of a box under any algorithm by the maximum $\sigma_{ij}^{(1)}$ over any basket available to it.

	We record the following (calculations appear in Appendix \ref{subapp:no-dessert}):
	\begin{itemize}
		\item $\sigma_{ij}^{(1)} = \frac{1}{\alpha} - 1$.
		\item $\kappa_{ij}^{(1)} = \frac{1}{\alpha} - 1$ with probability $\alpha$, otherwise negative.
		\item $\sigma_{ji}^{(1)} = \frac{1}{\alpha^2} - \frac{1}{\alpha}$.
		\item $\kappa_{ji}^{(1)} = \frac{1}{\alpha^2} - \frac{1}{\alpha}$ with probability $\alpha^3$, otherwise nonpositive.
	\end{itemize}

	\textbf{Instance 1.}
	We consider a graph consisting of just the edge in Figure \ref{fig:no-dessert}.
	An algorithm can orient the edge as a basket $(i,j)$ or as a basket $(j,i)$.
	In each case, the descending procedure is optimal and achieves welfare equal to expected capped value (Section \ref{sec:multistage}).
	With a single $(i,j)$ basket, the descending procedure obtains welfare $\E[(\kappa_{ij})^+] = \alpha \left(\frac{1}{\alpha} - 1\right) = 1 - \alpha$.
	With $(j,i)$, the welfare is $\E[(\kappa_{ji}^{(1)})^+] = \alpha^3 \left(\frac{1}{\alpha^2} - \frac{1}{\alpha}\right) = \alpha - \alpha^2$.
	So the orientation $(j,i)$ first is an approximation of $\frac{\alpha - \alpha^2}{1-\alpha} = \alpha$.
	This holds for arbitrarily small $\alpha > 0$, so no algorithm that orients the edge as $(j,i)$ can guarantee a positive approximation.

	\vskip1em
	\textbf{Instance 2.}
	We will consider a star graph with $i$ at the center.
	There is a special edge $\{i,k\}$ with $v_{ik} = \frac{1}{\alpha}$, $c_{ik} = 0$, $v_{ki} = 0$, $c_{ki} = 0$.
	In other words, $i$ has an ``outside option'' $\frac{1}{\alpha}$ it can match to at any time.
	Every other edge in the graph $\{i,j_1\},\dots,\{i,j_m\}$ is an independent copy of the edge $\{i,j\}$ in Figure \ref{fig:no-dessert}.
	The number of copies will be $m = \omega(\tfrac{1}{\alpha^3})$.
	A edge-based fixed-orientation algorithm must orient each copy in the same direction, as it applies the same deterministic rule to each edge.

	After orienting every edge in the $(j,i)$ direction, we have an instance of Nested Pandora's Boxes with $m+1$ baskets.
	The optimal welfare, achieved by the descending procedure, is equal to the expected maximum capped value of a basket.
	With $\omega(\tfrac{1}{\alpha^3})$ baskets, some basket has capped value $\frac{1}{\alpha^2} - \frac{1}{\alpha}$ with probability $1 - o(1)$; so the algorithm's expected welfare is $(\frac{1}{\alpha^2} - \frac{1}{\alpha})\left(1 - o(1)\right)$.

	After orienting every edge in the $(i,j)$ direction, the same argument holds, except that the capped value of every basket is always at most $\frac{1}{\alpha} - 1$.
	These are actually less than the capped value of the outside option, which is deterministically $\frac{1}{\alpha}$, which is the maximum capped value and therefore the best any algorithm can do.

	The ratio of the optimal $(i,j)$ orientation welfare to $(j,i)$ is, asymptotically, $\frac{1/\alpha}{1/\alpha^2} = \alpha$.
	This holds as $\alpha \to 0$, so no algorithm that orients $\{i,j\}$ as $(i,j)$ has a positive approximation guarantee.
\end{proof}

The analysis of instance 1 provides an interesting corollary.

\begin{corollary}
	Even with just a single edge, inspecting according to the orientation $(i,j)$ with the larger Weitzman index $\sigma_{ij}^{(1)}$ can be arbitrarily suboptimal.
\end{corollary}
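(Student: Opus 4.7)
The plan is to reuse Instance~1 from the proof of Theorem~\ref{thm:no-dessert} and simply check which of the two orientations has the larger initial Weitzman index. Since both orientations have already been analyzed there, the corollary will follow essentially as a read-off.

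First I would recall the values computed in the preceding proof for the single-edge instance in Figure~\ref{fig:no-dessert}:
\[
\sigma_{ij}^{(1)} = \frac{1}{\alpha} - 1, \qquad \sigma_{ji}^{(1)} = \frac{1}{\alpha^2} - \frac{1}{\alpha}.
\]
Next I would verify that for all sufficiently small $\alpha > 0$ we have $\sigma_{ji}^{(1)} > \sigma_{ij}^{(1)}$, which reduces to the elementary inequality $\tfrac{1}{\alpha^2} > \tfrac{2}{\alpha} - 1$ and holds for all $\alpha \in (0,1)$. Hence the rule ``orient each edge so that the chosen basket has the larger initial Weitzman index'' picks the orientation $(j,i)$.

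Then I would invoke the welfare computations already in Instance~1: the $\O$-Oriented Descending Procedure is optimal among $\O$-oriented algorithms when there is a single edge (it is non-exposed by Corollary~\ref{cor:ordered-non-exposed}, and any $\O$-oriented algorithm claims at most one basket, so Lemma~\ref{lemma:nested-key} makes its expected welfare equal to $\E[(\kappa^{(1)})^+]$, the maximum achievable). Under orientation $(i,j)$ this welfare is $1 - \alpha$, while under orientation $(j,i)$ it is $\alpha - \alpha^2$. Therefore the ratio of the larger-index orientation's welfare to the other orientation's welfare is at most $\alpha$, which tends to $0$ as $\alpha \to 0$, establishing the corollary.

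The ``main obstacle'' is really just the sign/ordering check that the larger Weitzman index ends up on the strictly worse orientation; all the quantitative work has already been done in the proof of Theorem~\ref{thm:no-dessert}. No additional machinery is needed beyond Lemma~\ref{lemma:nested-key} and Corollary~\ref{cor:ordered-non-exposed}, both of which apply on the one-edge instance without change.
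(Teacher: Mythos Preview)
Your proposal is correct and follows exactly the paper's approach: the corollary is stated immediately after the proof of Theorem~\ref{thm:no-dessert} as a direct read-off of Instance~1, using the recorded values $\sigma_{ij}^{(1)} = \tfrac{1}{\alpha}-1$, $\sigma_{ji}^{(1)} = \tfrac{1}{\alpha^2}-\tfrac{1}{\alpha}$ and the welfare ratio $\alpha$. Your explicit inequality check $(\alpha-1)^2>0$ confirming $\sigma_{ji}^{(1)}>\sigma_{ij}^{(1)}$ is the only detail the paper leaves implicit.
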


	\section{Discussion}

\paragraph{Summary}
Prior work had considered matching with a single Pandora box on each edge.
A satisfactory solution exists for that problem with by-now standard techniques for information-acquisition settings.
In this paper, we considered Pandora's Matching Problem, a two-sided matching setting in which each participant $i$ and $j$ in a potential match $\{i,j\}$ have an initially-hidden value for the match, modeled as one Pandora box on each endpoint of an edge.
We allowed for negative values, an important case for settings such as labor markets.

We found that this problem poses significant challenges for standard approaches in ways that likely have implications for the design of matching platforms.
In particular, the following approaches all fail to guarantee any positive approximation factor.
\begin{enumerate}
	\item A \emph{bundled-box} approach, where inspections of both boxes incident to an edge must occur simultaneously.
		This suggests that interview stages on matching platforms can be costly in terms of welfare, while asynchronous inspections should be enabled.
	\item A \emph{vertex-based} approach, where decisions are made based on values and Weitzman indices at the endpoints, but the algorithm does not consider the interaction of the value distributions of the endpoints.
		(This approach gives an approximation factor in the positive-values setting.)
		This suggests that platforms should treat an edge not as two separate boxes, but as one basket of \emph{nested boxes} that undergoes a multi-stage information-gathering process.
	\item An \emph{edge-based} approach, where the order of inspection within an edge is prescribed by some deterministic rule independent of the rest of the graph.
		This suggests that platforms should coordinate the order of inspections within edges (i.e $i$ first or $j$ first) in a global fashion.
\end{enumerate}

To evade all of these pitfalls, we proposed a Best-of-Two-Worlds algorithm that: (1) dynamically and partially inspects edges, only coming back to complete an edge's inspection if prudent; (2) treats each edge as a multi-stage Pandora basket with nested boxes; (3) coordinates the decision of how to orient each edge globally by considering the ``two worlds''.
We showed that this algorithm gives an efficient $1/4$-approximation for Pandora's Matching Problem.

\subsection{Future Directions}

A number of questions remain for two-sided matching with inspection costs.
One interesting question is in the difficulty of the Pandora's Matching Problem itself.
Is there an algorithm for the Pandora's Matching Problem that beats the $1/2$-approximation factor from greedy matching?
\begin{openprob}
	Is Pandora's Matching Problem \textsf{\APX}-hard, or can a PTAS be found?
\end{openprob}
It is also open to give a substantially different approximation algorithm than Best-of-Two-Worlds.

\paragraph{Extensions}
There are other extensions of the Pandora's Box model which may also be interesting for modeling two-sided matching settings.
The \emph{nonobligatory} setting allows boxes to be claimed uninspected.
If, for example, a student knows she likely has a high value for attending a certain school, she may forgo the cost of visiting the school and directly accept an offer.
The simpler \emph{Pandora's Box with Nonobligatory Inspection} problem without the matching constraint is already known to be \NP-hard \citep{fu2023pandora}, but existing approximation schemes \citep{fu2023pandora, beyhaghi2023pandora} give hope that approximation may be tractable in the matching setting as well.

Section \ref{sec:unordered} showed that edge-based fixed-orientation algorithms are, in general, arbitrarily suboptimal.
However, it is possible that under natural restrictions on the value distributions, such algorithms do achieve constant-factor approximations.
Understanding settings in which such algorithms can safely be applied could have implications for design of platforms which fix an orientation for practicality.

	\subsection*{Acknowledgements}
	This work was supported by the National Science Foundation Award \#2329431.

	\bibliographystyle{ACM-Reference-Format}
	\bibliography{citations}

	\appendix

	\section{Vertex-Based Algorithms \& Proofs}\label{sec:separate-models-appdx}

This section contains a description of the vertex-based algorithm from Section \ref{subsec:vertex-based} and its analysis, as well as further omitted proofs from Section \ref{subsec:vertex-based}.

\subsection{1/4-Approximation for Positive-Values}\label{subsec:positive-values-appdx}

We define our simplification of the mechanism given in \citet{bowers2023high} and \citet{waggoner2019matching} here, and prove it obtains a $1/4$-approximation in the positive-values setting.

\begin{algdef}[Vertex-Based Descending Procedure]\label{alg:quarter-approx}
	While there is a feasible edge that is not matched, consider the highest $\sigma_{ij}$ of unopened boxes and the highest $v_{k\ell}$ of opened boxes.
	If $\sigma_{ij} > v_{k\ell}$, open box $(i,j)$.
	Otherwise, match the edge $\{k,\ell\}$, opening the $(\ell, k)$ endpoint if necessary.
\end{algdef}

This procedure is similar to the descending algorithm given by \citet{kleinberg2016descending}: repeatedly open or claim the highest-available box.
It is also vertex-based, according to Definition \ref{def:vertex-based}.

We will show that the vertex-based descending procedure is a $1/4$-approximation for the positive-valued setting, but first we prove several helpful intermediate facts.

\begin{observation}\label{obs:pos-approx-non-exposed}
	The Vertex-Based Descending Procedure \ref{alg:quarter-approx} is non-exposed on every ordered pair $(i,j)$.
\end{observation}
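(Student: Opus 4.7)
The plan is to show that whenever the algorithm opens box $(i,j)$ and observes $v_{ij} > \sigma_{ij}$, the edge $\{i,j\}$ must be matched on the very next iteration, so $\A_{ij} = 1$. First I would fix a realization and let $t$ be the step at which the algorithm inspects $(i,j)$; by the branching rule of Algorithm \ref{alg:quarter-approx}, $\sigma_{ij}$ is the largest index among unopened boxes and it strictly exceeds $M_t$, the largest value among opened boxes at step $t$.

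Next I would analyze the state at step $t+1$. Since $v_{ij} > \sigma_{ij} > M_t$, the maximum revealed value after the inspection becomes $v_{ij}$. Meanwhile, $\sigma_{ij}$ was the top of the unopened-index list at step $t$, so after removing it the maximum remaining unopened index is at most $\sigma_{ij}$; in particular it is strictly less than $v_{ij}$. The matching branch of Algorithm \ref{alg:quarter-approx} therefore fires at step $t+1$, and the edge it selects is the one whose endpoint carries the highest opened value, namely $\{i,j\}$ itself.

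The main (mild) obstacle is to verify feasibility at step $t+1$: I need to rule out the possibility that $\{i,j\}$ has already become incident to a matched edge. This follows because only one action (either an opening or a matching) occurs per step, and step $t$ was an opening; no other edge was matched between the two steps, so the feasibility of $\{i,j\}$ at step $t$ (implicit in the fact that $(i,j)$ was even eligible for consideration) carries over. Once feasibility is confirmed, the algorithm may open $(j,i)$ if necessary and then add $\{i,j\}$ to the matching, giving $\A_{ij} = \A_{ji} = 1$, as required. Since the ordered pair $(i,j)$ was arbitrary, the non-exposure property holds for every box.
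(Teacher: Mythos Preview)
Your argument is correct and matches the paper's: after opening $(i,j)$ with $v_{ij} > \sigma_{ij}$, the revealed value exceeds every remaining unopened index, so the very next step of the descending procedure matches $\{i,j\}$. Your feasibility check at step $t{+}1$ is a nice addition that the paper leaves implicit; you might also note (trivially) that if $(i,j)$ is ever opened via the matching branch as the secondary endpoint $(\ell,k)$, the edge is claimed in that same step, so non-exposure is immediate there as well.
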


\begin{proof}
	If the algorithm inspects $(i,j)$ and finds that $v_{ij}\geq \sigma_{ij}$, the pair is added back at the top of the list as $\sigma_{ij}$ was previously the largest entry in the list.
	At the next step, the algorithm will examine the same pair again, find that it has already been inspected, and add it to the matching.
\end{proof}

We compare the welfare of this algorithm to the greedy matching algorithm on edges with weights $w_{\{i,j\}} = \max(\kappa_{ij}, \kappa_{ji})$.
We call this the max-$\kappa$-Greedy matching and denote it by the random variable $\A^G$.
It is unclear whether there exists a box-opening algorithm which achieves this matching, since the greedy algorithm relies on revealed capped values $\kappa$ which are not initially accessible to an algorithm.
We show that the vertex-based matching algorithm, in fact, produces a matching which is exactly the greedy-by-max-kappa matching.

\begin{claim}
	With probability one, the vertex-based descending procedure \ref{alg:quarter-approx} produces the same matching as max-$\kappa$-Greedy for all realizations of values.
\end{claim}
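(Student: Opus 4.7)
The plan is to prove the claim by induction on the number of matches, showing that at each step the descending procedure next matches the feasible edge $\{i,j\}$ that maximizes $\mu_{\{i,j\}} := \max(\kappa_{ij},\kappa_{ji})$ — which is exactly the next edge chosen by max-$\kappa$-Greedy. Inductively assume that after $k$ matches both algorithms have produced the same matched set; then their remaining ``feasible'' subgraphs coincide, and it suffices to compare their $(k+1)$th choices. Because values are drawn from continuous distributions, with probability one all relevant comparisons are strict, so the max-$\mu$ feasible edge is unique; discrete distributions can be handled by a consistent tie-breaking rule.

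The core step is to analyze the state of the priority queue at the moment the descending procedure matches an edge $\{i,j\}$, say because box $(i,j)$'s value $v_{ij}$ has just reached the top. I would split into two cases based on the relation between $v_{ij}$ and $\sigma_{ij}$. In Case 1 ($v_{ij}\geq\sigma_{ij}$, so $\kappa_{ij}=\sigma_{ij}$), the match occurs on the very next iteration after $(i,j)$ is opened, since after opening the box's priority becomes $v_{ij}\geq\sigma_{ij}$ and stays at the top. At the moment of that opening, $\sigma_{ij}$ was the maximum priority in the queue, and every other feasible box $(a,b)$ had priority $\sigma_{ab}$ or $v_{ab}$ — in either case at least $\kappa_{ab}$. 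Therefore $\kappa_{ab}\leq\sigma_{ij}$ for every feasible box, including $(j,i)$, so $\mu_{\{i,j\}}=\sigma_{ij}$ and $\mu_{\{a,b\}}\leq\sigma_{ij}=\mu_{\{i,j\}}$ for every other feasible edge.

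In Case 2 ($v_{ij}<\sigma_{ij}$, so $\kappa_{ij}=v_{ij}$), at matching time $v_{ij}$ itself dominates every other priority, and the same reasoning yields $\kappa_{ab}\leq v_{ij}=\kappa_{ij}\leq\mu_{\{i,j\}}$ for every feasible box $(a,b)$, giving $\mu_{\{a,b\}}\leq\mu_{\{i,j\}}$. In either case the edge matched next by the descending procedure maximizes $\mu$ among feasible edges, matching greedy's choice and closing the induction.

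The main obstacle I expect is the time-alignment claim used in Case 1: one must verify that no other feasible edge is matched or invalidated between the opening of $(i,j)$ and the matching of $\{i,j\}$, so that ``feasible at opening'' and ``feasible at matching'' coincide. This is exactly where Observation~\ref{obs:pos-approx-non-exposed} is useful — whenever $v_{ij}\geq\sigma_{ij}$, non-exposure forces the match to occur on the iteration immediately following the opening, so no feasibility changes can intervene.
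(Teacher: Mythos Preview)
Your proof is correct and follows essentially the same approach as the paper's: both hinge on the observation that a box's current priority in the queue (either $\sigma_{ab}$ if unopened or $v_{ab}$ if opened) is always an upper bound on $\kappa_{ab}$, so the matched endpoint's $\kappa$ dominates all other feasible boxes' $\kappa$-values. The only cosmetic difference is that you argue directly via a case split on whether $\kappa_{ij}=\sigma_{ij}$ or $\kappa_{ij}=v_{ij}$, whereas the paper argues by contradiction, splitting on which of $\sigma_{ij}$ or $v_{ij}$ falls below the hypothetical larger $\kappa_{k\ell}$; these are contrapositives of one another, and your explicit handling of the time-alignment via Observation~\ref{obs:pos-approx-non-exposed} is, if anything, a bit cleaner.
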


\begin{proof}
	Fix some realization of values, and suppose the vertex-based algorithm \ref{alg:quarter-approx} matches the edge $\{i, j\}$.
	Without loss of generality, assume the edge was matched from the $(i,j)$ endpoint, so $\kappa_{ij} = \max(\kappa_{ij}, \kappa_{ji})$.

	We show that $\kappa_{ij}\geq \kappa_{k\ell'}$ for all other feasible pairs $(k, \ell)$.

	Assume by way of contradiction that there exists another ordered pair $(k, \ell)$ such that $\kappa_{ij} < \kappa_{k \ell}$.
	Then either $\sigma_{ij} < \kappa_{k \ell}$ or $v_{ij} < \kappa_{k\ell}$.

	If $\sigma_{ij} < \kappa_{k \ell}$, then $\sigma_{ij} < \sigma_{k \ell}$ and the algorithm would have inspected $(k, \ell)$ before inspecting $(i, j)$.
	After inspecting, the algorithm would find $v_{k\ell} > \sigma_{ij}$ and would match $\{k, \ell\}$ before matching $\{i, j\}$, a contradiction.

	Similarly, if $v_{ij} < \kappa_{k\ell}$, then $v_{ij}< \sigma_{k \ell}$ and the algorithm would have inspect $(k, \ell)$ before matching $\{i, j\}$.
	After inspecting, the algorithm would reinsert $(k, \ell)$ into the list with value $v_{k\ell}>v_{ij}$, and would match $\{k, \ell\}$ before matching $\{i, j\}$, also a contradiction.

	The algorithm adds the feasible edge endpoint $\{i,j\}$ with the largest $\kappa_{ij}$ as the next edge to the matching.
	Since and edge is listed twice by $(i, j)$ and $(j, i)$, the next added edge then maximizes $\max(\kappa_{ij}, \kappa_{ji})$, and the matching created is exactly max-$\kappa$-Greedy.
\end{proof}

Finally, we use these facts about the Vertex-Based Descending Price algorithm to prove our 1/4-approximation result.

\begin{proof}[Proposition \ref{prop:pos-approx}]
	We denote the optimal matching produced by any box-opening algorithm $\A^{\Opt}$.
	By Lemma \ref{lemma:KWW-key}, the value of the optimal policy is $\Opt \leq \E[\sum_{(i,j)\in E}\A^\Opt_{ij}(\kappa_{ij} + \kappa_{ji})]$.
	Since the algorithm is non-exposed with positive valuations, the expected value of the algorithm is
	\begin{align*}
		\E\left[\sum_{(i,j)}\A_{ij}v_{ij} - \I_{ij}c_{ij}\right] &= \E\left[\sum_{\{i,j\}\in E}\A_{ij}(\kappa_{ij} + \kappa_{ji})\right]\\
		&\geq \E\left[\sum_{\{i,j\}\in E}\A_{ij}\max(\kappa_{ij}, \kappa_{ji})\right]\\
		&= \E\left[\sum_{\{i,j\}\in E}\A^\text{G}_{ij}\max(\kappa_{ij}, \kappa_{ji})\right],
	\end{align*}
	where $\A^{G}_{ij}$ is the indicator for max-$\kappa$-Greedy.
	Since a greedy matching algorithm is a 1/2-approximation of the optimal matching on the given edge weight, this is at least half the value of the optimal matching $\A^*$ on max-kappa-weighted edges.
	The welfare optimal matching on max-$\kappa$-weighted edges is, in turn, an upper bound on the welfare of the optimal matching on edges with different weights, particularly the weights $w_{\{i,j\}} = \kappa_{ij}+ \kappa_{ji}$.
	Combining these steps mathematically then,
	\begin{align*}
		\E\left[\sum_{(i,j)\in E}\A^\text{G}_{ij}\max(\kappa_{ij}, \kappa_{ji})\right] &\geq \frac{1}{2} \E\left[\sum_{(i,j)\in E}\A^*_{ij}\max(\kappa_{ij}, \kappa_{ji})\right]  \\
		&\geq \frac{1}{2} \E\left[\sum_{(i,j)\in E}\A^\Opt_{ij}\max(\kappa_{ij}, \kappa_{ji})\right]\\
		&\geq \frac{1}{4} \E\left[\sum_{(i,j)\in E}\A^\Opt_{ij}(\kappa_{ij} +  \kappa_{ji})\right] \geq \frac{\Opt}{4}.
	\end{align*}
\end{proof}

This reliance on the max-endpoint greedy matching causes the Vertex-Oriented Descending Procedure (Definition \ref{alg:quarter-approx}) to immediately fail on instances where values may be negative.

\subsection{Analysis of Indistinguishable Edge}\label{subsec:indistinguishable-appdx}

We use a single edge for which the endpoints have a unique optimal sequence of inspection, but which cannot be distinguished by an edge-based algorithm.

The idea will be that the edge's endpoints are indistinguishable to vertex-based algorithms: they have the same cost of inspection and Weitzman index.
However, to achieve nontrivial welfare, it is crucial to inspect the endpoints in the correct order based on their differing distributions of values.

\begin{example}[Indistinguishable Edge ]\label{ex:vertex-based-counterexample}
	The graph consists of a single edge $\{i,j\}$.
	The Pandora box $(j,i)$ has deterministic value $v_{ji} = 2$ and cost $c_{ji} = 1$.
	The Pandora box $(i,j)$ has value $v_{ij} = 9$ with probability $1/8$ and $v_{ij} = -3$ otherwise, and has cost $c_{ij} = 1$.
\end{example}

We analyze Example \ref{ex:vertex-based-counterexample} here, and find that an optimal algorithm can achieve a positive value by correctly ordering the inspection of endpoints, while a vertex-based algorithm cannot distinguish the endpoints to choose the correct order.

Similar to the proof of Proposition \ref{prop:simul-fails}, the key is to observe that one direction of inspection obtains information immediately while the other pays an extra cost for unhelpful information.
Randomizing the inspection order is not enough to obtain a constant approximation because one direction has negative welfare.

Faced with the possibility of inspecting the wrong order, any inspecting vertex-based algorithm expects to receive a negative welfare and would rather do nothing.
This will hold even if the vertex-based algorithm is randomized.
We break this proof down into an analysis of the optimal solution, the claim that a vertex-based algorithm cannot distinguish the endpoints, and combine the two for a proof of Proposition \ref{prop:vertex-based-fails}.

\begin{claim}\label{claim:indistinguishable-optimal-utility}
	\Opt{} on Example \ref{ex:vertex-based-counterexample} achieves welfare $1/4$.
\end{claim}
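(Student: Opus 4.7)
The plan is to enumerate the small set of strategies available on this single-edge instance and compute the expected welfare of each. Since $v_{ji} = 2$ deterministically, the only randomness comes from $v_{ij}$, and the algorithm's decision tree reduces to: (i) whether to inspect anything at all, (ii) which box to inspect first, and (iii) after each inspection, whether to continue or stop. The ``do nothing'' strategy yields welfare $0$, so I only need to compare the two inspection orders and confirm that the best one achieves welfare exactly $1/4$.

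First I would analyze inspecting $(i,j)$ first. The algorithm pays cost $1$ and observes $v_{ij}$. If $v_{ij} = 9$ (probability $1/8$), then inspecting $(j,i)$ for cost $1$ and claiming the edge yields total contribution $9 + 2 - 1 - 1 = 9$, whereas stopping yields $-1$, so the algorithm continues and matches. If $v_{ij} = -3$ (probability $7/8$), then continuing and matching yields $-3 + 2 - 1 - 1 = -3$, while stopping yields $-1$, so the algorithm stops. The expected welfare of this branch is
\[
\tfrac{1}{8}(9) + \tfrac{7}{8}(-1) = \tfrac{9-7}{8} = \tfrac{1}{4}.
\]

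Next I would analyze inspecting $(j,i)$ first. The algorithm pays cost $1$ and deterministically learns $v_{ji} = 2$. Stopping here gives welfare $-1$. Otherwise, it pays an additional cost $1$ to inspect $(i,j)$. If $v_{ij} = 9$, matching gives $11 - 2 = 9$; if $v_{ij} = -3$, matching gives $-1 - 2 = -3$ while not matching gives $-2$, so the algorithm does not match. The expected welfare conditional on proceeding is $\tfrac{1}{8}(9) + \tfrac{7}{8}(-2) = -5/8$, and the best choice in this branch is $\max\{-1, -5/8\} = -5/8$.

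Comparing across strategies, the maximum expected welfare is $1/4$, achieved by inspecting $(i,j)$ first. There is no real obstacle in the proof beyond the case analysis: the strategy space collapses to a handful of deterministic branches because the only random quantity is $v_{ij}$, and optimality is just the pointwise maximum over these branches.
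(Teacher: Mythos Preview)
Your proof is correct and follows essentially the same case analysis as the paper: both enumerate the ``do nothing'' option and the two inspection orders, verify the optimal continuation after each observation, and arrive at the values $1/4$ and $-5/8$ respectively. Your treatment is slightly more explicit in checking the optimal continuation in each subcase, but the argument is the same.
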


\begin{proof}
	We consider the options available to an algorithm: starting its inspection with vertex $i$ and with $j$.
	(The algorithm can also do nothing, for welfare zero.)

	If it inspects vertex $i$ first, then it either finds $v_{ij} = 9$ or $v_{ij} = -3$.
	If $v_{ij} = -3$, it should terminate immediately as the value from $v_{ji}$ will not cover the loss from $v_{ij}$.
	If $v_{ij} = 9$, however, it should inspect $(j,i)$ and claim the edge for a value of $11$.
	Thus by inspecting vertex $i$ first, and then inspecting and matching $j$ if and only if $v_{ij} = 9$, the algorithm gets expected value $\E[\A(v_{ij} + v_{ji} - c_{ji})] - c_{ij} = (9 + 2 - 1)/8 - 1 = 1/4$.

	If the algorithm instead inspects $j$ first, it gains no information about $i$ for a cost of $1$.
	If it chooses not to inspect $i$, then its net welfare is $-1$.
	On the other hand, if it inspects $i$, it has a $1/8$ chance of finding a high value worth claiming the edge.
	This gives an overall value of $\E[\A(v_{ij} + v_{ji})] - c_{ji} - c_{ij} = -5/8$.
	While better than not performing the second inspection, it is still better to perform no inspections at all than to start from $(j,i)$.
\end{proof}

\begin{lemma}\label{lemma:vertex-based-bad}
	Prior to inspection, any vertex-based algorithm cannot distinguish the endpoints of the edge in Example \ref{ex:vertex-based-counterexample}.
\end{lemma}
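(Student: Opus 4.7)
The plan is to verify directly from Definition \ref{def:vertex-based} that, before any inspection has taken place, a vertex-based algorithm has access to exactly the same summary statistics for endpoint $(i,j)$ as for endpoint $(j,i)$, namely the pair consisting of the inspection cost and the Weitzman index. I will then note that a vertex-based algorithm's choice of which endpoint to inspect first (or whether to inspect at all) must be a function of these inputs alone, since the distributions $D_{ij}, D_{ji}$ are explicitly hidden from it.

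The first step is to read the costs off Example \ref{ex:vertex-based-counterexample}: $c_{ij} = c_{ji} = 1$. The second step is to compute the two Weitzman indices via Definition \ref{def:weitz-index}. For box $(j,i)$, the value $v_{ji} = 2$ is deterministic, so $\sigma_{ji}$ solves $(2 - \sigma_{ji})^+ = 1$, giving $\sigma_{ji} = 1$. For box $(i,j)$, I look for $\sigma_{ij} \in (-3, 9)$, in which range only the high realization contributes; then $\sigma_{ij}$ solves $\tfrac{1}{8}(9 - \sigma_{ij}) = 1$, yielding $\sigma_{ij} = 1$ as well. So both endpoints present the algorithm with the identical pair $(c, \sigma) = (1, 1)$.

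Finally, I argue indistinguishability at the level of the algorithm's behavior. Since the vertex-based algorithm's inputs restricted to these two endpoints are symmetric under swapping the labels $(i,j) \leftrightarrow (j,i)$, and by Definition \ref{def:vertex-based} no other feature of the endpoints is available, the induced distribution over the algorithm's first action must also be invariant under this swap. In particular, the algorithm's conditional probability of choosing to inspect $(i,j)$ first, given it inspects anything, equals its conditional probability of choosing $(j,i)$ first. I do not foresee any real obstacle: the content of the lemma is almost entirely the two-line Weitzman-index calculation, together with a careful invocation of the vertex-based restriction to rule out distribution-dependent tie-breaking.
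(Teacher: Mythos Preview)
Your proposal is correct and follows essentially the same approach as the paper: both proofs simply read off $c_{ij}=c_{ji}=1$ and compute $\sigma_{ij}=\sigma_{ji}=1$ directly from Definition~\ref{def:weitz-index}, then conclude indistinguishability from Definition~\ref{def:vertex-based}. Your added remark about label-swapping symmetry is a slight elaboration on the same idea but not a different route.
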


\begin{proof}
	It is sufficient to show that the values that the algorithm has access to prior to inspection ($c_{ij}$, $\sigma_{ij}$) are the same for both endpoints.
	The costs $c_{ij},c_{ji}$ are both deterministically $1$ and therefore indistinguishable, so we compute the indices $\sigma_{ij}$.

	For $(j,i)$, $\sigma_{ji}$ is the solution to $1 = \E[(v_{ji} - \sigma_{ji})^+] = (2 - \sigma_{ji})^+$, so $\sigma_{ji} = 1$.
	For the endpoint $(i,j)$, we solve $1 = \E[(v_{ij} - \sigma_{ij})^+] = \tfrac{1}{8}(9 - \sigma_{ij})^+$, so $\sigma_{ij} = 1$ as well.
\end{proof}

\begin{proof}[Proposition \ref{prop:vertex-based-fails}]
	Given an algorithm, we present it with Example \ref{ex:vertex-based-counterexample}, uniformly randomizing the order of the endpoints (i.e. the identities of $i$ and $j$).

	Any algorithm which performs no inspections gets value $0$.

	Otherwise, since the algorithm cannot distinguish the endpoints due to our randomization, it has a $1/2$ probability of beginning its inspection with vertex $i$ and $1/2$ with $j$.
	The best any algorithm can do, if it does inspect, is the average of the best outcomes if it began inspection from $i$ or from $j$:
	\begin{align*}
		\frac{1}{2}\left(\E[\A(v_{ij} + v_{ji} - c_{ji})] - c_{ij}\right) + \frac{1}{2}\left(\E[\A(v_{ij} + v_{ji})] - c_{ji} - c_{ij}\right) = \frac{1}{8} - \frac{5}{16} = -3/16.
	\end{align*}
	Therefore, the optimal vertex-based algorithm on this instance performs no inspections.
\end{proof}

	\section{Omitted Proofs for Nested Pandora's Box}\label{sec:nested-appdx}

This section provides the proofs omitted from Section \ref{sec:multistage}.

First, we prove a fact that will be helpul in the proof of the key lemma, Lemma \ref{lemma:nested-key}.
\begin{lemma}\label{lemma:utility-kappas}
	The expected welfare contribution of box $i$ under any algorithm satisfies
	\begin{equation*}
		\E\left[\A_iv_i - \sum_{\ell = 1}^d \I_i^{(\ell)}c_i^{(\ell)}\right]
		= \E\left[\A_iv_i - \sum_{\ell = 1}^d \I_i^{(\ell)}\left(\kappa_i^{(\ell+1)}-\kappa_i^{(\ell)}\right)\right] .
	\end{equation*}
\end{lemma}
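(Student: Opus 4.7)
The plan is to prove the identity term-by-term: it suffices to show that for each $\ell \in \{1,\dots,d\}$,
\[ \E\bigl[\I_i^{(\ell)} c_i^{(\ell)}\bigr] \;=\; \E\bigl[\I_i^{(\ell)} (\kappa_i^{(\ell+1)} - \kappa_i^{(\ell)})\bigr]. \]
The first step is purely algebraic: unfold the definition $\kappa_i^{(\ell)} = \min\{\sigma_i^{(\ell)}, \kappa_i^{(\ell+1)}\}$, which gives
\[ \kappa_i^{(\ell+1)} - \kappa_i^{(\ell)} \;=\; \kappa_i^{(\ell+1)} - \min\{\sigma_i^{(\ell)},\kappa_i^{(\ell+1)}\} \;=\; (\kappa_i^{(\ell+1)} - \sigma_i^{(\ell)})^+ . \]
So the right-hand side becomes $\E[\I_i^{(\ell)}(\kappa_i^{(\ell+1)} - \sigma_i^{(\ell)})^+]$, and the task is to match this to $\E[\I_i^{(\ell)} c_i^{(\ell)}]$ using Definition \ref{def:nested-index}.

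The second step conditions on the signals $s_i^{(1)},\dots,s_i^{(\ell-1)}$ revealed at the first $\ell-1$ stages of basket $i$. By Definition \ref{def:nested-index}, conditional on this information, both $\sigma_i^{(\ell)}$ and $c_i^{(\ell)}$ are determined, and the defining identity gives
\[ \E\bigl[(\kappa_i^{(\ell+1)} - \sigma_i^{(\ell)})^+ \,\bigm|\, s_i^{(1)},\dots,s_i^{(\ell-1)}\bigr] \;=\; c_i^{(\ell)} . \]
The crucial observation, as flagged in the lemma itself, is that conditional on $s_i^{(1)},\dots,s_i^{(\ell-1)}$, the indicator $\I_i^{(\ell)}$ is independent of $\kappa_i^{(\ell+1)}$: the decision to open stage $\ell$ can depend on past signals of basket $i$ and on anything from other (independent) baskets, but not on $s_i^{(\ell)},\dots,s_i^{(d-1)}, v_i$, which are the only randomness in $\kappa_i^{(\ell+1)}$. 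Multiplying the two conditional expectations and then applying the tower property yields
\[ \E\bigl[\I_i^{(\ell)} (\kappa_i^{(\ell+1)} - \sigma_i^{(\ell)})^+\bigr] \;=\; \E\bigl[\I_i^{(\ell)} \cdot c_i^{(\ell)}\bigr], \]
which is exactly the desired per-stage equality. Summing over $\ell$ and subtracting from $\E[\A_i v_i]$ completes the proof.

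The main obstacle I expect is spelling out the conditional independence argument cleanly, since $\I_i^{(\ell)}$ is a random variable produced by an adaptive algorithm and could in principle depend on many things; the key point is simply that an algorithm cannot see future signals of basket $i$ before deciding whether to pay $c_i^{(\ell)}$. In the generalized version mentioned in the remarks (where $c_i^{(\ell)}$ is itself a function of $s_i^{(1)},\dots,s_i^{(\ell-1)}$), nothing changes because $c_i^{(\ell)}$ then moves out of the inner conditional expectation exactly as $\sigma_i^{(\ell)}$ does. Everything else is routine bookkeeping.
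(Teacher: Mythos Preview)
Your proposal is correct and follows essentially the same approach as the paper: both reduce to the per-stage identity $\E[\I_i^{(\ell)} c_i^{(\ell)}] = \E[\I_i^{(\ell)}(\kappa_i^{(\ell+1)}-\sigma_i^{(\ell)})^+]$ via the algebraic fact $(\kappa_i^{(\ell+1)}-\sigma_i^{(\ell)})^+ = \kappa_i^{(\ell+1)}-\kappa_i^{(\ell)}$, the defining equation for $\sigma_i^{(\ell)}$, and the conditional independence of $\I_i^{(\ell)}$ and $\kappa_i^{(\ell+1)}$ given $s_i^{(1)},\dots,s_i^{(\ell-1)}$. The only cosmetic difference is that the paper spells out the conditional-independence step by introducing an explicit independent ``redraw'' space $s_{i,\ell}'$ before collapsing it back, whereas you go directly via the tower property; your version is the more streamlined of the two.
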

Intuitively, this proof is relatively direct from the definition of nested Weitzman index,
  \[ \E\left[ \left(\kappa_i^{(\ell+1)} - \sigma_i^{(\ell)}\right)^+ ~\vline~ s_i^{(1)},\dots,s_i^{(\ell-1)} \right] = c_i^{\ell} , \]
along with the observation that $\I_i^{(\ell)}$ is independent of $(\kappa_i^{(\ell+1)} - \sigma_i^{(\ell)})^+$ conditioned on $s_i^{(1)},\dots,s_i^{(\ell-1)}$.
(This follows because the decision to open box $\ell$ comes prior to observing any of the signals $s_i^{(\ell)},\dots,s_i^{(d-1)}$ and value $v_i$ that define $\kappa_i^{(\ell+1)}$.)
However, the precise argument is subtle, so we go somewhat slowly and carefully.
\begin{proof}
	To simplify notation for this proof, let $s_{i,\ell} := (s_i^{(1)}, \dots, s_i^{(\ell-1)})$.

	We first expand the second term to explicitly represent the dependence on received signals.
	We use $\E_{s_{i,\ell}} \left[ \E[ \cdot \mid s_{i,\ell}] \right]$ to denote an outer expectation taken over the random draws of $s_i^{(1)},\dots,s_i^{(\ell-1)}$, and an inner expectation of $(\cdot)$ conditioned on $s_i^{(1)},\dots,s_i^{(\ell-1)}$ and taken over all other randomness.
	\begin{align*}
		\E\left[\sum_{\ell = 1}^d \I_i^{(\ell)}c_i^{(\ell)}\right]
		&= \sum_{\ell = 1}^d \E_{s_{i,\ell}} \left[ \E \left[ \I_i^{(\ell)} ~\vline~ s_{i,\ell} \right] c_i^{(\ell)} \right] .
	\end{align*}
	Now, we substitute in for $c_i^{(\ell)}$ using the definition of the nested Weitzman index and capped value.
	To be precise, it is important to note that the randomness in the definition of $c_i^{(\ell)}$ is over a different, independent space, whose first $\ell-1$ signals we will denote analogously by $s_{i,\ell}'$.
	\begin{align}
		\E_{s_{i,\ell}} \left[ \E \left[ \I_i^{(\ell)} ~\vline~ s_{i,\ell} \right] c_i^{(\ell)} \right]
		&= \E_{s_{i,\ell}} \left[ \E \left[ \I_i^{(\ell)} ~\vline~ s_{i,\ell} \right] \E_{s_{i,\ell}'} \left[ \E\left[ \left(\kappa_i^{(\ell+1)} - \sigma_i^{(\ell)}\right)^+ ~\vline~ s_{i,\ell}' \right] \right] \right]  \nonumber \\
		&= \E_{s_{i,\ell}} \E_{s_{i,\ell}'} \left[ \E \left[ \I_i^{(\ell)} ~\vline~ s_{i,\ell} \right] \E \left[ \left(\kappa_i^{(\ell+1)} - \sigma_i^{(\ell)}\right)^+ ~\vline~ s_{i,\ell}' \right] \right]  \nonumber \\
        &= \E_{s_{i,\ell}} \left[ \E \left[ \I_i^{(\ell)} ~\vline~ s_{i,\ell} \right] \E \left[ \left(\kappa_i^{(\ell+1)} - \sigma_i^{(\ell)}\right)^+ ~\vline~ s_{i,\ell} \right] \right]  \label{eqn:redraw} \\
        &= \E_{s_{i,\ell}} \left[ \E \left[ \I_i^{(\ell)} \left(\kappa_i^{(\ell+1)} - \sigma_i^{(\ell)}\right)^+ ~\vline~ s_{i,\ell} \right] \right]  \label{eqn:cond-indep} \\
        &=  \E \left[ \I_i^{(\ell)} \left(\kappa_i^{(\ell+1)} - \sigma_i^{(\ell)}\right)^+ \right] .  \nonumber
	\end{align}
	Equality \ref{eqn:redraw} follows because the worlds of $s_{i,\ell}$ and $s_{i,\ell}'$ have the same distribution.
	Equality \ref{eqn:cond-indep} follows by conditional independence, as discussed above.
	It only remains to observe that $(\kappa_i^{(\ell+1)} - \sigma_i^{(\ell)})^+ = \kappa_i^{(\ell+1)} - \min\{\kappa_i^{(\ell+1)}, \sigma_i^{(\ell)} \} = \kappa_i^{(\ell+1)} - \kappa_i^{(\ell)}$.
\end{proof}

We use this to prove the key lemma \ref{lemma:nested-key} relating actual utility to an amortized version, which is used throughout the paper.

\begin{proof}[Lemma \ref{lemma:nested-key}]
	First, Lemma \ref{lemma:utility-kappas} gives
	\begin{align*}
		\E\left[\A_iv_i - \sum_{\ell = 1}^d \I_i^{(\ell)}c_i^{(\ell)}\right] 	&= \E\left[\A_iv_i - \sum_{\ell = 1}^d \I_i^{(\ell)}\left(\kappa_i^{(\ell+1)}-\kappa_i^{(\ell)}\right)\right] .
	\end{align*}
	The proof is direct, but depends crucially on the fact that $\I_i^{(\ell)}$ and $\kappa_i^{(\ell+1)}$ are independent conditioned on $s_i^{(1)},\dots,s_i^{(\ell-1)}$.
	Let $\ell^*$ be a random variable denoting the last inspection stage the algorithm performs for basket $i$.
	If the algorithm performs no inspections on $i$, then $\ell^* = 0$, and if it performs all inspections (including the case where $\A_i = 1$), then $\ell^* = d$.
	We have
	\begin{align*}
		\E\left[\A_iv_i - \sum_{\ell = 1}^d \I_i^{(\ell)}c_i^{(\ell)}\right] &= \E\left[\A_iv_i - \sum_{\ell = 1}^{\ell^*} \left(\kappa_i^{(\ell+1)}-\kappa_i^{(\ell)}\right)\right]\\
		&=    \E\left[\A_i \kappa_i^{(d+1)} - \left(\kappa_i^{(\ell^*+1)} - \kappa_i^{(1)}\right)\right]  \\
		&\leq \E\left[\A_i\left(\kappa_i^{(d+1)} - \kappa_i^{(\ell^*+1)} + \kappa_i^{(1)}\right)\right]  \\
		&= \E\left[\A_i \kappa_i^{(1)} \right] ,
	\end{align*}
	where the inequality holds because $\A_i \geq 0$ and $\kappa_i^{(\ell)}$ is nondecreasing in $\ell$; and the final equality holds in both the case $\A_i = 0$ and the case $\A_i = 1$ (in which $\ell^* = d$).
	Furthermore, the inequality is strict if and only if, with nonzero probability, $\A_i = 0$ and $\kappa_i^{(\ell^* + 1)} > \kappa_i^{(1)}$ for some $\ell^* \in \{1,\dots,d\}$.
	Observe that $\kappa_i^{(1)} = \min\{ \gamma_i^{(\ell^*)}, \kappa_i^{(\ell^* + 1)} \}$, so this is equivalent to $\kappa_i^{(\ell^*+1)} > \gamma_i^{(\ell^*)}$ and $\I_i^{(\ell^*+1)} > \I_i^{(\ell^*)}$: the algorithm being exposed.
\end{proof}

\subsection{Descending Procedure Proofs}\label{subsec:descending-appdx}

We write $\ell_i(t) \geq 0$ to denote the number of opened boxes in basket $i$ at the beginning of step $t$.
For notational convenience, let $\sigma_i^t := \sigma_i^{(\ell_i(t)+1)}$, the index of the box we are \emph{considering} opening in basket $i$ at time $t$; Similarly, let $\kappa_i^t := \kappa_i^{(\ell_i(t)+1)}$ and $\gamma_i^t := \gamma_i^{(\ell_i(t)+1)}$.
We refer to $\sigma_i^t$ as the \emph{current Weitzman index} of basket $i$ at step $t$.

The proof of Lemma \ref{lemma:descending-nonexposed} is not as immediate as one might expect.
As mentioned above, there can be cases in which the index $\sigma_i^t$ decreases after advancing basket $i$, yet to avoid non-exposure, we must continue advancing.
A key observation is that the descending procedure always advances a basket with maximum $\gamma_i^t$.
However, this is not quite enough on its own either, as an increase in $\sigma_i^t$ does not correspond to an increase in $\gamma_i^t$.
The second key observation is that the descending procedure only switches away from a basket $i$ when $\gamma_i^t = \sigma_i^t$.
These points are captured in the following lemma.

\begin{lemma}\label{lemma:gamma-equality}
	At each time $t$, a descending procedure satisfies the following: it always advances an eligible basket $i$ with maximum $\gamma_i^t$, and for all other eligible $j\neq i$, $\gamma_j^t = \sigma_j^t$.
\end{lemma}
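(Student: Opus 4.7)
The plan is to prove both parts of the lemma jointly by induction on the step $t$. At $t=1$, no boxes have been opened anywhere, so $\ell_i(1)=0$ and $\gamma_i^1 = \sigma_i^1 = \sigma_i^{(1)}$ for every basket $i$; the maxima of $\sigma^1$ and of $\gamma^1$ coincide and the equality $\gamma_j^1 = \sigma_j^1$ holds for every eligible $j$, so the invariant is immediate.

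For the inductive step, assume the invariant at step $t$, and let $i^*$ be the basket advanced at step $t$. Since only $i^*$'s state changes, any $j \neq i^*$ still eligible at $t+1$ satisfies $\gamma_j^{t+1}=\gamma_j^t$ and $\sigma_j^{t+1}=\sigma_j^t$, while
\[
\gamma_{i^*}^{t+1} = \min\!\bigl(\gamma_{i^*}^t,\, \sigma_{i^*}^{t+1}\bigr),
\]
where $\sigma_{i^*}^{t+1}$ is the newly revealed index (potentially much smaller than $\sigma_{i^*}^t$ after a ``bad signal''). Let $k$ be the basket the descending procedure advances at step $t+1$, so $k$ attains the maximum of $\sigma^{t+1}$ among eligible baskets. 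I would split on whether $k = i^*$ or $k \neq i^*$. When $k = i^*$, part (b) follows directly from the inductive hypothesis applied to each remaining eligible $j$, and part (a) reduces to checking $\gamma_{i^*}^{t+1} \geq \gamma_j^{t+1}$ for every other eligible $j$; this holds because both terms in the $\min$ defining $\gamma_{i^*}^{t+1}$ dominate $\gamma_j^{t+1} = \sigma_j^{t+1}$, using $\gamma_{i^*}^t \geq \gamma_j^t$ (hypothesis) and $\sigma_{i^*}^{t+1} \geq \sigma_j^{t+1}$ (choice of $k = i^*$).

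The interesting case, and the main obstacle, is $k \neq i^*$: here the procedure has switched baskets, and I must verify $\gamma_{i^*}^{t+1} = \sigma_{i^*}^{t+1}$, i.e.\ that the freshly revealed index is the one attaining the minimum for $i^*$. The key chain is
\[
\sigma_{i^*}^{t+1} \;\leq\; \sigma_k^{t+1} \;=\; \sigma_k^t \;=\; \gamma_k^t \;\leq\; \gamma_{i^*}^t,
\]
which uses, in order: maximality of $k$ at step $t+1$; the fact that $k$'s state is unchanged from $t$ to $t+1$; the inductive hypothesis $\gamma_k^t = \sigma_k^t$ applied to the non-advanced $k$; and the inductive hypothesis that $\gamma_{i^*}^t$ is maximum. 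The same display simultaneously establishes part (a) at step $t+1$, since $\gamma_k^{t+1} = \sigma_k^{t+1}$ dominates both $\gamma_{i^*}^{t+1} = \sigma_{i^*}^{t+1}$ and every other $\gamma_j^{t+1} = \sigma_j^{t+1}$.

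Conceptually, the obstacle is exactly the one flagged before the statement: after a bad signal, $\sigma_{i^*}$ may drop sharply, so tracking $\sigma$-values alone loses the control on the earlier indices of basket $i^*$. The joint invariant ``$\gamma_j = \sigma_j$ for every non-advanced eligible basket'' is precisely the extra structure that keeps the hypothesis strong enough to re-anchor the comparison across baskets at the next step.
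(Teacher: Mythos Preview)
Your proof is correct and follows essentially the same approach as the paper's: induction on $t$, with the inductive step split on whether the basket advanced at step $t+1$ coincides with the one advanced at step $t$, and the key chain $\sigma_{i^*}^{t+1} \leq \sigma_k^{t+1} = \gamma_k^t \leq \gamma_{i^*}^t$ in the switching case. The paper indexes the induction as ``assume through $t-1$, prove for $t$'' rather than your ``assume at $t$, prove for $t+1$'', but the argument is otherwise identical.
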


\begin{proof}[Lemma \ref{lemma:gamma-equality}]
	We argue by induction on the step of the algorithm.
	At $t=1$, we have for all eligible $i$ that $\sigma_i^t = \sigma_i^{(1)} = \gamma_i^{(1)} = \gamma_i^t$, and the descending procedure selects the maximum index.
	Assume the claim holds up to and including step $t-1$, when we advanced basket $i'$.
	Now let $i$ be the basket that is advanced in step $t$.
	Observe that, for any currently eligible basket $j \not\in \{i,i'\}$, $j$ was eligible in step $t-1$ by the nonincreasing property, and because $j$ was not advanced, we have $\gamma_j^t = \gamma_j^{t-1} = \sigma_j^{t-1} = \sigma_j^t$ as required.

	There are two cases: $i=i'$ and $i \neq i'$.
	In the first case, because the algorithm chose to advance $i$ on steps $t-1$ and $t$, we have $\gamma_i^{t-1} \geq \gamma_j^{t-1} = \gamma_j^t$ and $\sigma_i^t \geq \sigma_j^t = \gamma_j^t$ for all eligible $j \neq i$.
	This implies $\gamma_i^t = \min\{\gamma_i^{t-1}, \sigma_i^t\} \geq \gamma_j^t$.
	So $\gamma_i^t$ is maximum and we are done.
	In the second case, we have for all eligible $j \neq i$ that $\gamma_j^t \leq \sigma_j^t \leq \sigma_i^t = \gamma_i^t$, so $\gamma_i^t$ is maximum.
	Furthermore, in particular $\sigma_{i'}^t \leq \sigma_i^t \leq \gamma_{i'}^{t-1}$, so $\gamma_{i'}^t = \min\{\gamma_{i'}^{t-1}, \sigma_{i'}^t\} = \sigma_{i'}^t$.
	This completes the induction proof.
\end{proof}

We now use Lemma \ref{lemma:gamma-equality} to prove Lemma \ref{lemma:descending-nonexposed}.

\begin{proof}[Lemma \ref{lemma:descending-nonexposed}]
	To prove the lemma, suppose that $\I_i^{(\ell)} = 1$ and $\gamma_i^{(\ell)} < \kappa_i^{(\ell+1)}$.
	We must show that $\I_i^{(\ell+1)} = 1$.
	Let $t$ be the step on which the algorithm opened basket $i$'s box $\ell$.
	We will show that $i$ is eligible at time $t+1$ and that $\sigma_i^{t+1} > \sigma_j^{t+1}$ for all eligible $j \neq i$, implying that the descending procedure advances box $i$ on step $t+1$, so $\I_i^{(\ell+1)} = 1$ as required.

	First, because basket $i$ was advanced at step $t$, by definition of a descending procedure, $i$ is eligible at step $t+1$.
	Now consider any basket $j$ that is eligible at step $t+1$.
	Because eligible sets are nonincreasing in time, $j$ was eligible at time $t$.
	Because $j$ was not advanced at time $t$, Lemma \ref{lemma:gamma-equality} give us $\sigma_j^{t+1} = \gamma_j^{t+1} = \gamma_j^{t}$.
	And, because the descending procedure selected $i$ at time $t$, Lemma \ref{lemma:gamma-equality} also gives $\gamma_i^{t} \geq \gamma_j^{t}$.
	Finally, we supposed $\gamma_i^t < \kappa_i^{t+1} \leq \sigma_i^{t+1}$.
	This gives $\sigma_i^{t+1} > \sigma_j^{t+1}$ for all $j \neq i$, completing the proof.
\end{proof}

	\section{Omitted Proofs from Section \ref{sec:unordered}}\label{sec:unordered-appdx}

To show the approximation guarantee of Theorem \ref{theorem:rand-approx}, we first prove the upper-bound on the optimal welfare of Lemma \ref{lemma:opt-bound}.

\begin{proof}[Lemma \ref{lemma:opt-bound}]
	First, we decompose the welfare of the optimal algorithm into the contributions from $\O$ and $\hat{\O}$.
	Each edge that is inspected in the optimal algorithm is inspected in the order given by one of the two orientations.
	Given an edge $\{i,j\}$, let $W_{ij}^{\Opt}$ be the welfare contribution of the edge $\{i,j\}$ (i.e. total value claimed minus inspection cost paid) if box $(i,j)$ is inspected prior to $(j,i)$, else $W_{ij}^{\Opt} = 0$.
	Define $W_{ji}^{\Opt}$ analogously, so that the total welfare contribution of the edge is $W_{ij}^{\Opt} + W_{ji}^{\Opt}$.
	For each $(i,j) \in \O$, let $W_{ij}^{\O}$ (respectively $W_{ji}^{\hat{\O}}$) be the welfare contribution of basket $(i,j)$ (respectively, $(j,i)$) when running the optimal $\O$-oriented algorithm (respectively, $\hat{\O}$-oriented).
	Let $\Welf(\text{$\O$-desc.})$ be the expected welfare of the $\O$-Oriented Descending Procedure, and analogously for $\Welf(\text{$\hat{\O}$-desc.})$.
	\begin{align*}
		\Welf(\Opt)
		&=    \E \sum_{\{i,j\} \in E} \left(W_{ij}^{\Opt} + W_{ji}^{\Opt}\right) \\
		&=    \E \sum_{(i,j) \in \O} W_{ij}^{\Opt}  ~~+~~ \E\sum_{(j,i) \in \hat{\O}} W_{ji}^{\Opt}  \\
		&\leq \E \sum_{(i,j) \in \O} W_{ij}^{\O} ~~+~~ \E\sum_{(j,i) \in \hat{\O}} W_{ji}^{\hat{\O}}  \\
		&\leq 2 \Welf(\text{$\O$-desc.}) ~~+~ 2 \Welf(\text{$\hat{\O}$-desc.})    &  \text{Proposition \ref{prop:oriented-desc-half}}
	\end{align*}
\end{proof}

Finally, we can use this upper-bound to prove an approximation guarantee for the randomized algorithm.

\begin{proof}[Theorem \ref{theorem:rand-approx}]
	Since the orientation used by the randomized algorithm is drawn from the uniform distribution, the expected welfare is
	\begin{align*}
		\Welf(\Alg^\text{rand}) &= \E_{\O\sim \text{ unif}}\left[\Welf(\text{$\O$-desc.})\right]\\
		&= \E_{\O\sim \text{ unif}}\left[ \frac{1}{2}\Welf(\text{$\O$-desc.}) ~~+~  \frac{1}{2}\Welf(\text{$\hat{\O}$-desc.}) \right]\\
		&\geq \frac{1}{2}\E_{\O\sim \text{ unif}}\left[ \frac{1}{2} \Welf(\Opt)\right]&\text{Lemma \ref{lemma:opt-bound}}\\
		&= \frac{1}{4}\Welf(\Opt).
	\end{align*}
\end{proof}

\subsection{Best-of-Two-Worlds}

We also present a deterministic algorithm, which leverages similar tools to achieve a matching $1/4$-approximation to our randomized algorithm.

\begin{algdef}[Best-of-Two-Worlds Algorithm]\label{alg:best-of-two}
	Given an instance of Pandora's Matching Problem, pick an arbitrary orientation $\O$.
	Compute the expected welfare of the $\O$-Oriented Descending Procedure and the $\hat{\O}$-Oriented Descending Procedure.
	Run the one with larger expected welfare.
\end{algdef}

We leverage Lemma \ref{lemma:opt-bound} again to prove the approximation guarantee of Theorem \ref{theorem:best-of-two-approx}.

\begin{proof}[Theorem \ref{theorem:best-of-two-approx}]
	The welfare of the Best-of-Two-Worlds algorithm is
	\begin{align*}
		\max\left\{ \Welf(\text{$\O$-desc.}) ~,~ \Welf(\text{$\hat{\O}$-desc.}) \right\}  &\geq \frac{1}{2}\Welf(\text{$\O$-desc.}) ~~+~ \frac{1}{2}\Welf(\text{$\hat{\O}$-desc.})\\
		&\geq \frac{1}{4}\Welf(\Opt) &\text{Lemma \ref{lemma:opt-bound}}
	\end{align*}
\end{proof}

\subsection{No-Dessert Proofs} \label{subapp:no-dessert}

We detail the calculation of indices used in the proof of our ``no dessert'' theorem, Theorem \ref{thm:no-dessert}.

\begin{figure}
	\begin{center}
	{\footnotesize
		\parbox{0.25\linewidth}{
			\vspace{2em}
			observe $\sigma_{ij}^{(1)} = \tfrac{1}{\alpha}-1$
			\quad $\rightarrow$}
		\parbox{0.37\linewidth}{\underline{pay $c_{ij}=1$ to open first box} \\
		\\
		\\
		$\begin{cases}	\text{observe } v_{ij} = \tfrac{1}{\alpha^3},	& \text{w.prob. } \alpha  ~~ \rightarrow \\
						\sigma_{ij}^{(2)} = \tfrac{1}{\alpha^3}-\tfrac{1}{\alpha^2}+\tfrac{1}{\alpha}  \\
			\\
			\\
			\\
			\\
			\\
			\\
						\text{observe } v_{ij} = 0,						& \text{\hspace{-4ex}w.prob. } 1-\alpha ~~ \rightarrow \\
						\sigma_{ij}^{(2)} = \tfrac{-1}{\alpha^2}+\tfrac{1}{\alpha}
        \end{cases}$
		}
		\parbox{0.3\linewidth}{\underline{pay $c_{ji}=1-\alpha$ to open second box} \\
		\\
		$\begin{cases}	\text{observe } \bar{v} = \tfrac{1}{\alpha^3},	& \text{w.prob. } \alpha^2  \\
						\kappa_{ij}^{(2)} = \tfrac{1}{\alpha^3}-\tfrac{1}{\alpha^2}+\tfrac{1}{\alpha}, \\
						\kappa_{ij}^{(1)} = \tfrac{1}{\alpha}-1  \\
						\\
						\text{observe } \bar{v} = \tfrac{1}{\alpha},	& \text{w.prob. } 1-\alpha^2  \\
						\kappa_{ij}^{(2)} = \tfrac{1}{\alpha},  \\
						\kappa_{ij}^{(1)} = \tfrac{1}{\alpha}-1
		\end{cases}$ \\
		\\
		\\
		$\begin{cases}	\text{observe } \bar{v} = 0,										& \text{w.prob. } \alpha^2  \\
						\kappa_{ij}^{(2)} = \tfrac{-1}{\alpha^2}+\tfrac{1}{\alpha},  \\
						\kappa_{ij}^{(1)} = \tfrac{-1}{\alpha^2}+\tfrac{1}{\alpha}  \\
						\\
						\text{observe } \bar{v} = -\tfrac{1}{\alpha^3} + \tfrac{1}{\alpha}	& \text{w.prob. } 1-\alpha^2  \\
						\kappa_{ij}^{(2)} = \tfrac{-1}{\alpha^3}+\tfrac{1}{\alpha},  \\
						\kappa_{ij}^{(1)} = \tfrac{-1}{\alpha^3}+\tfrac{1}{\alpha}
		\end{cases}$
		}
	} %
	\end{center}
	\caption{The basket $(i,j)$. Here $\bar{v} = v_{ij} + v_{ji}$. After opening the first box, $\sigma_{ij}^{(2)}$ is determined. After opening the second box, $\bar{v}$ is determined and we can calculate $\kappa_{ij}^{(2)} = \min\{\bar{v}, \sigma_{ij}^{(2)}\}$ as well as $\kappa_{ij}^{(1)} = \min\{\bar{v}, \sigma_{ij}^{(2)}, \sigma_{ij}^{(1)}$.}
	\label{fig:ij-orient-calcs}
\end{figure}

\begin{lemma}
	For the Pandora basket $(i,j)$ resulting by orienting the edge so that $i$ is inspected prior to $j$, the initial Weitzman index and the capped values are the following:
	\begin{itemize}
		\item $\sigma_{ij}^{(1)} = \frac{1}{\alpha} - 1$.
		\item $\kappa_{ij}^{(1)} = \frac{1}{\alpha} - 1$ with probability $\alpha$, otherwise negative.
		\item $\E[(\kappa_{ij}^{(1)})^+] = 1 - \alpha$.
	\end{itemize}
\end{lemma}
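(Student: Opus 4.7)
The plan is to apply the backward induction of Definition~\ref{def:nested-index} to compute, in order, $\sigma_{ij}^{(2)}$, then $\kappa_{ij}^{(2)}$, then $\sigma_{ij}^{(1)}$, and finally $\kappa_{ij}^{(1)}$, after which $\E[(\kappa_{ij}^{(1)})^+]$ falls out by direct summation.

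First I would condition on the signal $v_{ij}$ revealed by opening the inner cost-$c_{ij}=1$ box and compute the inner Weitzman index $\sigma_{ij}^{(2)}$ solving $\E[(v_{ij}+v_{ji}-\sigma_{ij}^{(2)})^+\mid v_{ij}]=c_{ji}=1-\alpha$. In the high branch $v_{ij}=1/\alpha^3$ (probability $\alpha$), the conditional support of the total $v_{ij}+v_{ji}$ is $\{1/\alpha^3,\,1/\alpha\}$ with probabilities $\alpha^2$ and $1-\alpha^2$; I would guess $\sigma_{ij}^{(2)}$ lies strictly between these two atoms so that only the high atom contributes, reducing the defining equation to a linear one whose solution is $\sigma_{ij}^{(2)}=1/\alpha^3-1/\alpha^2+1/\alpha$, then verify the guess (which holds for all $\alpha\in(0,1)$). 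The low branch $v_{ij}=0$ is analogous, with conditional support $\{0,\,-1/\alpha^3+1/\alpha\}$, and yields $\sigma_{ij}^{(2)}=-1/\alpha^2+1/\alpha$. Taking the minimum against $v_{ij}+v_{ji}$ on each of the four (signal, total) branches then gives the distribution of $\kappa_{ij}^{(2)}$.

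Next I would compute $\sigma_{ij}^{(1)}$ from $\E[(\kappa_{ij}^{(2)}-\sigma_{ij}^{(1)})^+]=c_{ij}=1$ using the four atoms of $\kappa_{ij}^{(2)}$ with respective probabilities $\alpha^3,\ \alpha-\alpha^3,\ \alpha^2-\alpha^3,\ (1-\alpha)(1-\alpha^2)$. Guessing $\sigma_{ij}^{(1)}=1/\alpha-1$, which for $\alpha<1$ lies strictly below the two high-branch values $1/\alpha^3-1/\alpha^2+1/\alpha$ and $1/\alpha$ and strictly above the two negative low-branch values, collapses the expectation to $\alpha^3\bigl(1/\alpha^3-1/\alpha^2+1\bigr)+(\alpha-\alpha^3)\cdot 1$, which telescopes to exactly $1$, confirming the guess. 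Finally, $\kappa_{ij}^{(1)}=\min\{\sigma_{ij}^{(1)},\kappa_{ij}^{(2)}\}$ is clamped at $1/\alpha-1$ on both high branches (total probability $\alpha$) and equals the (negative) $\kappa_{ij}^{(2)}$ on both low branches, giving the claimed distribution and $\E[(\kappa_{ij}^{(1)})^+]=\alpha\cdot(1/\alpha-1)=1-\alpha$.

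The calculations are routine, so the only real subtlety is guessing the correct interval for each $\sigma$ so that the $(\cdot)^+$ simplifies to a linear function and the defining equation has a clean closed form. The parameters of Figure~\ref{fig:no-dessert} are engineered so that $\sigma_{ij}^{(1)}=1/\alpha-1$ lands precisely between the positive and negative portions of the $\kappa_{ij}^{(2)}$ distribution, which is the crux of the ``no dessert'' construction and is what makes the guess on the final index the natural one to try.
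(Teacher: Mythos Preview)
Your proposal is correct and follows essentially the same backward-induction route as the paper: compute $\sigma_{ij}^{(2)}$ on each branch of $v_{ij}$, read off $\kappa_{ij}^{(2)}$ on the four atoms, solve for $\sigma_{ij}^{(1)}$, and then take minima to get $\kappa_{ij}^{(1)}$. The only cosmetic difference is that the paper verifies each claimed $\sigma$ by plugging it into the defining expectation, whereas you first locate the correct interval so the $(\cdot)^+$ linearizes and then solve; the resulting numbers and case checks are identical.
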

\begin{proof}
	We utilize Figure \ref{fig:ij-orient-calcs}.
	We just need to prove that each $\sigma_{ij}^{(k)}$ and $\kappa_{ij}^{(k)}$ given in Figure \ref{fig:ij-orient-calcs} is correct.
	They are computed by backward induction: $\sigma_{ij}^{(2)}$, then $\kappa_{ij}^{(2)}$, then $\sigma_{ij}^{(1)}$, and finally $\kappa_{ij}^{(1)}$.
	First, we show that in the case $v_{ij} = \tfrac{1}{\alpha^3}$, it is correct that $\sigma_{ij}^{(2)} = \tfrac{1}{\alpha^3}-\tfrac{1}{\alpha^2}+\tfrac{1}{\alpha}$.
	Specializing Definition \ref{def:nested-index} to this setting:
	\begin{align*}
		\E \left[ (\bar{v} - \sigma_{ij}^{(2)})^+ ~\middle|~ v_{ij} = \tfrac{1}{\alpha^3} \right]
		&= \E \left[ (\tfrac{1}{\alpha^3} + v_{ji} - (\tfrac{1}{\alpha^3}-\tfrac{1}{\alpha^2}+\tfrac{1}{\alpha}))^+ \right]  \\
		&= \E \left[ (v_{ji} + \tfrac{1}{\alpha^2} - \tfrac{1}{\alpha})^+ \right]  \\
		&= \alpha^2 \left( \tfrac{1}{\alpha^2} - \tfrac{1}{\alpha} \right) + (1 - \alpha^2)(0)  \\
		&= 1 - \alpha  \\
		&= c_{ji},
	\end{align*}
	as required.
	For the case $v_{ij} = 0$, we claim $\sigma_{ij}^{(2)} = \tfrac{-1}{\alpha^2}+\tfrac{1}{\alpha}$:
	\begin{align*}
		\E \left[ (\bar{v} - \sigma_{ij}^{(2)})^+ ~\middle|~ v_{ij} = 0 \right]
		&= \E \left[ (v_{ji} - (\tfrac{-1}{\alpha^2}+\tfrac{1}{\alpha}))^+ \right]  \\
		&= \alpha^2 \left( 0 + \tfrac{1}{\alpha^2} - \tfrac{1}{\alpha} \right) + (1-\alpha^2)(0)  \\
		&= 1 - \alpha  \\
		&= c_{ji} .
	\end{align*}
	Now we can define $\kappa_{ij}^{(2)} = \min\{\bar{v}, \sigma_{ij}^{(2)}\}$ and check that Figure \ref{fig:ij-orient-calcs} is correct in all four states.
	Then, we just need to check that $\sigma_{ij}^{(1)} = \tfrac{1}{\alpha} - 1$ as claimed.
	With Definition \ref{def:nested-index}:
	\begin{align*}
		\E \left[ (\kappa_{ij}^{(2)} - \sigma_{ij}^{(1)})^+ \right]
		&= \E \left[ (\kappa_{ij}^{(2)} - \tfrac{1}{\alpha} + 1)^+ \right]  \\
		&= \alpha^3 \left(\tfrac{1}{\alpha^3} - \tfrac{1}{\alpha^2} + \tfrac{1}{\alpha} - \tfrac{1}{\alpha} + 1 \right)  \\
		&\quad\quad + \alpha(1-\alpha^2) \left(\tfrac{1}{\alpha} - \tfrac{1}{\alpha} + 1\right)  \\
		&\quad\quad + (1-\alpha)\alpha^2 (0) + (1-\alpha)(1-\alpha^2)(0)  \\
		&= \alpha^3 \left(\tfrac{1}{\alpha^3} - \tfrac{1}{\alpha^2} + 1 \right) + \alpha(1-\alpha^2) (1)  \\
		&= 1 - \alpha + \alpha^3 + \alpha - \alpha^3  \\
		&= 1  \\
		&= c_{ij} .
	\end{align*}
	We immediately calculate $\kappa_{ij}^{(1)} := \min\{\bar{v}, \sigma_{ij}^{(2)}, \sigma_{ij}^{(1)} \}$ in each state of the world, as given in Figure \ref{fig:ij-orient-calcs}.
	Observing that $\kappa_{ij}^{(1)} = \tfrac{1}{\alpha} - 1$ with probability $\alpha$ and is negative otherwise, we obtain $\E[ (\kappa_{ij}^{(1)})^+ ] = 1-\alpha$.
\end{proof}

\begin{figure}
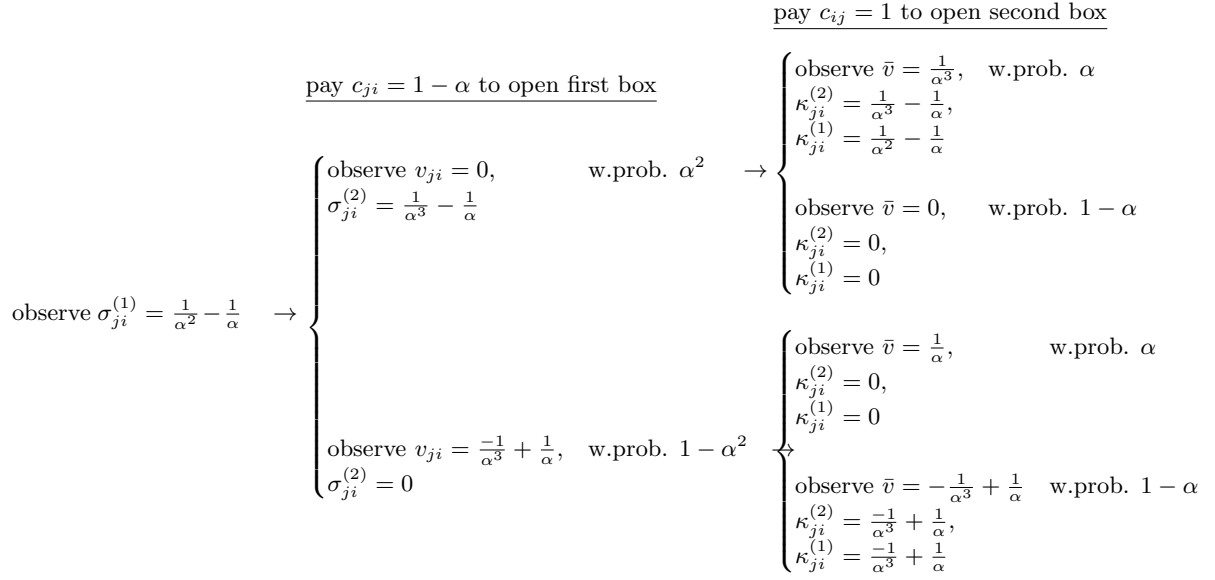

	\begin{center}
	{\footnotesize
		\parbox{0.25\linewidth}{
			\vspace{2em}
			observe $\sigma_{ji}^{(1)} = \tfrac{1}{\alpha^2}-\tfrac{1}{\alpha}$
			\quad $\rightarrow$}
		\parbox{0.37\linewidth}{\underline{pay $c_{ji}=1-\alpha$ to open first box} \\
		\\
		\\
		$\begin{cases}	\text{observe } \\ v_{ji} = 0,	& \text{w.prob. } \alpha^2  \hspace{3ex} \rightarrow \\
						\sigma_{ji}^{(2)} = \tfrac{1}{\alpha^3}-\tfrac{1}{\alpha}  \\
			\\
			\\
			\\
			\\
			\\
			\\
						\text{observe } \\ v_{ji} = \tfrac{-1}{\alpha^3} + \tfrac{1}{\alpha},	& \text{w.prob. } 1-\alpha^2 ~~ \rightarrow \\
						\sigma_{ji}^{(2)} = 0
        \end{cases}$
		}
		\parbox{0.3\linewidth}{\underline{pay $c_{ij}=1$ to open second box} \\
		\\
		$\begin{cases}	\text{observe } \bar{v} = \tfrac{1}{\alpha^3},	& \text{w.prob. } \alpha  \\
						\kappa_{ji}^{(2)} = \tfrac{1}{\alpha^3}-\tfrac{1}{\alpha}, \\
						\kappa_{ji}^{(1)} = \tfrac{1}{\alpha^2}-\tfrac{1}{\alpha}  \\
						\\
						\text{observe } \bar{v} = 0,					& \text{w.prob. } 1-\alpha  \\
						\kappa_{ji}^{(2)} = 0,  \\
						\kappa_{ji}^{(1)} = 0
		\end{cases}$ \\
		\\
		\\
		$\begin{cases}	\text{observe } \bar{v} = \tfrac{1}{\alpha},						& \text{w.prob. } \alpha  \\
						\kappa_{ji}^{(2)} = 0,  \\
						\kappa_{ji}^{(1)} = 0  \\
						\\
						\text{observe } \bar{v} = -\tfrac{1}{\alpha^3} + \tfrac{1}{\alpha}	& \text{w.prob. } 1-\alpha  \\
						\kappa_{ji}^{(2)} = \tfrac{-1}{\alpha^3}+\tfrac{1}{\alpha},  \\
						\kappa_{ji}^{(1)} = \tfrac{-1}{\alpha^3}+\tfrac{1}{\alpha}
		\end{cases}$
		}
	} %
	\end{center}
	\caption{The basket $(j,i)$. Again, $\bar{v} = v_{ij} + v_{ji}$.}
	\label{fig:ji-orient-calcs}
\end{figure}

\begin{lemma}
	For the Pandora basket $(j,i)$ resulting by orienting the edge so that $j$ is inspected prior to $i$, the initial Weitzman indices and expected capped values are the following:
	\begin{itemize}
		\item $\sigma_{ji}^{(1)} = \frac{1}{\alpha^2} - \frac{1}{\alpha}$.
		\item $\kappa_{ji}^{(1)} = \frac{1}{\alpha^2} - \frac{1}{\alpha}$ with probability $\alpha^3$, otherwise nonpositive.
		\item $\E[(\kappa_{ji}^{(1)})^+] = \alpha - \alpha^2$.
	\end{itemize}
\end{lemma}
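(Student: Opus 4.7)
The plan is to mirror exactly the backward-induction calculation used in the preceding $(i,j)$-orientation lemma. I will show Figure \ref{fig:ji-orient-calcs} is correct by computing the four quantities $\sigma_{ji}^{(2)}$, $\kappa_{ji}^{(2)}$, $\sigma_{ji}^{(1)}$, $\kappa_{ji}^{(1)}$ in that order (applying Definition \ref{def:nested-index} at each stage), then read off $\sigma_{ji}^{(1)}$, the distribution of $\kappa_{ji}^{(1)}$, and $\E[(\kappa_{ji}^{(1)})^+]$ from the resulting table. Throughout, I will condition on $v_{ji}$ at the inner stage and take an expectation over $v_{ij}$, using that $v_{ij}$ and $v_{ji}$ are independent with $\Pr[v_{ij}=1/\alpha^3]=\alpha$ and $\Pr[v_{ji}=0]=\alpha^2$.

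First, I compute $\sigma_{ji}^{(2)}$. Conditioned on $v_{ji}=0$, the total value $\bar v=v_{ij}$ is $1/\alpha^3$ with probability $\alpha$ and $0$ otherwise, so $\E[(\bar v - \sigma_{ji}^{(2)})^+\mid v_{ji}=0]$ with $\sigma_{ji}^{(2)}=1/\alpha^3-1/\alpha$ equals $\alpha(1/\alpha)=1=c_{ij}$, as required. Conditioned on $v_{ji}=-1/\alpha^3+1/\alpha$, the total is $1/\alpha$ with probability $\alpha$ and the negative number $-1/\alpha^3+1/\alpha$ otherwise; with $\sigma_{ji}^{(2)}=0$ the positive part vanishes in the second case and equals $1/\alpha$ in the first, giving $\alpha(1/\alpha)=1=c_{ij}$. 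Taking $\kappa_{ji}^{(2)}=\min\{\bar v,\sigma_{ji}^{(2)}\}$ in each of the four joint realizations reproduces the four entries of Figure \ref{fig:ji-orient-calcs}.

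Next, $\sigma_{ji}^{(1)}$ solves $\E[(\kappa_{ji}^{(2)}-\sigma_{ji}^{(1)})^+]=c_{ji}=1-\alpha$. With $\sigma_{ji}^{(1)}=1/\alpha^2-1/\alpha$, only the state $(v_{ji}=0,v_{ij}=1/\alpha^3)$ contributes a positive part: it has probability $\alpha^2\cdot\alpha=\alpha^3$ and the positive part is $(1/\alpha^3-1/\alpha)-(1/\alpha^2-1/\alpha)=1/\alpha^3-1/\alpha^2$, so the expectation equals $\alpha^3(1/\alpha^3-1/\alpha^2)=1-\alpha$, matching $c_{ji}$. The other three states give zero contribution, which I will verify by direct substitution (the two cases where $\kappa_{ji}^{(2)}=0$ give $(-(1/\alpha^2-1/\alpha))^+=0$, and the last case is even more negative). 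Finally $\kappa_{ji}^{(1)}=\min\{\kappa_{ji}^{(2)},\sigma_{ji}^{(1)}\}$ gives $1/\alpha^2-1/\alpha$ in the first state (of probability $\alpha^3$) and a nonpositive number in the other three states, so $\E[(\kappa_{ji}^{(1)})^+]=\alpha^3(1/\alpha^2-1/\alpha)=\alpha-\alpha^2$.

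The calculations are routine; the only mild pitfall to watch for is sign-tracking when $\alpha$ is small, since $-1/\alpha^3+1/\alpha$ is large and negative while $1/\alpha^2-1/\alpha$ is positive, and several positive-part terms collapse to zero precisely because of this. The rest is bookkeeping across the four joint outcomes in Figure \ref{fig:ji-orient-calcs}.
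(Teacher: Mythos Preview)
Your proposal is correct and follows essentially the same approach as the paper: backward induction via Definition~\ref{def:nested-index}, computing $\sigma_{ji}^{(2)}$ in each branch of $v_{ji}$, then $\kappa_{ji}^{(2)}$, then $\sigma_{ji}^{(1)}$, then $\kappa_{ji}^{(1)}$, verifying each against the costs $c_{ij}=1$ and $c_{ji}=1-\alpha$. The calculations match the paper's line by line.
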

\begin{proof}
	We first show that, conditioned on $v_{ji} = 0$, we have $\sigma_{ji}^{(2)} = \tfrac{1}{\alpha^3}-\tfrac{1}{\alpha}$.
	\begin{align*}
		\E \left[ (\bar{v} - \sigma_{ji}^{(2)})^+ ~\middle|~ v_{ji}=0 \right]
		&= \E \left[ (v_{ij} - \tfrac{1}{\alpha^3} + \tfrac{1}{\alpha})^+ \right]  \\
		&= \alpha \left( \tfrac{1}{\alpha^3} - \tfrac{1}{\alpha^3} + \tfrac{1}{\alpha} \right)  \\
		&= 1  \\
		&= c_{ij} .
	\end{align*}
	Next, conditioned on $v_{ji} = \tfrac{-1}{\alpha^3} + \tfrac{1}{\alpha}$, we have $\sigma_{ji}^{(2)} = 0$.
	\begin{align*}
		\E \left[ (\bar{v} - \sigma_{ji}^{(2)})^+ ~\middle|~ v_{ji} = \tfrac{-1}{\alpha^3} + \tfrac{1}{\alpha} \right]
		&= \E \left[ (v_{ij} - \tfrac{1}{\alpha^3} + \tfrac{1}{\alpha})^+ \right]  \\
		&= \alpha \left( \tfrac{1}{\alpha^3} - \tfrac{1}{\alpha^3} + \tfrac{1}{\alpha} \right)  \\
		&= 1  \\
		&= c_{ij} .
	\end{align*}
	Now, we check that Figure \ref{fig:ji-orient-calcs} correctly computes $\kappa_{ji}^{(2)} := \min\{\bar{v}, \sigma_{ji}^{(2)}\}$ in each state of the world.
	Then, to show $\sigma_{ji}^{(1)} = \tfrac{1}{\alpha^2} - \tfrac{1}{\alpha}$:
	\begin{align*}
		\E \left[ (\kappa_{ji}^{(2)} - \sigma_{ji}^{(1)})^+ \right]
		&= \E \left[ (\kappa_{ji}^{(2)} - \tfrac{1}{\alpha^2} + \tfrac{1}{\alpha})^+ \right]  \\
		&= \alpha^3 \left(\tfrac{1}{\alpha^3} - \tfrac{1}{\alpha} - \tfrac{1}{\alpha^2} + \tfrac{1}{\alpha} \right)  \\
		&= \alpha^3 \left(\tfrac{1}{\alpha^3} - \tfrac{1}{\alpha^2} \right)  \\
		&= 1 - \alpha  \\
		&= c_{ji} .
	\end{align*}
	We now check that Figure \ref{fig:ji-orient-calcs} correctly computes $\kappa_{ji}^{(1)} := \min\{\bar{v}, \sigma_{ji}^{(2)}, \sigma_{ji}^{(1)}\}$ in each state of the world.
	We obtain $\kappa_{ji}^{(1)} = \tfrac{1}{\alpha^2} - \tfrac{1}{\alpha}$ with probability $\alpha^3$, and nonpositive otherwise, so $\E[ (\kappa_{ji}^{(1)})^+ ] = \alpha - \alpha^2$.
\end{proof}

\subsection{No Dessert Alternate Proof} \label{subapp:no-dessert-alternate}

For a reader who would prefer to double-check without relying on the Nested Box machinery, we give an alternate proof of the ``No-Dessert'' Theorem: no deterministic edge-based fixed-orientation algorithm has nontrivial welfare.

\begin{proof}[Theorem \ref{thm:no-dessert}]
	The proof relies on a particular edge, shown in Figure \ref{fig:no-dessert}.
	Every edge-based fixed orientation algorithm either orients that edge in one direction or the other.
	We will show that depending on the outside option available, either choice can be a fatal mistake.

	\textbf{Instance 1.}
	We consider a graph consisting of just the edge in Figure \ref{fig:no-dessert}.

	There are three nontrivial algorithms to consider:
	\begin{enumerate}
		\item First inspect $(i,j)$, then if $v_{ij} > 0$, continue to inspecting $(j,i)$, then match the edge if $v_{ij} + v_{ji} > 0$.
			Denote its welfare by $\Welf(i,j)$.
		\item First inspect $(j,i)$, then if $v_{ji} = 0$, continue to $(i,j)$, then match the edge if $v_{ij} + v_{ji} > 0$.
			Denote its welfare by $\Welf(j,i)$.
		\item Inspect both endpoints, then match the edge if $v_{ij} + v_{ji} > 0$.
			Denote its welfare by $\Welf(\{i,j\})$.
			Observe this strategy is available to an algorithm that has oriented the edge in either direction.
	\end{enumerate}

	We can calculate the $\Welf(i,j)$ as follows.
	It matches the edge whenever $v_{ij} + v_{ji} > 0$.
	The possible positive values of $v_{ij} + v_{ji}$ are $\frac{1}{\alpha^3}$, which occurs with probability $\alpha^3$, and $\frac{1}{\alpha}$, which occurs with probability $\alpha(1-\alpha^2)$.
	The probability of paying $c_{ij}$ is $1$, and the probability of paying $c_{ji}$ is $\alpha$.
	Overall,
	\begin{align*}
		\Welf(i,j)
		&= \alpha^3 \left(\frac{1}{\alpha^3}\right) + \alpha(1-\alpha^2)\left(\frac{1}{\alpha}\right) - 1 - \alpha(1-\alpha)  \\
		&= 1 + 1 - \alpha^2 - 1 - \alpha + \alpha^2  \\
		&= 1 - \alpha .
	\end{align*}
	We calculate $\Welf(j,i)$ in the same way.
	The total value $\frac{1}{\alpha^3}$ is obtained with probability $\alpha^3$.
	In all other cases, the edge is not matched.
	The probability of paying $c_{ij}$ is $\alpha^2$ and the probability of paying $c_{ji}$ is $1$.
	\begin{align*}
		\Welf(j,i)
		&= \alpha^3 \left(\frac{1}{\alpha^3}\right) - \alpha^2 (1) - (1-\alpha)  \\
		&= 1 - \alpha^2 - 1 + \alpha  \\
		&= \alpha - \alpha^2 .
	\end{align*}
	Finally, for $\Welf(\{i,j\})$, whenever $v_{ij} + v_{ji} > 0$ the edge is matched, and both costs are always paid.
	\begin{align*}
		\Welf(\{i,j\})
		&= \alpha^3 \left(\frac{1}{\alpha^3}\right) + \alpha(1-\alpha^2)\left(\frac{1}{\alpha}\right) - 1 - (1-\alpha)  \\
		&= 1 + 1 - \alpha^2 - 1 - 1 + \alpha  \\
		&= \alpha - \alpha^2 .
	\end{align*}
	Because $\inf_{\alpha > 0} \frac{\Welf(j,i)}{\Welf(i,j)} = \inf_{\alpha > 0} \frac{\Welf(\{i,j\})}{\Welf(i,j)} = 0$, we conclude that \textbf{no edge-based fixed-orientation algorithm that orients $\{i,j\}$ as $(j,i)$ has a nonzero approximation guarantee on Instance 1.}

	\vskip1em   %
	\textbf{Instance 2.}
	We will consider a star graph with $i$ at the center.
	There is a special edge $\{i,k\}$ with $v_{ik} = \frac{1}{\alpha}$, $c_{ik} = 0$, $v_{ki} = 0$, $c_{ki} = 0$.
	In other words, $i$ has an ``outside option'' $\frac{1}{\alpha}$ it can match to at any time.
	Every other edge in the graph $\{i,j_1\},\dots,\{i,j_m\}$ is an independent copy of the edge $\{i,j\}$ in Figure \ref{fig:no-dessert}.
	The number of copies $m$ will be chosen later.

	\paragraph{Welfare of $(j,i)$ orientation}

	Note that one of the copies improves on the outside option if and only if $v_{ij} = \frac{1}{\alpha^3}$ and $v_{ji} = 0$ (a ``success'').
	To lower-bound the welfare achievable on this instance, we consider the algorithm that orients all copies as $(j,i)$, then iterates through until it finds a success.
	On each edge, it inspects $(j,i)$, then continues to $(i,j)$ if $v_{ji} = 0$, then stops and matches if $v_{ij} = \frac{1}{\alpha^3}$.
	If no edge is a success, the algorithm takes the outside option $\frac{1}{\alpha}$.

	Define the algorithm's initial welfare as $\frac{1}{\alpha}$, and for each edge, we can calculate the expected \emph{gain} from an edge as the expected increase in the welfare of the algorithm after processing this edge.
	The algorithm will increase its matched value by $\frac{1}{\alpha^3} - \frac{1}{\alpha}$ on success, will lose the cost $c_{ji} = 1-\alpha$ with probability $1$, and will lose the cost $c_{ij} = 1$ with probability $\alpha^2$.
	So for each copy, we have
	\begin{align*}
		\E [\text{gain}] &= \alpha^3 \left(\frac{1}{\alpha^3} - \frac{1}{\alpha}\right) - (1-\alpha) - \alpha^2(1)  \\
			&= 1 - \alpha^2 - 1 + \alpha - \alpha^2  \\
			&= \alpha - 2\alpha^2 .
	\end{align*}
	The algorithm obtains this expected gain every time it inspects a copy, and it inspects copies until it runs out or finds a success.
	If the number of copies is $m = \frac{1}{\alpha^{3-\epsilon}}$ for any $\epsilon > 0$, then the probability of finding a success is $o(1)$, which means the algorithm inspects all of the copies with high probability.
	So its total expected gain is $(1 - o(1)) m (\alpha - 2 \alpha^2) = \Omega\left(\frac{1}{\alpha^{2-\epsilon}}\right)$ for $\alpha \to 0$.

	\paragraph{Welfare of $(i,j)$ orientation}
	On the other hand, consider any edge-based fixed-orientation algorithm that orients $\{i,j\}$ as $(i,j)$.
	An analogous calculation shows that its expected gain is negative:\footnote{The intuition is that both endpoints must ``succeed'' for this edge to exceed the outside option, so either endpoint has a veto. It is cheaper to inspect $(j,i)$ first because it has a much higher chance of failure.}
	\begin{align*}
		\text{gain}(i,j) &= \alpha^3 \left(\frac{1}{\alpha^3} - \frac{1}{\alpha}\right) - 1 - \alpha(1-\alpha)  \\
			&= 1 - \alpha^2 - 1 - \alpha + \alpha^2  \\
			&= -\alpha .
	\end{align*}
	Therefore, the optimal such algorithm does no inspections and takes the outside option, yielding welfare $\frac{1}{\alpha}$.
	Because the welfare ratio tends to zero as $\alpha \to 0$, we conclude \textbf{no edge-based fixed-orientation algorithm that orients $\{i,j\}$ as $(i,j)$ has a nonzero approximation guarantee on Instance 2.}
\end{proof}

\end{document}